\documentclass[journal]{IEEEtran}
\usepackage[utf8]{inputenc}
\usepackage{graphicx}
\usepackage{color}
\usepackage{hyperref}
\usepackage{placeins}
\usepackage{float}
\usepackage{tabularx,colortbl}
\usepackage{amssymb}
\usepackage{amsthm}
\usepackage{cite}
\usepackage{amsmath}
\usepackage{euscript} 
\usepackage{cleveref}
\usepackage{multicol}       
\usepackage{multirow}       
\usepackage{array}          
\usepackage{makecell}
\usepackage{colortbl}
\usepackage{pifont}       
\usepackage{bbding}       
\usepackage{fontawesome}  
\usepackage{cases,subeqnarray}
\usepackage{bm,multirow,bigstrut}
\usepackage{textcomp}
\usepackage{latexsym,bm}
\usepackage{booktabs}
\usepackage{xcolor}
\usepackage{mathtools}
\usepackage{dsfont}
\usepackage{extarrows}
\usepackage{amsthm}
\usepackage[caption=false,font=footnotesize]{subfig}
\usepackage{makecell}
\definecolor{lightblue}{rgb}{0.93,0.95,1.0}
\usepackage[ruled,vlined,linesnumbered,noend]{algorithm2e}
\usepackage{mathrsfs}

\usepackage{booktabs} 
\usepackage{booktabs} 
\usepackage{longtable}
\usepackage{threeparttable}

\theoremstyle{plain}

\theoremstyle{plain}

\theoremstyle{definition}

\newtheorem{proposition}{\bf Proposition}
\newtheorem{definition}{Definition}
\newtheorem{assumption}{Assumption}

\newtheorem{theorem}{Theorem}
\newtheorem{lemma}{Lemma}

\newcommand{\E}{\mathbb{E}}
\newcommand{\btheta}{\boldsymbol{\theta}}
\newcommand{\bs}{\mathbf{s}}

\newcommand{\rin}{\bar{r}_{\text{in}}}
\newcommand{\rev}[1]{\textcolor{black}{#1}}

\DeclareRobustCommand{\scopeedit}[1]{#1}
\providecommand{\revise}[1]{\rev{#1}}
\providecommand{\rcom}[1]{\revise{#1}}
\newenvironment{revblock}{\begingroup\color{black}}{\endgroup}

\IEEEoverridecommandlockouts

\begin{document}
\title{Multi-Agent Reinforcement Learning for Movable Antenna-aided Cell-Free Massive MIMO Systems}
\author{Bokai Xu, Jiayi Zhang,~\IEEEmembership{Senior Member,~IEEE}, Shuaifei Chen,~\IEEEmembership{Member,~IEEE}, Ziheng Liu, Huahua Xiao,\\ Derrick Wing Kwan Ng,~\IEEEmembership{Fellow,~IEEE}, and Bo Ai,~\IEEEmembership{Fellow,~IEEE}\vspace{-1. em}
%
%
\thanks{B. Xu, J. Zhang, Z. Liu, and B. Ai are with the State Key Laboratory of Advanced Rail Autonomous Operation, and also with School of Electronic and Information Engineering, Beijing Jiaotong University, Beijing 100044, P. R. China; S. Chen is with the School of Communications and Information Engineering, Xi'an University of Posts and Telecommunications, Xi’an 710121, China. He is also with Purple Mountain Laboratories, Nanjing, 211111 China; Huahua Xiao is with ZTE Corporation, State Key Laboratory of Mobile Network and Mobile Multimedia Technology; D. W. K. Ng is with University of New South Wales, Sydney, NSW 2052, Australia. (Corresponding authors: Jiayi Zhang; Shuaifei Chen.)}}
\IEEEaftertitletext{\vspace{-2\baselineskip}}
\maketitle

\begin{abstract}
The inherent non-convex minimum-separation constraints introduced by movable antennas present a formidable challenge to the joint optimization of antenna positions and transmission strategies, rendering conventional methods computationally infeasible, particularly in large-scale cell-free massive multiple-input multiple-output (MIMO). In this work, we propose the graph-based learning individual intrinsic reward heterogeneous-agent proximal policy optimization (GLIIR-HAPPO) algorithm, a novel heterogeneous multi-agent reinforcement learning (MARL) framework that fundamentally overcomes this impasse by systematically decomposing the original coupled optimization into coordinated subproblems. 
To ensure tractability, we embed the non-convex geometric constraints into a penalty-augmented reward structure and develop a specialized geometric solver that enables the positioning agents to efficiently navigate the high-dimensional action space. 
Specifically, we propose an architecture featuring a dynamic-interaction graph critic for adaptive cross-role coordination, together with role-conditioned federated distillation that synchronizes same-role policies through compact actor-output statistics.
Beyond architectural design, we establish a rigorous theoretical analysis that derives monotonic performance improvement bounds and establishes convergence guarantees for the proposed bi-level optimization.
Numerical simulations demonstrate that our framework yields significant sum-rate improvements over state-of-the-art MARL schemes. Moreover, the performance of our advanced architecture closely approaches its fully centralized counterpart, while drastically reducing communication overhead.
\end{abstract}
\vspace{-0.2cm}
\begin{IEEEkeywords}
Movable antenna (MA), cell-free massive MIMO, multi-agent reinforcement learning (MARL).
\end{IEEEkeywords}

\IEEEpeerreviewmaketitle
\vspace{-0.45cm}
\section{Introduction}\label{sec:introduction}
Recently, the proliferation of wireless applications has led to an unprecedented surge in demand for higher throughput, improved spectral efficiency, and enhanced network capacity for future wireless communication systems beyond fifth-generation (5G) \cite{10155550,10578301, wang2026sparse}. Various emerging bandwidth-intensive applications such as high-quality video streaming and immersive virtual/augmented reality (VR/AR) have significantly contributed to this escalating demand. Recognizing this trend, researchers and engineers are directing their efforts towards the development of sixth-generation (6G) wireless communication systems and beyond \cite{10884927, xu2023resource}. In particular, multiple-input multiple-output (MIMO) communications have emerged as a foundational enabling technology for achieving these performance targets. However, in traditional cellular MIMO systems, users located at the cell edge often face severely degraded communication quality due to severe path loss, strong inter-cell interference, and limited spatial diversity.

The concept of cell-free (CF) networks, which eliminates the limitations imposed by traditional cell boundaries by coordinating a large number of geographically distributed access points (APs), has gained significant traction as a highly promising solution for improving network-wide throughput and enhancing a more uniform user experience \cite{[162]}. However, while CF MIMO technology can significantly improve both wireless channel capacity and system spectral efficiency (SE), it often entails non-negligible implementation complexity and increased energy consumption due to the need for numerous RF chains, dense AP deployments, and substantial fronthaul signaling/processing overhead. Furthermore, the desirable channel-hardening phenomenon manifests only under specific propagation and deployment regimes, such as scenarios with a relatively small path-loss exponent. It is also worth noting that favorable propagation characteristics are exhibited only among users that are well separated in space \cite{8379438}, which can limit the achievable spatial multiplexing gains in practical networks with heterogeneous and potentially clustered user distributions.

To address the limitations in CF massive MIMO systems, researchers have put forth various promising technologies. One such approach is antenna selection (AS), which reduces energy consumption by activating only those antennas that contribute the most to system throughput.
However, AS is inherently limited to a finite, discrete set of pre-deployed antenna locations. To unlock the full potential of the spatial domain, the concept of flexible-position antenna systems has emerged. A prominent realization of this concept is the fluid antenna system (FAS), which enables the flexible adjustment of radiating elements. 
The fluid antenna multiple access (FAMA) framework, for instance, empowers user-equipment devices (UEs) to exploit naturally occurring interference nulls by dynamically varying the antenna position, thereby establishing additional spatial degrees of freedom. In practice, this capability can be realized via liquid-based reconfigurable radiators or by switching among densely packed antenna ports \cite{9264694, 10146286}.

Parallel to the advancements in FAS, movable antenna (MA) technology has recently attracted considerable attention as another practical and robust implementation for realizing flexible-position MIMO \cite{10286328, shao20246dma}. While FAS effectively leverages spatial diversity through reconfigurable materials or port switching, MA systems typically utilize localized electromechanical drivers to continuously translate solid radiating elements within a prescribed region. In this work, we adopt the MA architecture for CF APs due to several specific advantages that align with our system design: First, MA can effectively leverage the available spatial DoF to its fullest potential by dynamically adjusting antenna positions, thereby improving SE. Second, MA can enhance channel gains by exploiting only a reduced number of repositionable antennas, leading to lower hardware requirements and enhanced energy efficiency (EE). Finally, MA can be relocated to spatial locations where multi-user channels become more orthogonal, facilitating the establishment of favorable propagation conditions.

\begin{revblock}
To unlock the full potential of wireless systems, optimizing transmission schemes is of utmost importance. In particular, MA-aided MIMO systems pose a significant challenge due to their highly coupled and geometric-dependent constraints.
Such tightly intertwined spatial and signal-domain relationships further complicate the underlying optimization task. For instance, in \cite{10243545}, the authors proposed an efficient alternating-optimization algorithm that iteratively optimizes the transmit covariance matrix and the positions of each transmit/receive MA, yielding a locally optimal but computationally intensive solution. Compared to traditional fixed-position antenna (FPA) systems, MA systems achieve significant gains in MIMO channel capacity due to their spatial flexibility. Another study \cite{10354003} employed a gradient-descent method to jointly optimize the positions of MAs for all users. However, from an optimization standpoint, MA-aided systems introduce inherently non-convex minimum-distance (collision-avoidance) constraints and may also involve non-convex objective functions. Indeed, these factors present challenges in solving the underlying optimization problems, especially at scale.
Existing research addresses these challenges through various approaches, such as successive convex approximation (SCA) \cite{10243545} or compressed sensing (CS)-based techniques \cite{10458417}. Nevertheless, these methods typically rely on surrogate convexifications of the objective and/or the non-convex distance constraints, or they introduce additional non-convex constraints (e.g., $l_0$-type sparsity constraints), which can incur higher computational burden and may lead to performance degradation.

Additionally, existing research on optimizing MA designs has primarily focused on centralized optimization algorithms. These approaches require the sharing of CSI and other information among all BSs, typically through a centralized design at the central processing unit (CPU). However, this centralized approach incurs substantial fronthaul signaling and data-exchange overhead \cite{peng2024}.
In real-world deployments, BSs cannot engage in frequent message exchanges due to stringent fronthaul bandwidth/latency constraints and limited link capacity, making iterative centralized procedures less favorable. 
This raises a crucial question: How can we effectively optimize the transmission strategies of multiple BSs with minimal or even no message exchanges \cite{liu2024survey}?
In the realm of distributed optimization, multi-agent reinforcement learning (MARL) has emerged as a highly promising framework with rapidly growing adoption \cite{lowe2017multi}. In practice, MARL naturally complements the distributed nature of cell-free mMIMO systems by enabling multiple agents (e.g., mobile-BSs or UEs) to learn coordinated policies from local observations and optimize their strategies collaboratively. \scopeedit{Under centralized training with decentralized execution, agents can use aggregated information during training and local observations for online policy execution.} However, conventional MARL-based schemes face fundamental challenges, including poor sample efficiency, limited scalability, and unstable learning dynamics, due to non-stationarity and difficult credit assignment when agents are driven solely by a global team reward. Therefore, it is necessary to design a decoupled learning and coordination architecture to improve the robustness and performance of MARL \cite{10559632}. In addition to these algorithmic challenges, existing literature has yet to adequately resolve several critical implementation difficulties: 1) the computational intractability arising from non-convex, geometry-coupled constraints; 2) the prohibitive communication overhead required for effective coordination in large-scale networks; and 3) the amplified learning instability when heterogeneous agents attempt to collaborate under sparse global rewards.
\end{revblock}

\begin{figure*}[t]
\captionsetup{skip=-5pt}
	\centering
	\includegraphics[width=0.8\textwidth]{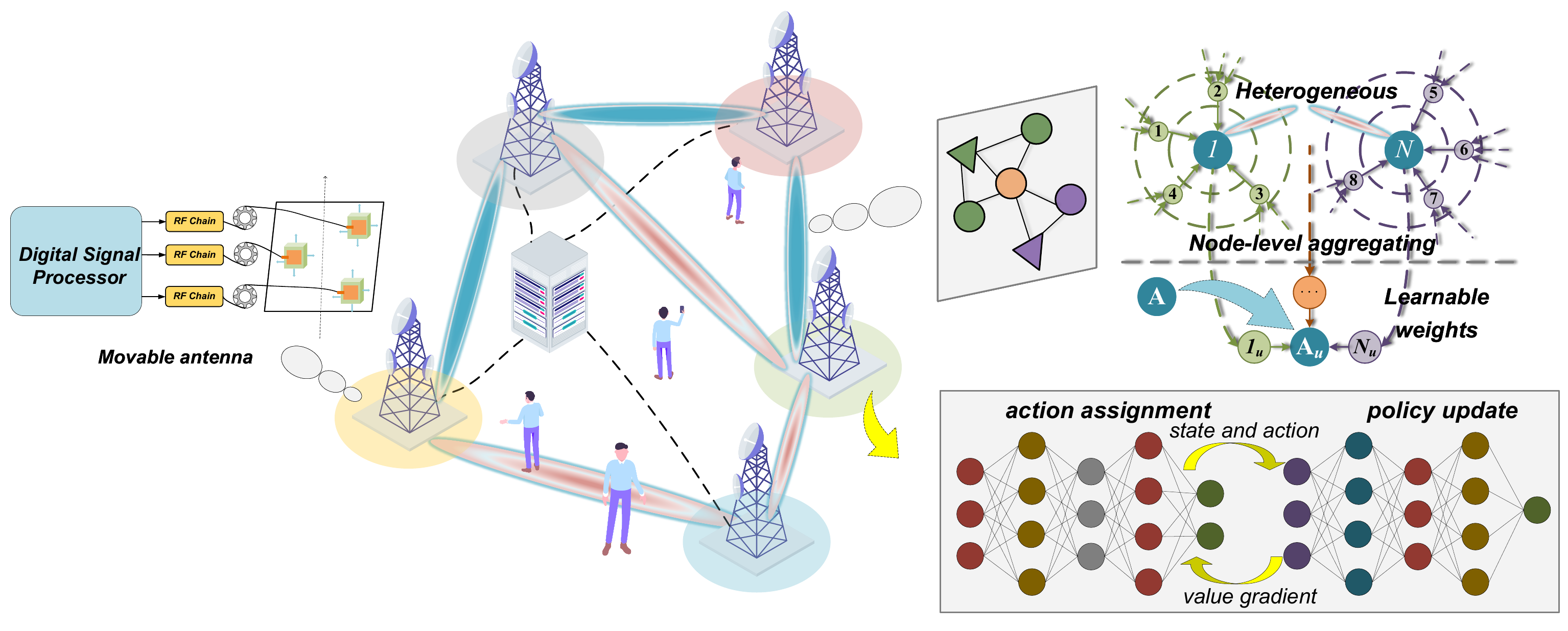}
	\caption{System model and heterogeneous multi-agent architecture for MA-aided cell-free massive MIMO.} 
\label{fig:sys_model}
\end{figure*}



To fill in the above gap, in this paper, we address the following fundamental question: \emph{How can we design \scopeedit{a heterogeneous multi-agent optimization framework} that simultaneously resolves the geometric constraints of antenna positioning and the interference management of power control?}
\par
\label{par:scope_clarification}
\scopeedit{This work investigates the joint optimization of MA positioning and power allocation under the adopted channel and geometric models. Problem $(\mathcal{P})$ maximizes the nominal sum rate subject to power and geometric constraints, whereas the MARL extrinsic team reward combines the nominal sum rate with a geometric feasibility penalty. Practical MA non-idealities, including movement delay, actuation energy, positioning uncertainty, finite movement resolution, and CSI/control overhead, are considered through simplified modeling and supplementary sensitivity analysis.}
Our contributions are summarized as follows.
\begin{itemize}
\item We introduce a novel heterogeneous multi-agent collaboration learning framework, powered by a graph-based learning individual intrinsic reward heterogeneous-agent proximal policy optimization (GLIIR-HAPPO) algorithm to overcome the intractability of monolithic centralized optimizers in MA-aided CF-MIMO networks. This paradigm fundamentally decomposes the joint optimization of antenna positions and transmission strategies into role-specialized subtasks, handled respectively by MA-positioning agents and power-control agents, \scopeedit{enabling role-specialized policy learning with centralized training and decentralized online execution.}

\item We tackle the intractable non-convex minimum-distance constraints between antennas through a dual strategy. First, a penalty-augmented mechanism is integrated into the reward function to drive the learning process toward constraint satisfaction. Second, a specialized geometric solver empowers positioning agents to efficiently navigate the high-dimensional action space and identify feasible, collision-free locations, transforming an otherwise intractable geometric problem into a manageable optimization task.

\item \scopeedit{To reduce learning-related coordination overhead, we combine a dynamic-interaction graph critic with role-conditioned federated distillation. The critic coordinates agents across roles, while distillation exchanges compact same-role actor-output statistics. We establish a trust-region monotonic-improvement bound and a bi-level intrinsic-reward convergence result under the stated assumptions. Simulations achieve near-centralized performance with reduced learning-related communication.}
\end{itemize}

The remainder of this paper is structured as follows. Section \uppercase\expandafter{\romannumeral2} describes the system model of the MA-aided CF massive MIMO network and formulates the joint optimization problem under non-convex minimum-distance constraints. In Section \uppercase\expandafter{\romannumeral3}, we present the proposed heterogeneous multi-agent collaboration framework, which integrates a penalty-guided geometric solver with the GLIIR-HAPPO algorithm to decouple and solve the resulting sub-tasks. Section \uppercase\expandafter{\romannumeral4} provides the theoretical analysis of the proposed framework, establishing monotonic performance bounds and convergence guarantees. Numerical simulation results that validate the effectiveness of the proposed approach are presented in Section \uppercase\expandafter{\romannumeral5}. Finally, Section \uppercase\expandafter{\romannumeral6} concludes the paper.

\emph{{Notations}}:
In this paper, scalars are denoted by non-bold italic letters $a$ or $A$, while vectors and matrices are represented by boldface lowercase $\boldsymbol{a}$ and boldface uppercase letters $\boldsymbol{A}$, respectively. The superscripts $(\cdot)^T$ and $(\cdot)^H$ denote the transpose and the conjugate transpose operations, respectively. $\mathbb{R}^{M \times N}$ and $\mathbb{C}^{M \times N}$ represent the spaces of $M \times N$ real-valued and complex-valued matrices, respectively. For a vector $\boldsymbol{x}$, $\|\boldsymbol{x}\|_2$ denotes its Euclidean norm (i.e., the $\ell_2$-norm). Calligraphic letters $\mathcal{A}$ represent sets and $|\mathcal{A}|$ denotes the cardinality of the set $\mathcal{A}$, or the absolute value if applied to a scalar. $\mathbb{E}[\cdot]$ denotes the statistical expectation operator, and $\nabla$ represents the gradient operator. 

\vspace{-0.15cm}
\section{System Model and Joint Optimization Problem}\label{sec:system_model}
We consider a CF MIMO network consisting of $L$ geographically distributed access points (APs), each equipped with $N$ MAs. The APs are connected via fronthaul connections to a CPU, as illustrated in Fig.~\ref{fig:sys_model}.
\begin{revblock}
Let $N$ denote the number of movable radiating elements (i.e., MAs) at each AP. The channel vector from AP $l$ to UE $k$ is denoted by $\boldsymbol{h}_{lk}(\boldsymbol{\overline{r}}_{l})\in\mathbb{C}^{N\times1}$, which contains the channel coefficients of the $N$ elements.
\end{revblock}
There are $K$ single-antenna UEs in the network and the channel between the $l$-th AP, $l\in \{1,\dots,L  \}$ and the $k$-th UE, $k\in \{1,\dots,K  \}$ is denoted by $\boldsymbol{h}_{lk}(\boldsymbol{\overline{r}}_{l}) \in \mathbb{C}^{N \times 1}$, the position of the $n$-th MA at the $l$-th AP by $\boldsymbol{r}_{ln}$.
We use $\boldsymbol{\overline{r}}_{l} \triangleq[\boldsymbol{r}_{l1}^{T}, \ldots, \boldsymbol{r}_{lN}^{T}]^{T} \in \mathbb{R}^{2N \times 1}$ to denote the aggregated position vector of the $l$-th AP.
The significance of MA stems from the spatial channel model. In this paper, we employ a multipath channel model, which enables us to accurately capture the position-dependent structure of MA channels \cite{10458417, 10354003}. In particular, the downlink channel vector $\boldsymbol{h}_{lk}(\boldsymbol{\overline{r}}_{l}) \in \mathbb{C}^{N \times 1}$ consists of $P$ paths from the $l$-th AP to the $k$-th UE and is given by
\vspace{-0.5 em}\begin{equation}
\boldsymbol{h}_{lk}(\boldsymbol{\overline{r}}_{l})=\sqrt{\frac{1}{P}} \sum_{p=1}^{P} \beta_{lk,p} \boldsymbol{a}\left(\theta_{lk,p}, \phi_{lk,p}, \boldsymbol{\overline{r}}_{l}\right),
\vspace{-0.25 em}\end{equation}
where $\beta_{lk,p}\in\mathbb{C}$ is the complex path gain of the $p$-th path \revise{for the link between the $l$-th AP and the $k$-th UE}, and $P$ is the number of spatial channel paths.
In addition, $\boldsymbol{a}\left(\theta_{lk,p}, \phi_{lk,p}, \boldsymbol{\overline{r}}_{l}\right)\in \mathbb{C}^{N \times 1}$ represents the transmit array
response vector. Variable $\phi_{lk,p} \triangleq \sin \left(\varphi_{lk,p}\right) \sin \left(\vartheta_{lk,p}\right)$, and $\theta_{lk,p} \triangleq \cos \left(\vartheta_{lk,p}\right)$ denote the virtual azimuth and elevation angles of the $p$-th path \revise{for the link between the $l$-th AP and the $k$-th UE}, respectively.
The transmit array response vector \cite{11058968, chen2026channel} is
\vspace{-0.5 em}\begin{equation}
\begin{aligned}
&\boldsymbol{a}\left(\theta_{lk, p}, \phi_{lk,p}, \boldsymbol{\overline{r}}_{l}\right)  =\left[e^{j \frac{2 \pi}{\lambda}\left(\phi_{lk,p} x_{l,1}+\theta_{lk,p} z_{l,1}\right)}, \cdots\right. \\
& \left.e^{j \frac{2 \pi}{\lambda}\left(\phi_{lk,p} x_{l,n_{x}}+\theta_{lk,p} z_{l,n_{z}}\right)}, \cdots, e^{j \frac{2 \pi}{\lambda}\left(\phi_{lk,p} x_{l,N_{x}}+\theta_{lk,p} z_{l,N_{z}}\right)}\right]^{T},
\end{aligned}
\end{equation}
where $\left(\boldsymbol{r}_{ln_{x}}, \boldsymbol{r}_{ln_{z}}\right)$ is the position of the $\left(n_{x}, n_{z}\right)$-th antenna in the $x$-$z$ plane at the $l$-th AP.
During the downlink data transmission, all antennas of all $L$ APs simultaneously serve all $K$ UEs via the same time-frequency resources, and the received downlink signal $y_k$ at the $k$-th UE is given by
\vspace{-0.5 em}\begin{equation}
y_{k}=\sum_{l=1}^{L} \sum_{i=1}^{K} \sqrt{\rho_{l i}} \boldsymbol{h}_{lk}^{H}(\boldsymbol{\overline{r}}_{l}) \boldsymbol{w}_{l i} s_{i}+n_{k},
\vspace{-0.25 em}\end{equation}
where $\rho_{l i} \geqslant 0$ denotes the downlink power allocated to the $i$-th UE by $l$-th AP, and $\boldsymbol{w}_{l i} \in \mathbb{C}^{N \times 1}$ represents the normalized precoding vector, satisfying $\left\|\boldsymbol{w}_{l i}\right\|^{2}=1$ and $\boldsymbol{w}_{l i}=\tilde{\boldsymbol{w}}_{l i} /\left\|\tilde{\boldsymbol{w}}_{l i}\right\|$
with the unnormalized precoding vector $\tilde{\boldsymbol{w}}_{l i}$, and $s_i$ denotes the symbol intended for the $i$-th UE with unit average power, i.e., $\mathbb{E}\{|s_i|^{2}\}=1$.
\begin{revblock}
In this work, the precoder is treated as a fixed deterministic module rather than an additional MARL action. For AP $l$, define $\boldsymbol{H}_{l}(\boldsymbol{\overline{r}}_{l})=[\boldsymbol{h}_{l1}(\boldsymbol{\overline{r}}_{l}),\ldots,\boldsymbol{h}_{lK}(\boldsymbol{\overline{r}}_{l})]\in\mathbb{C}^{N\times K}$. The unnormalized RZF precoder is computed as
\begin{equation}
\widetilde{\boldsymbol{W}}_{l}
=
\boldsymbol{H}_{l}(\boldsymbol{\overline{r}}_{l})
\left(
\boldsymbol{H}_{l}^{H}(\boldsymbol{\overline{r}}_{l})
\boldsymbol{H}_{l}(\boldsymbol{\overline{r}}_{l})
+\delta\boldsymbol{I}_{K}
\right)^{-1},
\end{equation}
where $\delta>0$ is the regularization factor, and $\boldsymbol{w}_{li}$ is obtained by normalizing the $i$-th column of $\widetilde{\boldsymbol{W}}_{l}$. This design keeps the MARL action space focused on power allocation and MA displacement, avoiding high-dimensional complex-valued beamforming actions. Since $\boldsymbol{H}_{l}(\boldsymbol{\overline{r}}_{l})$ depends on the MA positions, optimizing $\boldsymbol{\overline{r}}_{l}$ still changes the Gram matrix $\boldsymbol{H}_{l}^{H}(\boldsymbol{\overline{r}}_{l})\boldsymbol{H}_{l}(\boldsymbol{\overline{r}}_{l})$ and can indirectly improve channel orthogonality and RZF interference suppression.
\end{revblock}
$n_k \sim \mathcal{CN}(0, \sigma_k^2)$ denotes the additive white Gaussian noise (AWGN) at UE $k$, where $\sigma_k^2$ is the noise power. 
The SINR for the $k$-th UE can be expressed as
\vspace{-0.5 em}\begin{equation}
\operatorname{SINR}_{k}=\frac{\left|\sum_{l=1}^{L} \sqrt{\rho_{l k}} \boldsymbol{h}_{l k}^{H}(\boldsymbol{\overline{r}}_{l}) \boldsymbol{w}_{l k}\right|^{2}}{\sum_{i \neq k}^{K}\left|\sum_{l=1}^{L} \sqrt{\rho_{l i}} \boldsymbol{h}_{l k}^{H}(\boldsymbol{\overline{r}}_{l}) \boldsymbol{w}_{l i}\right|^{2}+\sigma_{k}^{2}}.
\end{equation}
The achievable rate of the $k$-th UE is given by
\vspace{-0.5 em}\begin{equation}
R_{k}=\log _{2}\left(1+\operatorname{SINR}_{k}\right).
\end{equation}
Our primary design objective is to maximize the overall network performance by jointly optimizing the downlink power allocation and the physical positions of the movable antennas. This design is formulated as a sum-rate maximization problem, which can be expressed as follows:
\vspace{-0.5 em}\begin{subequations}
\begin{align}
(\mathcal{P}):  \max _{\rho_{l k}, \boldsymbol{r}_{ln}}&\sum_{k=1}^{K} R_k\\
\text { s.t. } &\sum_{k=1}^{K} \rho_{l k} \leq P_{l}^{\max }, \quad \rho_{l k} \geq 0, \quad \forall l, k, \label{p_max}\\ 
&  \boldsymbol{r}_{ln} \in \mathcal{C}, \quad \forall l, n, \label{r_ln_C}\\
& \left\|\boldsymbol{r}_{ln}-\boldsymbol{r}_{lm}\right\|_{2} \geq D, \quad \forall l, n, m, (m \neq n),\label{nc_const}
\end{align}
\end{subequations}
where constraint~\eqref{p_max} ensures that the total transmit power from each AP does not exceed its maximum $P_{l}^{\max }$. In constraint~\eqref{r_ln_C}, the region available for the movement of MAs is represented by $\mathcal{C}$, while constant $D$ in constraint~\eqref{nc_const} specifies the minimum inter-antenna separation distance.

\begin{revblock}
\label{par:practical_model_scope}
Problem $(\mathcal{P})$ represents the nominal joint power-allocation and MA-positioning design. \scopeedit{The following simplified models support a separate sensitivity analysis and are not calibrated to a specific actuator or measured CSI-acquisition protocol.} In practical MA-aided systems, however, antenna repositioning is not instantaneous and the channel must be reacquired after position updates. To capture these effects, we assume that MA positions are updated over a slow window $T_{\rm upd}$ and denote the movement time by $\tau_{\rm mov,t}=\max_{l,n}\|\Delta\boldsymbol r_{ln,t}\|_2/v_{\max}$. The resulting effective sum rate is then written as
\vspace{-0.5 em}\begin{equation}
R_{\rm sum}^{\rm eff}(t)
=
\left[
1-\frac{\tau_{\rm p}+\tau_{\rm fb}+\tau_{\rm csi}+\tau_{\rm mov,t}}{T_{\rm upd}}
\right]^+
\sum_{k=1}^{K}R_{k,t}.
\label{eq:practical_eff_rate_ee}
\end{equation}
In \eqref{eq:practical_eff_rate_ee}, $\tau_{\rm p}$, $\tau_{\rm fb}$, and $\tau_{\rm csi}$ denote the pilot-training, feedback/control, and CSI-reacquisition durations, respectively. When EE is considered, the same effective-rate model can be normalized by the total consumed power as $\mathrm{EE}_t=R_{\rm sum}^{\rm eff}(t)/P_{\rm tot,t}$. \scopeedit{Effective rate and EE serve only as supplementary evaluation metrics and are not incorporated into the nominal sum-rate objective in $(\mathcal{P})$ or the proposed MARL reward.}

The total power consumption should include not only the transmit power but also circuit, RF-chain, actuation, and control power terms \cite{ngo2018total_ee_cf}. Accordingly, we model it as
\begin{equation}
\begin{aligned}
P_{\rm tot,t}
&=
\sum_{l,k}\rho_{lk,t}+LP_{\rm c}+LNP_{\rm RF}
\\
&\quad+\frac{1}{T_{\rm upd}}
\sum_{l,n}\!\left(c_1\|\Delta\boldsymbol r_{ln,t}\|_2
+c_2\|\Delta\boldsymbol r_{ln,t}\|_2^2\right)
+P_{\rm ctrl,t}.
\end{aligned}
\label{eq:practical_power_act}
\end{equation}
Here, the summation term represents the actuation energy amortized over the MA update window, while $P_{\rm ctrl,t}$ accounts for the control and feedback signaling power.

We further account for positioning uncertainty and finite movement resolution. Specifically, if the actual MA position is $\widetilde{\boldsymbol r}_{ln}=\boldsymbol r_{ln}+\boldsymbol e_{ln}$ with $\|\boldsymbol e_{ln}\|_2\le\epsilon_{\rm pos}$, then the robust separation margin is set as $D_{\rm eff}=D+2\epsilon_{\rm pos}$ and the constraint $\|\boldsymbol r_{ln}-\boldsymbol r_{lm}\|_2\ge D_{\rm eff}$ is imposed. After projection onto the feasible movement region, the finite-resolution MA update is given by
\begin{equation}
\boldsymbol r_{ln,t+1}
=
\mathcal Q_{\Delta_r}\!\left(
\Pi_{\mathcal C}
\left(\boldsymbol r_{ln,t}+\Delta_{\max}\boldsymbol a_{ln,t}^{r}\right)
\right).
\label{eq:practical_error_resolution}
\end{equation}
\scopeedit{In the sensitivity evaluation, finite resolution is approximated by uniformly distributed coordinate residuals in $[-\Delta_r/2,\Delta_r/2]$.}
\scopeedit{Finally, the overall overhead can be decomposed as $C_{\rm total}=C_{\rm pilot}+C_{\rm CSI}+C_{\rm control}+C_{\rm MARL}$. FD reduces only the learning-related term $C_{\rm MARL}$, whereas MA relocation still requires CSI reacquisition and control signaling \cite{xiao2024ma_channel_estimation}.}
\end{revblock}

The joint optimization in problem $(\mathcal{P})$ is highly challenging due to the strong coupling among variables and, most notably, the non-convex minimum distance constraint~\eqref{nc_const} and objective function, which renders conventional optimization methods ineffective. 
\scopeedit{We develop a heterogeneous multi-agent framework with centralized training and decentralized online execution. During training, the graph critic uses aggregated observations across APs, whereas each actor selects actions from its local observation during online execution. To handle the coupled geometric constraints, we reformulate the original design through variable splitting and quadratic-penalty augmentation.}


\section{Problem Reconstruction and Heterogeneous Multi-Agent Modeling}\label{Pro_HMARL}

\scopeedit{In this section, we present a heterogeneous multi-agent framework to handle problem $(\mathcal{P})$ with centralized training and decentralized online execution.} We first transform problem $(\mathcal{P})$ into a MARL-compatible form, after which we detail the agent structure, state representations, action design, and reward formulation.
\subsection{Problem Reconstruction}\label{pro_recons}
To decouple the non-convex constraint \eqref{nc_const} from the objective function $\quad\sum_{k=1}^{K} R_k$, we utilize variable splitting by introducing auxiliary variables $\left\{\boldsymbol{z}_{ln}=\boldsymbol{r}_{ln}\right\}_{\forall l, n}$. By substituting $\left\{\boldsymbol{z}_{ln}\right\}_{\forall l, n}$ for $\left\{\boldsymbol{r}_{ln}\right\}_{\forall l, n}$ in \eqref{nc_const}, and introducing a penalty term $\varrho  \sum_{l=1}^{L}\sum_{n=1}^{N}\left\|\boldsymbol{r}_{ln}-\boldsymbol{z}_{ln}\right\|_{2}^{2}$ into the objective function, the problem $(\mathcal{P})$ can be transformed into
\begin{subequations}
\begin{align}
(\mathcal{P} \text{1}):  \max _{\rho_{l k}, \boldsymbol{r}_{ln}, \boldsymbol{z}_{ln}} &\sum_{k=1}^{K} R_k-\varrho\sum_{l=1}^{L}\sum_{n=1}^{N}\!\left\|\boldsymbol{r}_{ln}-\boldsymbol{z}_{ln}\right\|_{2}^{2}\\
\text { s.t. } &\sum_{k=1}^{K} \rho_{l k} \leq P_{l}^{\max }, \quad \rho_{l k} \geq 0, \forall l, k,\\
&\boldsymbol{r}_{ln} \in \mathcal{C}, \quad \forall l, n,\\
& \left\|\boldsymbol{z}_{ln}-\boldsymbol{z}_{lm}\right\|_{2} \geq D,  \forall l, n, m, m \neq n,
\end{align}
\end{subequations}
\begin{revblock}
where $\varrho>0$ denotes the quadratic-penalty coefficient, distinct from the transmit power $\rho_{lk}$. In simulations, we set $\varrho_0=0.1$ and update it once per episode according to the average minimum-distance violation
\begin{equation}
\label{eq:penalty_violation}
\nu_e
=
\frac{1}{LN(N-1)}
\sum_{l=1}^{L}
\sum_{n=1}^{N}
\sum_{m\neq n}
\left[
D-
\|\boldsymbol{r}_{ln}-\boldsymbol{r}_{lm}\|_2
\right]^+,
\end{equation}
where $[x]^+=\max\{x,0\}$. The adaptive update rule is
\begin{equation}
\label{eq:penalty_update}
\varrho_{e+1}
=
\begin{cases}
\min\{\eta_{\varrho}\varrho_e,\varrho_{\max}\},
&
\nu_e>\epsilon_{\rm feas},\\
\varrho_e,
&
\nu_e\le \epsilon_{\rm feas},
\end{cases}
\end{equation}
with $\eta_{\varrho}=1.05$, $\varrho_{\max}=20$, and $\epsilon_{\rm feas}=10^{-3}D$. The penalty is therefore increased only when the violation exceeds the feasibility threshold, which mitigates early-stage penalty domination.
\end{revblock}
Fix an AP index $l$ and an antenna index $n$, the subproblem of $(\mathcal{P}\text{1})$ can be written as
\begin{equation}\label{eq:P1cm}
\begin{aligned}
(\mathcal{P}\text{1-a}):
& \min_{\bm z_{ln}\in\mathbb{R}^2}\ \bigl\|\bm r_{ln}-\bm z_{ln}\bigr\|_2^2 \\
\text{s.t.}\quad
& \bigl\|\bm z_{ln}-\bm z_{lm}\bigr\|_2 \ge D,\quad
  \forall\, l,m,n\ (m\neq n),
\end{aligned}
\end{equation}
where $\bm r_{ln}\in\mathbb{R}^2$ is the unconstrained update of the physical MA position at AP $l$ coming from the penalized master problem, while $\{\bm z_{lm}\}_{m\neq n}$ are the auxiliary variables used to enforce the minimum separation $D$.
Since subproblem (\ref{eq:P1cm}) involves only variables associated with AP index $l$, the AP-level subproblems are independent and can be solved in parallel. In the following, we characterize the solution of all AP-level subproblems by exploiting the geometric structure of~\eqref{eq:P1cm}.

\begin{definition}[Candidate points on active boundaries]
Define the index set of currently violated distance constraints
\begin{equation}
\mathcal{D}_{ln} \ \triangleq\ \Bigl\{m\in\mathcal{N}\setminus\{n\}\ :\ \|\bm r_{ln}-\bm z_{lm}\|_2<D\Bigr\}.
\end{equation}
For every $n\in\mathcal{N}$, let
\begin{equation}
\mathrm{Circle}_{ln} \ \triangleq\ \bigl\{\bm x\in\mathbb{R}^2: \|\bm x-\bm z_{ln}\|_2=D\bigr\}
\end{equation}
be the circle that bounds the feasible exterior region.
Let $\mathrm{Line}_{l,\ell n}$ denote the line through $\boldsymbol{z}_{l\ell}$ and $\boldsymbol{r}_{ln}$.
Define
\begin{equation}
\mathcal{W}_{l\ell}\! \triangleq\!
\left\{\boldsymbol{w}\in\mathbb R^2 \middle|
\begin{aligned}
& \boldsymbol{w} \in \mathrm{Circle}_{l\ell} \cap \mathrm{Circle}_{l\ell'},\exists\,\ell'\in \mathcal{N}\setminus\{n,\ell\},\\
& \|\boldsymbol{w}-\boldsymbol{z}_{l\ell'}\|_2 \ge D,\quad \forall\,\ell'\in \mathcal{N} \setminus\{n,\ell\}
\end{aligned}
\right\}
\end{equation}
and
\begin{equation}\label{U_l}
\mathcal{U}_{l\ell}\! \triangleq \!
\left\{\boldsymbol{u}\in\mathbb R^{2}\ \Bigm|\ 
\begin{aligned}
& \boldsymbol{u} \in \mathrm{Circle}_{l\ell}\cap \mathrm{Line}_{l,\ell n},\\
& \|\boldsymbol{u}-\boldsymbol{z}_{l\ell'}\|_{2}\ge D, \quad
  \forall\,\ell'\in \mathcal{N} \setminus\{n,\ell\}
\end{aligned}
\right\}.
\end{equation}
If the centers of $\mathrm{Circle}_{l\ell}$ and 
$\mathrm{Circle}_{l\ell'}$ coincide, the intersection 
$\mathrm{Circle}_{l\ell} \cap \mathrm{Circle}_{l\ell'}$ is regarded as $\varnothing$ in $\mathcal{W}_{l\ell}$; 
if $\boldsymbol{z}_{l\ell}=\boldsymbol{r}_{l\ell'}$, then $\mathcal{U}_{l\ell}$ is regarded as $\varnothing$.
\end{definition}

\begin{lemma}\label{lem:existence}
The feasible set $\mathcal{F}_{ln}=\bigcap_{m\neq n}\{\bm z:\ \|\bm z-\bm z_{lm}\|_2\ge D\}$ is closed and nonempty. The objective $f(\bm z)=\|\bm z-\bm r_{ln}\|_2^2$ is coercive. Hence (\ref{eq:P1cm}) admits at least one global minimizer $\bm z_{ln}^\star$. Moreover:
\begin{itemize}
\item[(i)] If $\mathcal{D}_{ln}=\varnothing$ (i.e., $\bm r_{ln}\in\mathcal{F}_{ln}$), then $\bm z_{ln}^\star=\bm r_{ln}$.
\item[(ii)] If $\mathcal{D}_{ln}\neq\varnothing$, then every global minimizer satisfies
\vspace{-0.5 em}\begin{equation}\label{z_ln}
\boldsymbol{z}_{ln}^{\star} \in \bigcup_{\ell \in \mathcal{D}_{ln}}\left(\mathcal{W}_{l\ell} \cup \mathcal{U}_{l\ell}\right).
\vspace{-0.5 em}\end{equation}
\end{itemize}
\end{lemma}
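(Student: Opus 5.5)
The proof has three steps: existence of a minimizer, case (i), and a boundary analysis for case (ii) split by the number of active circles at $\bm z_{ln}^\star$.

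\textbf{Existence and case (i).} Write the complement of $\mathcal{F}_{ln}$ as $\bigcup_{m\neq n}B(\bm z_{lm},D)$. This is a finite union of open disks, so it is open and bounded, and $\mathcal{F}_{ln}$ is closed. Any point at distance greater than $D+\max_m\|\bm z_{lm}\|_2$ from the origin is feasible, so $\mathcal{F}_{ln}$ is nonempty. The objective $f$ is continuous and coercive. Hence for any feasible $\bm z_0$, the sublevel set $\{\bm z\in\mathcal{F}_{ln}: f(\bm z)\le f(\bm z_0)\}$ is compact and nonempty, and Weierstrass gives a global minimizer. For (i), $\mathcal{D}_{ln}=\varnothing$ means $\bm r_{ln}\in\mathcal{F}_{ln}$. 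Since $f\ge 0$ and $f(\bm r_{ln})=0$, and zero is attained only at $\bm r_{ln}$, we get $\bm z_{ln}^\star=\bm r_{ln}$.

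\textbf{Case (ii): reduction to the boundary.} Suppose $\bm r_{ln}\notin\mathcal{F}_{ln}$. First I show $\bm z_{ln}^\star\in\partial\mathcal{F}_{ln}$. If $\bm z_{ln}^\star$ were interior, a small step from it toward $\bm r_{ln}$ would stay feasible and strictly decrease $f$, which is impossible. So the active set $\mathcal{A}=\{m:\|\bm z_{ln}^\star-\bm z_{lm}\|_2=D\}$ is nonempty. The same argument gives a stronger fact: no point of the open segment $(\bm r_{ln},\bm z_{ln}^\star)$ is feasible, since any such point is strictly closer to $\bm r_{ln}$. So that segment lies entirely in the union of open disks. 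I then split by $|\mathcal{A}|$, and in each case must also check the side conditions $\|\cdot-\bm z_{l\ell'}\|_2\ge D$ in $\mathcal{W}_{l\ell}$ and $\mathcal{U}_{l\ell}$. Those side conditions are simply the feasibility of $\bm z_{ln}^\star$, so the real content is locating $\bm z_{ln}^\star$ on the right circle or line with a label $\ell\in\mathcal{D}_{ln}$.

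\textbf{Subcase $|\mathcal{A}|\ge 2$.} Then $\bm z_{ln}^\star\in\mathrm{Circle}_{l\ell}\cap\mathrm{Circle}_{l\ell'}$ for two distinct active indices. This has the form of a $\mathcal{W}_{l\ell}$ point once I find an active $\ell$ with $\ell\in\mathcal{D}_{ln}$.

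\textbf{Subcase $|\mathcal{A}|=\{\ell\}$.} Near $\bm z_{ln}^\star$ the feasible set coincides locally with the exterior of the single disk $B(\bm z_{l\ell},D)$. Hence $\bm z_{ln}^\star$ is a local minimizer of $f$ on $\mathrm{Circle}_{l\ell}$. The Lagrange condition $\bm z_{ln}^\star-\bm r_{ln}\parallel \bm z_{ln}^\star-\bm z_{l\ell}$ places $\bm z_{ln}^\star$ on $\mathrm{Line}_{l,\ell n}$, so $\bm z_{ln}^\star\in\mathcal{U}_{l\ell}$. The degenerate case $\bm r_{ln}=\bm z_{l\ell}$, where the line is undefined, needs a separate look: every point of the circle is then a stationary point and the definition returns $\varnothing$.

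To show $\ell\in\mathcal{D}_{ln}$, note that the segment from $\bm z_{ln}^\star$ toward $\bm r_{ln}$ must enter the disk. Otherwise the first-order feasible-direction argument applies: if $(\bm r_{ln}-\bm z_{ln}^\star)^T(\bm z_{ln}^\star-\bm z_{l\ell})\ge 0$, a small step toward $\bm r_{ln}$ stays feasible and decreases $f$. So $\bm r_{ln}-\bm z_{ln}^\star$ points along the inward normal through the center. From there I would show $\bm r_{ln}$ lies inside the disk, using that the whole segment is infeasible and, near $\bm z_{ln}^\star$, covered only by $B(\bm z_{l\ell},D)$.

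\textbf{Main obstacle.} The hardest step is certifying that the chosen active label $\ell$ belongs to $\mathcal{D}_{ln}$, i.e., that $\bm r_{ln}$ itself lies inside that disk. The segment $(\bm r_{ln},\bm z_{ln}^\star)$ is only known to be covered by the union of disks, and could in principle pass from the active disk at $\bm z_{ln}^\star$ into other disks.

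My first attempt is a contradiction argument. Suppose no active disk contains $\bm r_{ln}$, and take $\bm y$, the first point where the segment (traversed from $\bm z_{ln}^\star$) leaves the active disk. Each open disk is convex, so a chord cannot exit and re-enter it. One then argues that a point of $\mathcal{F}_{ln}$ strictly closer to $\bm r_{ln}$ exists near $\bm y$, or on the arc of the active circle between $\bm z_{ln}^\star$ and $\bm y$.

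If this fails in general, I would fall back on an assumption I expect is implicit: the undominated disks around $\bm r_{ln}$ are the only ones relevant to the projection, and non-violated disks are treated as inactive. The key observation is that for $|\mathcal{A}|\ge 2$, the definition of $\mathcal{W}_{l\ell}$ only needs one of the intersecting circles, namely $\ell$, to be violated. The other, $\ell'$, may be any index.
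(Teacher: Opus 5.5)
\textbf{There is a genuine gap, and it cannot be closed: claim (ii) is false as stated.}

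Your existence argument and case (i) are fine. Your active-set split also correctly shows that a minimizer lies in $\mathcal{W}_{l\ell}\cup\mathcal{U}_{l\ell}$ for \emph{some} active index $\ell$. The gap is exactly the step you flagged: certifying $\ell\in\mathcal{D}_{ln}$ when $|\mathcal{A}|\ge 2$.

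\textbf{Counterexample.} Take $D=25$, $N=4$, and
\[
\bm z_{l1}=(-24,-7)^T,\quad \bm z_{l2}=(24,-7)^T,\quad \bm z_{l3}=(0,-32)^T,\quad \bm r_{ln}=(0,-15)^T.
\]
The three auxiliary points are even mutually $D$-separated. We have $\|\bm r_{ln}-\bm z_{l1}\|_2^2=\|\bm r_{ln}-\bm z_{l2}\|_2^2=640>625$ and $\|\bm r_{ln}-\bm z_{l3}\|_2=17<25$, so $\mathcal{D}_{ln}=\{3\}$.

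The origin lies on $\mathrm{Circle}_{l1}\cap\mathrm{Circle}_{l2}$ (since $24^2+7^2=625$) and at distance $32$ from $\bm z_{l3}$. So it is feasible, with $f=225$.

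No point of the open disk of radius $15$ about $\bm r_{ln}$ is feasible. On that disk $|x|<15$ and $-30<y<0$.
\begin{itemize}
\item For $-14<y<0$: $x^2+y^2+14y<x^2\le 15|x|\le 48|x|$. This inequality is exactly membership in $B(\bm z_{l1},D)\cup B(\bm z_{l2},D)$.
\item For $-30<y\le -14$: $x^2+(y+32)^2<225+324<625$, so the point lies in $B(\bm z_{l3},D)$.
\end{itemize}
Hence $\bm z_{ln}^\star=\bm 0$ is the (unique) global minimizer. It is not on $\mathrm{Circle}_{l3}$, so $\bm z_{ln}^\star\notin\mathcal{W}_{l3}\cup\mathcal{U}_{l3}$, contradicting (ii).

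This is precisely the scenario you feared. The segment from $\bm z_{ln}^\star$ to $\bm r_{ln}$ runs through the lens $B(\bm z_{l1},D)\cap B(\bm z_{l2},D)$ down to $(0,-14)$, then into $B(\bm z_{l3},D)$. So your contradiction sketch must fail. Your fallback of declaring non-violated disks inactive changes $(\mathcal{P}\text{1-a})$ rather than proving the lemma.

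\textbf{The paper's proof has the same hole.} Right after the interior-point argument it asserts that the minimizer lies on some $\mathrm{Circle}_{l\ell}$ with $\ell\in\mathcal{D}_{ln}$. That argument only produces some active circle. Its subsequent radial-projection versus arc-endpoint analysis is your two subcases in another form.

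\textbf{The degenerate case is also a real exception.} The case you flagged is not something to be handled within the lemma. With $N=2$ and $\bm z_{lm}=\bm r_{ln}$, every point of $\mathrm{Circle}_{lm}$ is a global minimizer. Yet $\mathcal{W}_{lm}=\varnothing$ (there is no third index) and $\mathcal{U}_{lm}=\varnothing$ by the paper's convention.

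\textbf{What your argument does prove.} Assume $\bm r_{ln}\neq\bm z_{l\ell}$ for all $\ell$. Then you obtain (ii) with the union taken over all $\ell\in\mathcal{N}\setminus\{n\}$, indeed over the active set at $\bm z_{ln}^\star$.
\begin{itemize}
\item If at least two active constraints have distinct centers, feasibility of $\bm z_{ln}^\star$ supplies the side conditions of $\mathcal{W}_{l\ell}$.
\item If only one constraint, say $\ell$, is active, you get $\mathcal{U}_{l\ell}$.
\end{itemize}

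In the single-active case you even recover $\ell\in\mathcal{D}_{ln}$, without any segment-covering argument:
\begin{itemize}
\item Local minimality on $\mathrm{Circle}_{l\ell}$ excludes the far radial point, because it maximizes $f$ on the circle.
\item Hence $\bm z_{ln}^\star=\bm z_{l\ell}+D\bm u$ with $\bm u=(\bm r_{ln}-\bm z_{l\ell})/\|\bm r_{ln}-\bm z_{l\ell}\|_2$.
\item This gives $(\bm r_{ln}-\bm z_{ln}^\star)^T(\bm z_{ln}^\star-\bm z_{l\ell})=D(\|\bm r_{ln}-\bm z_{l\ell}\|_2-D)$.
\item Your feasible-direction argument forces this quantity to be negative, i.e., $\|\bm r_{ln}-\bm z_{l\ell}\|_2<D$.
\end{itemize}

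So restricting to violated constraints is guaranteed only when a single constraint is active at $\bm z_{ln}^\star$. A finite candidate search built on (ii) must range over all circles, not only those indexed by $\mathcal{D}_{ln}$.
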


\begin{IEEEproof}
Suppose $\mathcal{D}_{ln}\neq\varnothing$. If a feasible point $\boldsymbol{z}$ lies in the 
interior of $\mathcal F_{ln}$, then there exists $\delta>0$ such that
$\|\boldsymbol{z}-\boldsymbol{z}_{l\ell}\|_2\ge D+\delta$, $\forall \ell\neq n$. For sufficiently
small $\tau\in(0,1)$, the point $\boldsymbol{z}_\tau=(1-\tau)\boldsymbol{z}+\tau \boldsymbol{r}_{ln}$
remains feasible, and $f(\boldsymbol{z}_\tau)<f(\boldsymbol{z})$, a contradiction. Hence any
minimizer must lie on at least one active boundary $\mathrm{Circle}_{l\ell}$ with
$\ell\in \mathcal{D}_{ln}$.
Fix such an $\ell$. The unconstrained minimizer of $f$ on the circle
$\mathrm{Circle}_{l\ell}$ is the radial projection of $\boldsymbol{r}_{ln}$ onto the circle,
i.e., one of the points in $\mathcal{U}_{l\ell}$ in~\eqref{U_l}.
If this point is feasible with respect to all other constraints, it belongs
to $\mathcal U_{l\ell}$ and is optimal. Otherwise, the feasible subset of
$\mathrm{Circle}_{l\ell}$ is a finite union of closed arcs whose endpoints are
precisely the circle--circle intersections with $\mathrm{Circle}_{l\ell'}$,
$\forall \ell'\neq n,\ell$. Since the closest point on the full circle is infeasible, the minimum of $f$ over each feasible closed arc is attained at an endpoint; hence the minimizer must lie in $\mathcal W_{l\ell}$. Taking the union over all
$\ell\in \mathcal{D}_{ln}$ yields~\eqref{z_ln}.
\end{IEEEproof}

We next show that it suffices to inspect a \emph{finite} candidate set on the relevant circles. Intuitively, along any fixed circle $\mathrm{Circle}_{ln}$ the unconstrained minimizer of $f$ is the orthogonal projection of $\bm r_{ln}$ onto that circle in the radial direction; when this point is excluded by the other inequality constraints, the constrained minimum over the remaining arcs occurs at arc endpoints, which correspond to circle--circle intersection points.
The following two lemmas describe a more direct way of constructing these sets.
\begin{lemma}\label{lem:best-on-circle}
Fix $\ell\in\mathcal{N}\setminus\{n\}$ and minimize $f(\boldsymbol{z})=\left\|\boldsymbol{z}-\boldsymbol{r}_{l n}\right\|_{2}^{2}$ over $\mathrm{Circle}_{l\ell}$. The stationary points are exactly the two intersections of $\mathrm{Circle}_{l\ell}$ with $\mathrm{Line}_{l,\ell n}$, given by
\begin{equation}\label{eq:U-raw}
\widetilde{\mathcal{U}}_{l\ell}\ =\
\begin{cases}
\varnothing, & \bm r_{ln}=\bm z_{l\ell},\\
\bm z_{l\ell} \pm \dfrac{D(\bm r_{ln}-\bm z_{l\ell})}{\|\bm r_{ln}-\bm z_{l\ell}\|_2}, & \text{otherwise}.
\end{cases}
\end{equation}
Among these two points, the nearer one to $\bm r_{ln}$ is $\boldsymbol{z}_{l \ell}+D \frac{\boldsymbol{r}_{l n}-\boldsymbol{z}_{l \ell}}{\left\|\boldsymbol{r}_{l n}-\boldsymbol{z}_{l \ell}\right\|_{2}}$.
\end{lemma}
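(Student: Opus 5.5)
The plan is to treat this as an equality-constrained problem on the circle and apply the Lagrange multiplier rule; a trigonometric parametrization serves as a cross-check. Write the constraint as $g(\boldsymbol{z})=\|\boldsymbol{z}-\boldsymbol{z}_{l\ell}\|_2^2-D^2=0$. Since $D>0$, every point of $\mathrm{Circle}_{l\ell}$ satisfies $\nabla g(\boldsymbol{z})=2(\boldsymbol{z}-\boldsymbol{z}_{l\ell})\neq\boldsymbol{0}$. The linear independence constraint qualification therefore holds everywhere on the circle, so the stationary points of $f$ restricted to $\mathrm{Circle}_{l\ell}$ are exactly the feasible solutions of
\begin{equation*}
\boldsymbol{z}-\boldsymbol{r}_{ln}=\mu\,(\boldsymbol{z}-\boldsymbol{z}_{l\ell})\quad\text{for some }\mu\in\mathbb{R}.
\end{equation*}

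First I would rearrange this as $(1-\mu)(\boldsymbol{z}-\boldsymbol{z}_{l\ell})=\boldsymbol{r}_{ln}-\boldsymbol{z}_{l\ell}$ and split into cases.

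\emph{Case $\boldsymbol{r}_{ln}=\boldsymbol{z}_{l\ell}$.} The objective $f$ equals $D^2$ at every point of the circle. Every point is then critical, and there is no distinguished pair; this degenerate case corresponds to the convention $\widetilde{\mathcal{U}}_{l\ell}=\varnothing$ (and likewise $\mathrm{Line}_{l,\ell n}$ is undefined). This degenerate case is the one place where the statement is a convention rather than a consequence, and I would state explicitly that it is excluded.

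\emph{Case $\boldsymbol{r}_{ln}\neq\boldsymbol{z}_{l\ell}$.} Here $\mu\neq 1$. Hence $\boldsymbol{z}-\boldsymbol{z}_{l\ell}$ is a nonzero scalar multiple of $\boldsymbol{r}_{ln}-\boldsymbol{z}_{l\ell}$, i.e., $\boldsymbol{z}\in\mathrm{Line}_{l,\ell n}$. Imposing $\|\boldsymbol{z}-\boldsymbol{z}_{l\ell}\|_2=D$ gives $\boldsymbol{z}=\boldsymbol{z}_{l\ell}\pm D\,\boldsymbol{e}$ with $\boldsymbol{e}=(\boldsymbol{r}_{ln}-\boldsymbol{z}_{l\ell})/\|\boldsymbol{r}_{ln}-\boldsymbol{z}_{l\ell}\|_2$. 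Conversely, both points satisfy the multiplier equation with $\mu=1\mp\|\boldsymbol{r}_{ln}-\boldsymbol{z}_{l\ell}\|_2/D$. This gives the "exactly" claim and \eqref{eq:U-raw}.

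As a sanity check I would parametrize $\boldsymbol{z}=\boldsymbol{z}_{l\ell}+D(\cos t,\sin t)^T$. Then
\begin{equation*}
f=D^2+d^2-2Dd\cos(t-t_0),
\end{equation*}
where $d=\|\boldsymbol{r}_{ln}-\boldsymbol{z}_{l\ell}\|_2$ and $t_0$ is the angle of $\boldsymbol{e}$. Its derivative vanishes iff $\sin(t-t_0)=0$, i.e., $t=t_0$ or $t=t_0+\pi$; these are exactly the points $\boldsymbol{z}_{l\ell}+D\boldsymbol{e}$ and $\boldsymbol{z}_{l\ell}-D\boldsymbol{e}$ when $d>0$.

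Finally, to compare the two points, I would compute
\begin{equation*}
\|\boldsymbol{z}_{l\ell}\pm D\boldsymbol{e}-\boldsymbol{r}_{ln}\|_2=\|(\pm D-d)\boldsymbol{e}\|_2=|d\mp D|.
\end{equation*}
Since $|d-D|\le d+D$, with strict inequality whenever $d,D>0$, the "$+$" point is the nearer one; it is the global minimizer with value $(d-D)^2$, and the "$-$" point is the maximizer with value $(d+D)^2$.

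I do not anticipate a serious obstacle. The only care needed is checking the constraint qualification and handling the degenerate case $\boldsymbol{r}_{ln}=\boldsymbol{z}_{l\ell}$.
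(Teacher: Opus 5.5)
Your proof is correct, but its main line differs from the paper's. The paper uses no Lagrange multipliers. It writes $\boldsymbol{z}=\boldsymbol{z}_{l\ell}+D\boldsymbol{u}$ with $\|\boldsymbol{u}\|_2=1$ and expands $f$ as the constant $\|\boldsymbol{z}_{l\ell}-\boldsymbol{r}_{ln}\|_2^2+D^2$ plus the linear term $2D\,\boldsymbol{u}^T(\boldsymbol{z}_{l\ell}-\boldsymbol{r}_{ln})$. It then reads off that this is minimized when $\boldsymbol{u}$ points along $\boldsymbol{r}_{ln}-\boldsymbol{z}_{l\ell}$. That is your trigonometric sanity check in vector form: your $-2Dd\cos(t-t_0)$ is exactly this linear term.

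The paper's route is shorter, needs no constraint qualification, and directly yields the global minimizer, which is what Lemma~\ref{lem:existence} actually uses. However, it leaves implicit that these two radial points are the only stationary points. That rests on the fact that a nonzero linear functional on the unit circle has exactly two critical points. The paper also divides by $\|\boldsymbol{r}_{ln}-\boldsymbol{z}_{l\ell}\|_2$ without addressing the coincident case.

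Your LICQ/multiplier argument gives a genuine two-sided characterization of the stationary set, including the converse check with $\mu=1\mp d/D$. Your explicit distances $|d-D|$ versus $d+D$ settle the ``nearer'' claim and identify which point is the minimizer and which the maximizer. You are also right that when $\boldsymbol{r}_{ln}=\boldsymbol{z}_{l\ell}$ every point of the circle is stationary. So $\widetilde{\mathcal{U}}_{l\ell}=\varnothing$ there is a convention, not a consequence of the lemma's first sentence.
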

\begin{IEEEproof}
Parametrize $\bm z=\bm z_{l\ell} + D\bm u$ with $\|\bm u\|_2=1$. Then
\begin{equation}
\|\bm z-\bm r_{ln}\|_2^2=\bigl\|\bm z_{l\ell}-\bm r_{ln}\bigr\|_2^2 + D^2 + 2D\,\bm u^T(\bm z_{l\ell}-\bm r_{ln}),
\end{equation}
which is minimized when $\boldsymbol{u}$ aligns with $-\dfrac{\boldsymbol{z}_{l\ell}-\boldsymbol{r}_{ln}}{\lVert \boldsymbol{z}_{l\ell}-\boldsymbol{r}_{ln}\rVert_2}$, yielding the two radial intersections above, and selecting the branch corresponding to the positive sign yields the intersection nearer to $\boldsymbol{r}_{ln}$.
\end{IEEEproof}
\begin{lemma}[Arc removal by a single inequality]\label{lem:arc-removal}
Fix $\ell \in \mathcal N\setminus\{n\}$ and any $k\in \mathcal N\setminus\{n,\ell\}$, the constraint $\|\boldsymbol{z} - \boldsymbol{z}_{lk}\|_2 \ge D$
removes at most one open arc from $\mathrm{Circle}_{l\ell}$. The endpoints of that arc are exactly the intersections of $\mathrm{Circle}_{l\ell}$ and
$\mathrm{Circle}_{lk}$. Let $d \triangleq \| \boldsymbol{z}_{lk} - \boldsymbol{z}_{l\ell} \|_2$. Then, the set of
endpoints $\mathcal W_{l,\ell k}$ is
\begin{equation}\label{W_l_k}
\widetilde{\mathcal{W}}_{l,\ell k}\! =\!
\begin{cases}
\varnothing, & d = 0,\\
\big\{ \dfrac{\boldsymbol{z}_{l\ell}+ \boldsymbol{z}_{lk}}{2}\big\}, & d = 2D,\\
\dfrac{\boldsymbol{z}_{l\ell}+ \boldsymbol{z}_{lk}}{2} \pm  \dfrac{1}{2}\sqrt{4D^2 - d^2}\hat {\boldsymbol{n}}_{\ell k}, & 0<d<2D,\\
\varnothing, & d>2D,
\end{cases}
\end{equation}
where $\hat{\boldsymbol{n}}_{\ell k}$ is any unit vector orthogonal to $\boldsymbol{z}_{lk}-\boldsymbol{z}_{l\ell}$. For example,
writing $\boldsymbol{z}_{l\ell}=[x_{l\ell}, y_{l\ell}]^T$ and $\boldsymbol{z}_{lk}=[x_{lk},y_{lk}]^T$,
\vspace{-0.5 em}\begin{equation}
\hat{\boldsymbol{n}}_{\ell k}= \frac{1}{\|\boldsymbol{z}_{lk}-\boldsymbol{z}_{l\ell}\|_2}
\begin{bmatrix} y_{lk} - y_{l\ell}\\ -(x_{lk} - x_{l\ell})\end{bmatrix}.
\vspace{-0.5 em}\end{equation}
\end{lemma}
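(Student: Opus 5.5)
I would prove Lemma~\ref{lem:arc-removal} by reducing the intersection of $\mathrm{Circle}_{l\ell}$ with the excluded open disk $\{\boldsymbol z:\|\boldsymbol z-\boldsymbol z_{lk}\|_2<D\}$ to a single linear inequality in the circle's angular parameter. As in the proof of Lemma~\ref{lem:best-on-circle}, I parametrize the circle as $\boldsymbol z=\boldsymbol z_{l\ell}+D\boldsymbol u$ with $\|\boldsymbol u\|_2=1$, and write $\boldsymbol v\triangleq\boldsymbol z_{lk}-\boldsymbol z_{l\ell}$, $d=\|\boldsymbol v\|_2$. Expanding the square,
\begin{equation*}
\|\boldsymbol z-\boldsymbol z_{lk}\|_2^2=D^2+d^2-2D\,\boldsymbol u^T\boldsymbol v .
\end{equation*}
Hence the removed set $\{\|\boldsymbol z-\boldsymbol z_{lk}\|_2<D\}$ on the circle is exactly $\{\boldsymbol u:\ \boldsymbol u^T\boldsymbol v> d^2/(2D)\}$. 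The point of this step is that the constraint becomes a half-plane condition on the unit vector $\boldsymbol u$.

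The case analysis then follows directly.
\begin{itemize}
\item[(a)] If $d=0$, the condition reads $0>0$, which is never satisfied. Nothing is removed and there are no endpoints, so the set is $\varnothing$. This is consistent with the convention adopted in the Definition for coincident centers.
\item[(b)] If $d>0$, write $\boldsymbol u^T\boldsymbol v=d\cos\alpha$, where $\alpha$ is the angle between $\boldsymbol u$ and $\boldsymbol v$. The condition becomes $\cos\alpha>d/(2D)$.
  \begin{itemize}
  \item For $d\ge 2D$, the right-hand side is at least $1$, so no point is removed.
  \item For $0<d<2D$, the solution set is $\{|\alpha|<\alpha_0\}$ with $\alpha_0=\arccos(d/(2D))\in(0,\pi/2)$. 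This is a single open arc centered in the direction $\boldsymbol v/d$, which proves the claim of removing at most one open arc.
  \end{itemize}
\end{itemize}

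The endpoints of the removed arc are the points with $\cos\alpha=d/(2D)$, i.e., $\|\boldsymbol z-\boldsymbol z_{lk}\|_2=D$. These are precisely $\mathrm{Circle}_{l\ell}\cap\mathrm{Circle}_{lk}$. To obtain the explicit formula, I decompose $\boldsymbol z-\boldsymbol z_{l\ell}=D\cos\alpha_0\,\boldsymbol v/d\pm D\sin\alpha_0\,\hat{\boldsymbol n}_{\ell k}$. Since $D\cos\alpha_0=d/2$, the first component places the point at the midpoint $(\boldsymbol z_{l\ell}+\boldsymbol z_{lk})/2$. Since $D\sin\alpha_0=\tfrac12\sqrt{4D^2-d^2}$, the second component gives the stated offset along $\hat{\boldsymbol n}_{\ell k}$. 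Finally, one checks that the given $\hat{\boldsymbol n}_{\ell k}$ has unit norm and is orthogonal to $\boldsymbol v$, which is a one-line dot-product computation.

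The only delicate point is the tangency case $d=2D$, and the statement's wording has to be handled carefully there. In this case the strict inequality $\cos\alpha>1$ is empty, so the removed open arc is empty, yet the two circles touch at the single point $(\boldsymbol z_{l\ell}+\boldsymbol z_{lk})/2$. I would state explicitly that the "endpoint set" $\widetilde{\mathcal W}_{l,\ell k}$ is defined as $\mathrm{Circle}_{l\ell}\cap\mathrm{Circle}_{lk}$. With that definition the singleton case is correct: it is the degenerate arc of zero length, obtained as the limit of the formula as $d\to2D$. Likewise, for $d>2D$ the set is empty because the circles do not meet.
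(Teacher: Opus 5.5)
Your proof is correct and follows the paper's proof essentially step for step: the same parametrization $\boldsymbol z=\boldsymbol z_{l\ell}+D\boldsymbol u$, the same reduction of the removed set to $\boldsymbol u^T\boldsymbol v>d^2/(2D)$, and the same case split. The only cosmetic difference is that you locate the endpoints via the angle $\alpha_0=\arccos(d/(2D))$, whereas the paper subtracts the two circle equations to get the radical line through the midpoint and then applies Pythagoras. One small tidy-up: if you define $\widetilde{\mathcal W}_{l,\ell k}$ as $\mathrm{Circle}_{l\ell}\cap\mathrm{Circle}_{lk}$ to cover the tangent case $d=2D$, you must state the coincident-center convention explicitly in that definition, since for $d=0$ this intersection is the entire circle rather than $\varnothing$.
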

\begin{IEEEproof}
See Appendix~\ref{proof_A}.
\end{IEEEproof}

%
%
\begin{figure}[t]
	\centering\vspace{-1.em}
	\includegraphics[width=0.5\textwidth]{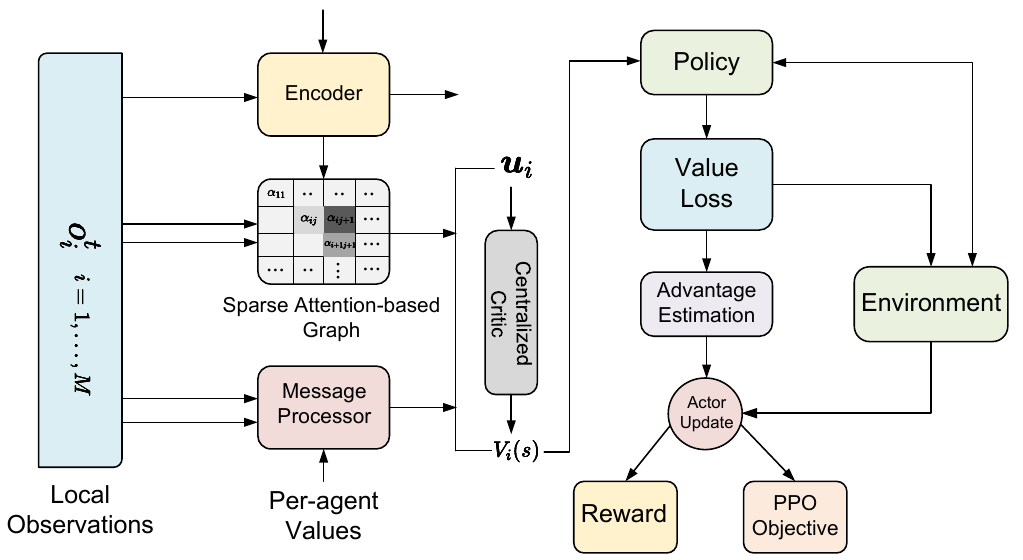}
    \vspace{-2.em}
	\caption{\scopeedit{The heterogeneous MARL architecture}, including a sparse attention-based graph for adaptive agent communication and a centralized critic for consistent advantage estimation across different agent roles.} 
\label{fig:marl_model}
\end{figure}

The auxiliary variable subproblem $(\mathcal{P} \text{1-a})$ can be solved efficiently by exploiting its geometric structure. The optimal solution of this non-convex problem is guaranteed to lie on the boundary of the feasible region, which is defined by the exteriors of multiple exclusion disks. This observation reduces the continuous optimization to a finite search over candidate points, consisting of radial projections and circle--circle intersection points. Having established this geometric mechanism for enforcing the distance-based constraints, we now shift our focus to the second component of the alternating optimization procedure: updating the original system variables \scopeedit{within a heterogeneous multi-agent framework.}
\vspace{-0.45cm}
\subsection{\scopeedit{Heterogeneous Coordination Network}}

The heterogeneous MARL schemes developed in \cite{Zhong2024HARL, 9351684, yu2025heterogeneous} can solve distributed optimization efficiently. However, they do not explicitly capture inter-AP information exchange and do not scale gracefully with increasing agent population, which can lead to prohibitive interaction overhead in large networks. To address this limitation, we propose \scopeedit{a heterogeneous coordination architecture} that integrates a dynamic-interaction graph critic with a LIIR module that learns a distinct intrinsic reward for each AP agent. \scopeedit{Together, these components support coordinated policy learning under centralized training with decentralized online execution.}

The alternating optimization approach for problem $(\mathcal{P}\text{1})$ involves two main steps. In Section~\ref{pro_recons}, we developed a geometric method to solve for the auxiliary variables $\left\{\boldsymbol{z}_{ln}\right\}_{\forall l, n}$ while keeping the original variables fixed. We now address the other half of this procedure, which is the optimization of the original variables: the downlink power allocations $\{\rho_{lk}\}_{\forall l, k}$ and the physical MA positions $\{\boldsymbol{r}_{ln}\}_{\forall l, n}$, while the auxiliary variables $\{\boldsymbol{z}_{ln}\}_{\forall l, n}$ are held constant. The formulation is:
\begin{align}
(\mathcal{P} \text{1-b}): \quad \max _{\rho_{l k}, \boldsymbol{r}_{ln}} &\sum_{k=1}^{K} R_k-\varrho\sum_{l=1}^{L}\sum_{n=1}^{N}\!\left\|\boldsymbol{r}_{ln}-\boldsymbol{z}_{ln}\right\|_{2}^{2}\\
\text { s.t. } &\sum_{k=1}^{K} \rho_{l k} \leq P_{l}^{\max }, \quad \rho_{l k} \geq 0, \quad \forall l, k,\\
&\boldsymbol{r}_{ln} \in \mathcal{C}, \quad \forall l, n.
\end{align}

While subproblem $(\mathcal{P} \text{1-b})$ could be addressed leveraging centralized optimization methods at the CPU, such an approach scales poorly with the number of APs and UEs, creating a significant communication bottleneck for acquiring system-wide state information and disseminating control decisions. \scopeedit{To support decentralized online policy execution after centralized training}, we recast this optimization task within a MARL framework. This paradigm naturally matches the distributed structure of CF systems, allowing APs to learn effective local control policies from local observations while collectively optimizing a global objective. Moreover, MARL includes a range of efficient algorithms, such as MAPPO, in which cooperative agents iteratively interact with the environment, receive reward-based feedback, and update their policies to progressively improve performance.

In this MARL setting, we model decision making at each AP using two role-specialized agent types: a power-control agent responsible for downlink power allocation and a positioning agent responsible for adjusting MA locations. This heterogeneous structure enables task-specific policy learning for the distinct control dimensions. To facilitate efficient and coordinated learning among the resulting large agent population, we further introduce a reward-decoupling mechanism that mitigates credit-assignment ambiguity and stabilizes training.

\begin{revblock}
Each AP $l$ is assigned two role-specialized agents: $(l,\rho)$ for AP-level power allocation and $(l,r)$ for jointly controlling the $N$ MA displacements in $\mathbb{R}^{2N}$. Thus, $\mathcal{M}=\mathcal{M}_{\rho}\cup\mathcal{M}_{r}$ with $\mathcal{M}_{\rho}=\{(l,\rho)\}_{l=1}^{L}$ and $\mathcal{M}_{r}=\{(l,r)\}_{l=1}^{L}$, yielding $M=|\mathcal{M}|=2L$.
\end{revblock}
As shown in Fig.~\ref{fig:marl_model}, we consider a fully cooperative multi-agent MDP with $M$ agents indexed by $i\in\{1,\dots,M\}$.
An MDP is characterized by a tuple $<\mathcal{O}_{i}, \mathcal{S}, \mathcal{A}_{i}, \mathcal{R}, \mathcal{T}, \gamma>$, where $\mathcal{O}_{i}$, $\mathcal{S}$, and $\mathcal{A}_{i}$ represent the observation space, state space, and action space of agent $i$, respectively.
At time $t$, the environment is in global state $\boldsymbol{s}_t\in\mathcal{S}$; agent $i$ observes $\boldsymbol{o}^i_t\in\mathcal{O}_i$ and takes action $\boldsymbol{a}^i_t\in\mathcal{A}_i$.
The joint action is $\boldsymbol{u}_t\coloneqq(\boldsymbol{a}_t^1,\dots,\boldsymbol{a}_t^M)$ and the joint policy factorizes as
\vspace{-0.5 em}\begin{equation}
\pi_{\boldsymbol{\theta}}(\boldsymbol{u}_{t}\mid \boldsymbol{s}_{t})=\prod_{i=1}^M \pi_{\theta_i}(\boldsymbol{a}_t^i\mid \boldsymbol{o}_t^i), \qquad \boldsymbol{\theta}=(\theta_1,\dots,\theta_M),
\vspace{-0.5 em}\end{equation}
with discount factor $\gamma\in(0,1)$. Agents are updated sequentially in \revise{a fixed or sampled global order} per iteration.

\begin{revblock}
For each agent $i$ associated with AP $l$, the local observation is derived from the instantaneous CSI at AP $l$. The local CSI matrix is given by
\begin{equation}
\boldsymbol{H}_{l,t}
=
\left[
\boldsymbol{h}_{l1}(\boldsymbol{r}_{l,t}),
\boldsymbol{h}_{l2}(\boldsymbol{r}_{l,t}),
\ldots,
\boldsymbol{h}_{lK}(\boldsymbol{r}_{l,t})
\right]
\in\mathbb{C}^{N\times K}.
\end{equation}
The complex-valued CSI is represented by stacking its real and imaginary components and applying feature-wise normalization:
\begin{equation}
\boldsymbol{o}_t^i
=
\operatorname{Norm}
\left(
\left[
\operatorname{vec}\left(\Re\{\boldsymbol{H}_{l,t}\}\right),
\operatorname{vec}\left(\Im\{\boldsymbol{H}_{l,t}\}\right)
\right]
\right)\in\mathbb{R}^{2NK},
\end{equation}
where $\operatorname{Norm}(\cdot)$ denotes feature-wise normalization. This observation is shared by the power-control and MA-positioning actors of AP $l$. During training only, the centralized critic may use the aggregated observations $\boldsymbol{s}_t=\{\boldsymbol{o}_t^i\}_{i=1}^{M}$, or equivalently the global CSI collection $\{\boldsymbol{H}_{l,t}\}_{l=1}^{L}$.
\end{revblock}

\begin{revblock}
Both role-specialized action spaces are continuous. For AP $l$, the power-control actor generates the normalized vector $\boldsymbol{a}_{(l,\rho),t}=[a_{l1,t}^{\rho},\ldots,a_{lK,t}^{\rho}]^{T}\in\mathcal{A}_{(l,\rho)}\subset(0,1)^K$, subject to $\sum_{k=1}^{K}a_{lk,t}^{\rho}\le 1$.
The corresponding transmit powers are obtained by scaling the action as $\rho_{lk,t}=P_l^{\max}a_{lk,t}^{\rho}$, $k=1,\ldots,K$, which ensures $\rho_{lk,t}\ge0$ and $\sum_{k=1}^{K}\rho_{lk,t}\le P_l^{\max}$.
For the positioning role, AP $l$ outputs $\boldsymbol{a}_{(l,r),t}\in[-1,1]^{2N}$, where each pair of entries represents the two-dimensional displacement of one MA. The MA positions are updated as
\begin{equation}
\boldsymbol{r}_{ln,t+1}
=
\Pi_{\mathcal C}
\left(
\boldsymbol{r}_{ln,t}
+
\Delta_{\max}
\boldsymbol{a}_{ln,t}^{r}
\right),
\qquad n=1,\ldots,N,
\label{eq:position_action_mapping}
\end{equation}
where $\Pi_{\mathcal C}$ projects onto the feasible movement region $\mathcal C$, and $\Delta_{\max}$ denotes the maximum displacement per step.
\end{revblock}

At time $t$, agent $i$ receives an observation $\boldsymbol{o}^{i}_{t}\in\mathbb{R}^{d_o}$. A shared encoder \(\phi_{\mathrm{enc}}:\mathbb{R}^{d_o}\!\to\!\mathbb{R}^{d_e}\) produces node embeddings
\begin{equation}\label{e_i}
\boldsymbol{e}^i_t=\phi_{\mathrm{enc}}(\boldsymbol{o}^{i}_{t})\in\mathbb{R}^{d_e}.
\end{equation}
From each $\boldsymbol{e}^{i}_{t}$, we compute queries, keys, and values with shared linear maps
\begin{equation}
\boldsymbol{q}^i_t=\boldsymbol{W}_q \boldsymbol{e}^i_t\in\mathbb{R}^{d_k},\quad
\boldsymbol{k}^i_t=\boldsymbol{W}_k \boldsymbol{e}^i_t\in\mathbb{R}^{d_k},\quad
\boldsymbol{v}^i_t=\boldsymbol{W}_v \boldsymbol{e}^i_t\in\mathbb{R}^{d_v},
\end{equation}
where $\boldsymbol{W}_q\in\mathbb{R}^{d_k\times d_e}$, $\boldsymbol{W}_k\in\mathbb{R}^{d_k\times d_e}$, and $\boldsymbol{W}_v\in\mathbb{R}^{d_v\times d_e}$.
Define scaled dot-product compatibility scores with self-edges masked:
\begin{equation}
c_{ij,t}=
\begin{cases}
\dfrac{{\boldsymbol{q}^i_t}^{T} \boldsymbol{k}^j_t}{\sqrt{d_k}}, & j\neq i,\\
-\infty, & j=i.
\end{cases}
\end{equation}
We further introduce a temperature parameter $T>0$ and define $\tilde{c}_{ij,t} = c_{ij,t}/T$.
\begin{revblock}
Let \(\mathcal{N}_{i}^{\rm att}(t)\subset\{1,\dots,M\}\setminus\{i\}\) denote the indices of the $\kappa$ largest entries in $\{\tilde{c}_{ij,t}\}_{j\neq i}$. Pure top-$\kappa$ pruning may disconnect a directed interaction graph when all attention scores associated with an AP/agent are weak. To avoid this issue, we use a connectivity-preserving sparse attention mask. Each node also keeps a fixed low-overhead backbone neighbor set $\mathcal{N}_{i}^{\rm back}$, constructed from a connected nearest-neighbor or ring-type AP backbone and mapped to the corresponding role-specialized agent nodes. The active neighbor set is
\begin{equation}
\mathcal{N}_{i}^{\rm act}(t)
=
\mathcal{N}_{i}^{\rm att}(t)
\cup
\mathcal{N}_{i}^{\rm back}.
\label{eq:connectivity_neighbor_set}
\end{equation}
Since the backbone graph is connected, the union graph $G_t=(\mathcal{V},\mathcal{E}_t)$ with $(i,j)\in\mathcal{E}_t$ iff $j\in\mathcal{N}_{i}^{\rm act}(t)$ remains weakly connected for every $t$. Thus, top-$\kappa$ pruning suppresses redundant non-backbone links but cannot isolate an AP/agent during training.
We form a row-stochastic, backbone-augmented sparse adjacency by normalizing over the active neighbor set:
\begin{equation}\label{alpha_ij}
\alpha_{ij,t}=
\frac{\exp(\tilde{c}_{ij,t})}{\sum_{\ell\in \mathcal{N}_{i}^{\rm act}(t)}\exp(\tilde{c}_{i\ell,t})}\,\boldsymbol{1}\!\left\{j\in\mathcal{N}_{i}^{\rm act}(t)\right\},
\end{equation}
where
$\sum_{j=1}^{M}\alpha_{ij,t}=1$.
The directed message arriving at node \(i\) is the convex combination
\begin{equation}
\boldsymbol{m}^i_t=\sum_{j\in \mathcal{N}_{i}^{\rm act}(t)} \alpha_{ij,t}\,\boldsymbol{v}^j_t\in\mathbb{R}^{d_v}.
\end{equation}
The above attention equations implement a single-layer graph-attention critic on a time-varying, connectivity-preserving sparse directed graph \(G_t=(\mathcal{V},\mathcal{E}_t)\) with \((i,j)\in\mathcal{E}_t\) iff \(j\in\mathcal{N}_{i}^{\rm act}(t)\).
\end{revblock}
A per-node update then fuses local and aggregated features,
\begin{equation}
\boldsymbol{\omega}^i_t=\sigma \bigl(\boldsymbol{W}_m\,[\boldsymbol{e}^i_t; \boldsymbol{m}^i_t]+\boldsymbol{b}_m\bigr)\in\mathbb{R}^{d_u},
\end{equation}
and a scalar value head produces per-agent values and the team value
\begin{equation}
V_i(\boldsymbol{s}_t)=\psi(\boldsymbol{\omega}^i_t),\qquad
V_{\mathrm{ex}}(\boldsymbol{s}_t)=\sum_{i=1}^{M} V_i(\boldsymbol{s}_{t}).
\end{equation}
The extrinsic TD advantage used to update the actors is
\begin{equation}
A_{\mathrm{ex}}(\boldsymbol{s}_t,\boldsymbol{u}_t)= r_{\mathrm{ex},t}+\gamma V_{\mathrm{ex}}(\boldsymbol{s}_{t+1})-V_{\mathrm{ex}}(\boldsymbol{s}_t).
\end{equation}

\begin{proposition}[Permutation invariance]\label{Permutation}
Let $\Pi$ be any permutation of agent indices with permutation matrix $\boldsymbol{P} \in \{0, 1\}^{M \times M}$. If we relabel agents by $\Pi$, equivalently, apply $\boldsymbol{P}$ to the rows of all per-agent quantities, then the critic's scalar output is invariant:
\begin{equation}
    V_{\text{ex}}(\boldsymbol{s}_t) = V_{\text{ex}}(\Pi \boldsymbol{s}_t), \quad \forall t.
\end{equation}
\end{proposition}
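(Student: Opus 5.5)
The plan is to show that every stage of the critic pipeline is \emph{permutation equivariant}, i.e., relabelling the agents by $\Pi$ only permutes the per-agent outputs. The final sum $V_{\mathrm{ex}}=\sum_i V_i$ is a symmetric function of the per-agent values, so it is then unchanged. Concretely, write the relabelled state as $\Pi\boldsymbol{s}_t$, with observations $\boldsymbol{o}'^{\,i}_t=\boldsymbol{o}^{\Pi^{-1}(i)}_t$. Denote every primed quantity as the one computed from $\Pi\boldsymbol{s}_t$. The goal is the chain of identities $\boldsymbol{e}'^{\,i}=\boldsymbol{e}^{\Pi^{-1}(i)}$, then $\alpha'_{ij}=\alpha_{\Pi^{-1}(i)\Pi^{-1}(j)}$, then $\boldsymbol{m}'^{\,i}=\boldsymbol{m}^{\Pi^{-1}(i)}$, then $\boldsymbol{\omega}'^{\,i}=\boldsymbol{\omega}^{\Pi^{-1}(i)}$, and finally $V'_i=V_{\Pi^{-1}(i)}$.

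First, the encoder $\phi_{\mathrm{enc}}$ in \eqref{e_i} and the maps $\boldsymbol{W}_q,\boldsymbol{W}_k,\boldsymbol{W}_v$ are shared across agents and act row-wise. Hence the embeddings, queries, keys and values are simply permuted. The scores satisfy $c'_{ij}=c_{\Pi^{-1}(i)\Pi^{-1}(j)}$ because they depend only on the pair $(\boldsymbol{q}^i,\boldsymbol{k}^j)$. The self-mask $j=i$ is preserved because $\Pi$ is a bijection, and the temperature scaling is uniform across pairs.

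Second, and this is the step I expect to be the main obstacle, the active neighbour sets must transform consistently: $\mathcal{N}'^{\,\mathrm{act}}_i=\Pi(\mathcal{N}^{\mathrm{act}}_{\Pi^{-1}(i)})$.
\begin{itemize}
\item For the top-$\kappa$ part $\mathcal{N}^{\mathrm{att}}_i$ this follows from the permuted scores, assuming ties are broken in a label-independent way. I would either assume ties occur with probability zero or state this as a side condition.
\item The backbone set $\mathcal{N}^{\mathrm{back}}_i$ is fixed by AP geometry, not by label. I would interpret ``relabelling'' as carrying each agent's attributes, including its backbone adjacency, along with it, so that $\mathcal{N}'^{\,\mathrm{back}}_i=\Pi(\mathcal{N}^{\mathrm{back}}_{\Pi^{-1}(i)})$. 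If the backbone were tied to raw indices, invariance could fail, so this interpretation has to be made explicit.
\end{itemize}
Given consistent neighbour sets, the normalization in \eqref{alpha_ij} runs over the corresponding permuted set. This gives $\alpha'_{ij}=\alpha_{\Pi^{-1}(i)\Pi^{-1}(j)}$. Reindexing the sum with $j=\Pi(j_0)$ then yields $\boldsymbol{m}'^{\,i}=\boldsymbol{m}^{\Pi^{-1}(i)}$.

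Third, the fusion layer $\sigma(\boldsymbol{W}_m[\cdot;\cdot]+\boldsymbol{b}_m)$ and the value head $\psi$ are shared and applied node-wise, so $V'_i=V_{\Pi^{-1}(i)}$. Finally,
\[
V_{\mathrm{ex}}(\Pi\boldsymbol{s}_t)=\sum_{i=1}^M V_{\Pi^{-1}(i)}(\boldsymbol{s}_t)=\sum_{i=1}^M V_i(\boldsymbol{s}_t)=V_{\mathrm{ex}}(\boldsymbol{s}_t),
\]
since $\Pi^{-1}$ is a bijection of $\{1,\dots,M\}$. This holds for every $t$, which completes the argument.
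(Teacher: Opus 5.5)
Your proof is correct and is essentially the paper's argument in index form: the paper tracks the same layer-by-layer equivariance in matrix notation ($\boldsymbol{E}'_t=\boldsymbol{P}\boldsymbol{E}_t$, $\boldsymbol{C}'_t=\boldsymbol{P}\boldsymbol{C}_t\boldsymbol{P}^T$, $\boldsymbol{A}'_t=\boldsymbol{P}\boldsymbol{A}_t\boldsymbol{P}^T$, $\boldsymbol{M}'_t=\boldsymbol{P}\boldsymbol{M}_t$, $\boldsymbol{U}'_t=\boldsymbol{P}\boldsymbol{U}_t$) and finishes with $\mathbf{1}^T\boldsymbol{P}=\mathbf{1}^T$. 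Your handling of the backbone set (it must be relabelled together with the agents) and of top-$\kappa$ ties goes beyond the paper, which only asserts that row-wise top-$\kappa$ and softmax are equivariant; note, however, that ties here are structural rather than probability-zero, because the two role agents of an AP share one observation and therefore have identical keys, so you need your side-condition option (e.g., keeping all entries tied with the $\kappa$-th largest score) rather than the probability-zero one.
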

\begin{IEEEproof}
See Appendix~\ref{proof_B}.
\end{IEEEproof}
Permutation invariance means the critic treats agents as an unordered set, which is essential in cell-free deployments where agent identities carry no semantics beyond their observations.

Let $r_{\mathrm{ex}}(\boldsymbol{s}_t,\boldsymbol{u}_t)$ denote the per-step \emph{extrinsic} team reward, and let $r_{\mathrm{in}}^i(\boldsymbol{o}_t^i,\boldsymbol{a}_t^i;\eta_i)$ be an \emph{intrinsic} shaping reward produced by agent $i$ with parameters $\eta_i$.
Concretely, following the optimization in $(\mathcal{P} \text{1-b})$, we instantiate
\begin{equation}
\label{eq:extrinsic}
r_{\mathrm{ex}}(\boldsymbol{s}_t,\boldsymbol{u}_t) \triangleq \sum_{k=1}^{K} R_k(\boldsymbol{s}_t,\boldsymbol{u}_t)-\varrho\,\sum_{l=1}^{L}\sum_{n=1}^{N}\left\|\boldsymbol{r}_{ln,t}-\boldsymbol{z}_{ln,t}\right\|_2^2,
\end{equation}
where $R_k(\boldsymbol{s}_t,\boldsymbol{u}_t)$ denotes the downlink rate of user $k$ at time $t$ and $\varrho>0$ is the proximity weight coefficient.
\label{par:reward_scope}
\scopeedit{The extrinsic team reward consists of the nominal sum rate and a geometric feasibility penalty, while practical factors such as movement delay, actuation energy, and CSI/control overhead are not included in the reward formulation.}
\label{par:reward_scope_end}
The proxy reward used when updating agent $i$ mixes extrinsic and its own intrinsic rewards:
\begin{equation}
\label{eq:proxy_reward}
r_{\mathrm{px}}^i(\boldsymbol{s}_t,\boldsymbol{u}_t) \triangleq r_{\mathrm{ex}}(\boldsymbol{s}_t,\boldsymbol{u}_t)\;+\;\lambda\,r_{\mathrm{in}}^i(\boldsymbol{o}_t^i,\boldsymbol{a}_t^i;\eta_i),
\end{equation}
where $\lambda>0$ is a hyper-parameter that balances the extrinsic team reward and the distinct intrinsic reward.

Then, we define the discounted extrinsic return as
\begin{equation}
R_{\mathrm{ex}, t} \triangleq \sum_{\ell=0}^{\infty} \gamma^{\ell} r_{\mathrm{ex}}\left(\boldsymbol{s}_{t+\ell}, \boldsymbol{u}_{t+\ell}\right),
\end{equation}
and, accordingly, the associated state-value function \rev{under the candidate policy $\pi_{\boldsymbol{\theta}}$} as
\begin{equation}
V_{\mathrm{ex}}(\boldsymbol{s}) \triangleq \mathbb{E}_{\pi_{\boldsymbol{\theta}}}\left[R_{\mathrm{ex}, t} \mid \boldsymbol{s}_{t}=\boldsymbol{s}\right].
\end{equation}
By the same token, the discounted proxy return is
\begin{equation}
R_{\mathrm{px}, t}^{i} \triangleq \sum_{\ell=0}^{\infty} \gamma^{\ell} r_{\mathrm{px}}^{i}\left(\boldsymbol{s}_{t+\ell}, \boldsymbol{u}_{t+\ell}\right),
\end{equation}
and the corresponding state-value function is
\begin{equation}
V_{\mathrm{px}}^{i}(\boldsymbol{s}) \triangleq \mathbb{E}_{\pi_{\boldsymbol{\theta}}}\left[R_{\mathrm{px}, t}^{i} \mid \boldsymbol{s}_{t}=\boldsymbol{s}\right].
\end{equation}
For any subset $\EuScript{S}\subseteq\{1,\dots,M\}$, define the $i$-th agent proxy $Q$ marginalized over the complementary agents:
\begin{equation}
\label{eq:marginal_q}
Q_{\mathrm{px},\EuScript{S}}^{\,i}(\boldsymbol{s},\boldsymbol{a}_{\EuScript{S}})
\!=\!\mathbb{E}_{\boldsymbol{a}_{-\EuScript{S}}\sim \pi_{\boldsymbol{\theta},-\EuScript{S}}}\!\left[
\sum_{t\ge 0}\gamma^t r_{\mathrm{px}}^{i}(\boldsymbol{s}_t,\boldsymbol{u}_t)\!\Bigm|\!
\substack{\boldsymbol{s}_0=\boldsymbol{s}\\ \boldsymbol{a}_{\EuScript{S},0}=\boldsymbol{a}_{\EuScript{S}}}
\right],
\end{equation}
where $\boldsymbol{a}_{\EuScript{S}}$ stacks the actions of agents in $\EuScript{S}$ and $-\EuScript{S}$ denotes its complement.
During the sequential update at step $i$, let the prefix set be $\EuScript{P}_i\coloneqq\{1,\dots,i-1\}$.
We define the HAPPO proxy advantage for agent $i$ by the marginal $Q$-difference
\begin{equation}
\label{eq:proxy_advantage}
A_{\mathrm{px}}^{i}(\boldsymbol{s},\boldsymbol{a}_{\EuScript{P}_{i}},\boldsymbol{a}^i)=\!Q_{\mathrm{px},\EuScript{P}_{i}\cup\{i\}}^{i}(\boldsymbol{s},\boldsymbol{a}_{\EuScript{P}_{i}},\boldsymbol{a}^i)-Q_{\mathrm{px}, \EuScript{P}_{i}}^{i}(\boldsymbol{s},\boldsymbol{a}_{\EuScript{P}_{i}}).
\end{equation}
$A_{\mathrm{px}}^{i}$ measures the value gain of selecting $\boldsymbol{a}^i$ for the current agent $i$ given the already-updated prefix actions $\boldsymbol{a}_{\EuScript{P}_{i}}$, relative to marginalizing $\boldsymbol{a}^i$ out.
In practice, we estimate~\eqref{eq:proxy_advantage} with a centralized proxy critic for agent $i$ and GAE($\lambda_{\mathrm{GAE}}$), and we stop gradients through critics in all policy updates.

When updating agent $i$ in iteration $k$, keep the prefix policies fixed at their new parameters $\{\theta_1^{k+1},\dots,\theta_{i-1}^{k+1}\}$ and the rest at their old parameters $\{\theta_i^k,\dots,\theta_M^k\}$.
Define the per-step importance ratio
\begin{equation}
\label{eq:ratio}
r_t^{\,i}(\theta_i)=\frac{\pi_{\theta_i}(\boldsymbol{a}_t^i\mid \boldsymbol{o}_t^i)}{\pi_{\theta_i^{k}}(\boldsymbol{a}_t^i\mid \boldsymbol{o}_t^i)}.
\end{equation}
The HAPPO objective for agent $i$ is
\begin{equation}
\label{eq:happo_obj}
\max_{\theta_i}\;
\mathbb{E}\!\left[
\begin{aligned}
&\min\bigl(r_t^{i}(\theta_i)\,A_{\mathrm{px}}^{i},
             \mathrm{clip}_{[1-\epsilon,1+\epsilon]}\bigl(r_t^{i}(\theta_i)\bigr)A_{\mathrm{px}}^{i}\bigr)\\
&+\beta\mathcal{H}\bigl(\pi_{\theta_i}(\cdot\mid \boldsymbol{o}_t^i)\bigr)
\end{aligned}
\right],
\end{equation}
where $\epsilon>0$ is the PPO clipping range, $\beta\ge 0$ an entropy weight, and $\mathcal{H}$ the Shannon entropy.                               
We typically perform a few epochs of stochastic gradient ascent on~\eqref{eq:happo_obj} using minibatches from the current rollout, with advantage whitening and per-agent KL monitoring.
\textcolor{black}{\eqref{eq:happo_obj} is the PPO-Clip update used in the implementation and in Algorithm~\ref{alg:marl_scalable}. \eqref{eq:trust_region} is the KL-constrained trust-region form used to state the monotonic-improvement guarantee in Theorem~\ref{thm:bias_aware}. Thus, Theorem~\ref{thm:bias_aware} directly justifies the trust-region formulation. It also applies to the implemented PPO-Clip update when the realized clipped step is KL-feasible, i.e., when the monitored average KL of each agent is below the corresponding budget $\delta_i$. Without this KL-feasibility condition, \eqref{eq:happo_obj} should be interpreted as a practical surrogate for the trust-region step, rather than as a separate monotonic-improvement theorem.}
To make the theoretical KL control explicit, one may write the idealized trust-region problem as
\begin{equation}
\label{eq:trust_region}
\begin{aligned}
\max_{\theta_i} \quad & \mathbb{E}\!\left[r_t^{\,i}(\theta_i)\,A_{\mathrm{px}}^{\,i}\right] \\
\text{s.t.} \quad & \mathbb{E}\!\left[D_{\mathrm{KL}}\!\left(\pi_{\theta_i^{k}}(\cdot\mid \boldsymbol{o}_t^i)\,\big\|\,\pi_{\theta_i}(\cdot\mid \boldsymbol{o}_t^i)\right)\right] \le \delta,
\end{aligned}
\end{equation}
solved by a conjugate-gradient step with backtracking.
\textcolor{black}{In the numerical implementation, we optimize the clipped objective in \eqref{eq:happo_obj} and use the KL budget associated with \eqref{eq:trust_region} as an acceptance/early-stopping criterion. This makes the implemented PPO-Clip step a KL-controlled approximation to the trust-region update analyzed below.}
Let $\boldsymbol{\eta}\coloneqq(\eta_1,\dots,\eta_M)$ be the intrinsic-reward parameters and $\boldsymbol{\theta}'(\boldsymbol{\eta})$ the policy parameters after one inner HAPPO pass using proxy advantages. The LIIR outer objective is the extrinsic return under the post-update policies:
\begin{equation}
\label{eq:outer_obj}
J_{\mathrm{ex}}(\pi_{\boldsymbol{\theta}'(\boldsymbol{\eta})})=\mathbb{E}_{\pi_{\boldsymbol{\theta}'(\boldsymbol{\eta})}}\!\Big[\sum\nolimits_{t\ge 0}\gamma^t\,r_{\mathrm{ex}}(\boldsymbol{s}_t,\boldsymbol{u}_t)\Big],
\end{equation}
leading to the bilevel program
\begin{equation}
\label{eq:bilevel}
\begin{aligned}
\max_{\boldsymbol{\eta}} \quad & J_{\mathrm{ex}}(\pi_{\boldsymbol{\theta}^\star(\boldsymbol{\eta})}) \\
\text{s.t.} \quad & \boldsymbol{\theta}^\star(\boldsymbol{\eta})\in\arg\max_{\boldsymbol{\theta}\in\mathcal{U}}\,
                 \mathcal{L}_{\mathrm{HAPPO}}(\boldsymbol{\theta};\boldsymbol{\eta}).
\end{aligned}
\end{equation}
where $\mathcal{U}$ encodes the feasible update steps and $\mathcal{L}_{\mathrm{HAPPO}}$ abbreviates the inner-loop objective \eqref{eq:trust_region}.
\textcolor{black}{For the monotonic-improvement analysis, feasibility means satisfying the per-agent average-KL budgets in \eqref{eq:trust_region}; a PPO-Clip iterate generated by \eqref{eq:happo_obj} is included in $\mathcal{U}$ only when it passes this KL check.}
Applying the chain rule for a single inner update yields the LIIR meta-gradient for agent $i$:
\begin{equation}
\label{eq:chain_rule}
\nabla_{\eta_i} J_{\mathrm{ex}}(\pi_{\boldsymbol{\theta}'(\boldsymbol{\eta})})=\nabla_{\theta_i'} J_{\mathrm{ex}}(\pi_{\boldsymbol{\theta}'})\cdot\nabla_{\eta_i}\theta_i'.
\end{equation}
Using a centralized extrinsic critic, we estimate 
\begin{equation}
\label{eq:outer_pg}
\nabla_{\theta_i'} J_{\mathrm{ex}}(\pi_{\boldsymbol{\theta}'})\;\approx\;\mathbb{E}\!\left[\nabla_{\theta_i'}\log \pi_{\theta_i'}(\boldsymbol{a}^i\mid \boldsymbol{o}^i)\;A_{\mathrm{ex}}(\boldsymbol{s},\boldsymbol{u})\right],
\end{equation}
where $A_{\mathrm{ex}}(\boldsymbol{s},\boldsymbol{u})$ denotes the extrinsic advantage computed from the centralized value function as
\begin{equation}
A_{\mathrm{ex}}(\boldsymbol{s},\boldsymbol{u})= r_{\mathrm{ex}}(\boldsymbol{s},\boldsymbol{u})+\gamma V_{\mathrm{ex}}(\boldsymbol{s}')-V_{\mathrm{ex}}(\boldsymbol{s}).
\end{equation}
For one PPO step with learning rate $\alpha_i$, and under the standard LIIR approximation that ignores critic dependence on $\eta_i$, we have
\begin{equation}
\label{eq:inner_sensitivity}
\nabla_{\eta_i}\theta_i'\!\approx\!\alpha_i\,\lambda\mathbb{E}\!\left[\nabla_{\theta_i}\log \pi_{\theta_i}(\boldsymbol{a}^i\mid \boldsymbol{o}^i)\nabla_{\eta_i} r_{\mathrm{in}}^{i}(\boldsymbol{o}^i,\boldsymbol{a}^i;\eta_i)\right].
\end{equation}
\begin{revblock}
Combining the outer policy gradient, the inner sensitivity, and the chain rule gives an unbiased on-policy meta-gradient only under ideal conditions: fresh trajectories from $\pi_{\boldsymbol{\theta}'(\boldsymbol{\eta})}$, exact value and advantage estimates, fully accounted critic dependence on $\eta_i$, and no PPO clipping bias. In GLIIR-HAPPO, these conditions are relaxed due to rollout reuse with clipped ratios, learned centralized/graph critics, and stopped critic-gradient paths through $\eta_i$. The implemented estimator is therefore written as
\begin{equation}
\label{eq:meta_gradient_bias}
\widehat{g}_{\eta_i}^{\mathrm{meta}}
=
\nabla_{\theta_i'}J_{\mathrm{ex}}(\pi_{\boldsymbol{\theta}'})
\cdot
\nabla_{\eta_i}\theta_i'
+
\varepsilon_{\mathrm{meta},i},
\end{equation}
where $\varepsilon_{\mathrm{meta},i}$ aggregates the resulting bias. It vanishes under the ideal conditions above; otherwise, the update is a practical biased meta-gradient estimator.
\end{revblock}

\begin{revblock}
To reduce policy-synchronization overhead, GLIIR-HAPPO can employ role-conditioned federated distillation (FD) \cite{pmlr-v139-zhu21b, sattler2020communication, wu2022communication}. Instead of exchanging full actor parameters or raw observations, agents transmit only actor-output statistics on a small anchor batch. Since power-control and positioning policies differ in action semantics and dimensions, distillation is performed within each role $g\in\{\mathrm{pow},\mathrm{pos}\}$, with agent set $\mathcal{M}_{g}$ and action dimension $d_g$ (\revise{$d_{\mathrm{pow}}=K$} and $d_{\mathrm{pos}}=2N$). The role-specific anchor batch is denoted by $\mathcal{D}_{g}^{\mathrm{FD}}=\{\bar{\boldsymbol{o}}_{b}^{g}\}_{b=1}^{B_{\mathrm{FD}}}$.

For each anchor observation, agent $i\in\mathcal{M}_g$ uploads the Gaussian actor statistics
\begin{equation}
\pi_{\theta_i}(\cdot\mid\bar{\boldsymbol{o}}_{b}^{g})=\mathcal{N}\left(\boldsymbol{\mu}_{i,b}^{g},\operatorname{diag}((\boldsymbol{\sigma}_{i,b}^{g})^2)\right),
\end{equation}
namely $\boldsymbol{\mu}_{i,b}^{g}$ and $\log\boldsymbol{\sigma}_{i,b}^{g}$. The role teacher is obtained by moment averaging,
\begin{equation}
\bar{\boldsymbol{\mu}}_{b}^{g}=\frac{1}{|\mathcal{M}_g|}\sum_{i\in\mathcal{M}_g}\boldsymbol{\mu}_{i,b}^{g},\quad
\log\bar{\boldsymbol{\sigma}}_{b}^{g}=\frac{1}{|\mathcal{M}_g|}\sum_{i\in\mathcal{M}_g}\log\boldsymbol{\sigma}_{i,b}^{g},
\end{equation}
which defines the softened teacher $\bar\pi_g^{\tau_{\mathrm{FD}}}=\mathcal{N}(\bar{\boldsymbol{\mu}}_{b}^{g},\tau_{\mathrm{FD}}^{2}\operatorname{diag}((\bar{\boldsymbol{\sigma}}_{b}^{g})^2))$, where $\tau_{\mathrm{FD}}$ is the FD temperature. Each student minimizes
\begin{equation}
\mathcal{L}_{i}^{\mathrm{FD}}=\frac{1}{B_{\mathrm{FD}}}\sum_{b=1}^{B_{\mathrm{FD}}}D_{\mathrm{KL}}\left(\pi_{\theta_i}^{\tau_{\mathrm{FD}}}(\cdot\mid\bar{\boldsymbol{o}}_{b}^{g})\,\middle\|\,\bar\pi_g^{\tau_{\mathrm{FD}}}(\cdot\mid\bar{\boldsymbol{o}}_{b}^{g})\right).
\end{equation}
The actor update for $i\in\mathcal{M}_g$ becomes
\begin{equation}
\label{eq:fd_actor_update}
\max_{\theta_i}\;\mathcal{J}_{i}^{\mathrm{PPO-FD}}=\mathcal{J}_{i}^{\mathrm{HAPPO}}-\omega_{\mathrm{FD}}\mathcal{L}_{i}^{\mathrm{FD}},
\end{equation}
where $\omega_{\mathrm{FD}}$ is the FD regularization weight and $\mathcal{J}_{i}^{\mathrm{HAPPO}}$ is the clipped objective in \eqref{eq:happo_obj}. Since only $(\boldsymbol{\mu},\log\boldsymbol{\sigma})$ are exchanged, the round-trip overhead is
\begin{equation}
C_{\mathrm{FD}}=4B_{\mathrm{FD}}\sum_{g\in\{\mathrm{pow},\mathrm{pos}\}}|\mathcal{M}_g|d_g\quad\text{floats},
\end{equation}
where the factor four accounts for uploading and broadcasting both statistics. Thus, FD scales with compact action dimensions rather than actor parameter counts, while cross-role coordination is retained through the shared extrinsic reward and dynamic graph critic.
\end{revblock}

Correspondingly, the overall process of the proposed framework is shown in Algorithm~\ref{alg:marl_scalable}.
\begin{algorithm}[t]
\caption{Proposed Multi-Agent Collaboration GLIIR-HAPPO Algorithm}
\label{alg:marl_scalable}

\DontPrintSemicolon

\KwIn{Global environment state $s_0$; max episodes $E_{\max}$, max steps $T$; \revise{penalty schedule $(\varrho_0,\eta_{\varrho},\varrho_{\max},\epsilon_{\rm feas})$; FD parameters $(B_{\mathrm{FD}},\omega_{\mathrm{FD}},\tau_{\mathrm{FD}})$.}}
\KwOut{Optimized actor policies $\{\pi_{\theta_i}^\star\}$; optimal antenna positions $\boldsymbol{r}^\star$ and power allocation $\boldsymbol{\rho}^\star$.}

\textbf{Initialization:} Randomly initialize parameters $\{\theta_i\}$, $\phi$, and $\{\eta_i\}$. \revise{Set $\varrho\leftarrow\varrho_0$ and construct role-specific FD anchor batches $\mathcal{D}_{\mathrm{pow}}^{\mathrm{FD}}$ and $\mathcal{D}_{\mathrm{pos}}^{\mathrm{FD}}$.}\;

\For{$k = 1$ \KwTo $E_{\max}$}{
    Initialize rollout buffer $\mathcal{B} \leftarrow \emptyset$.\;
    \For{$t = 1$ \KwTo $T$}{
        \textbf{Geometric solver:} Update auxiliary variables $\boldsymbol{z}_{ln,t}$ by solving \revise{subproblem} $(\mathcal{P}\text{1-a})$.\;
        \textbf{Observe:} Each agent $i$ receives observation $\boldsymbol{o}_t^i$.\;
        \textbf{Act:} Execute joint action $\boldsymbol{u}_t$.\;
        \textbf{Reward:} Receive extrinsic reward $r_{\text{ex},t}$.\;
        Store transition $(\boldsymbol{s}_t, \boldsymbol{o}_t, \boldsymbol{u}_t, r_{\text{ex},t}, \boldsymbol{z}_{ln,t})$ into $\mathcal{B}$.\;
    }

    Compute intrinsic rewards $r^{\text{in}}_{i,t} \leftarrow r^{\text{in}}_{\eta_i}(\boldsymbol{o}_t^i, \boldsymbol{a}_t^i)$ for all $t \in \mathcal{B}$.\;
    Form proxy rewards $r^{\text{proxy}}_{i,t} \leftarrow r_{\text{ex},t} + \lambda r^{\text{in}}_{i,t}$.\;
    Estimate advantages $\hat{A}_{\text{ex}}$ using $r_{\text{ex}}$ and $\hat{A}_{\text{px}}^i$ leveraging $r^{\text{proxy}}_i$ via GAE and critic $V_\phi$.\;

    Compute meta-gradient $\nabla_{\eta_i} J_{\text{ex}}$ via chain rule (Eq.~\eqref{eq:inner_sensitivity}) adopting $\hat{A}_{\text{ex}}$.\;
    Update intrinsic-reward parameters $\eta_i$.\;

    Generate a random sequential update order $\mathcal{P}$ for agents.\;
    \For{each agent $i$ in order $\mathcal{P}$}{
        Compute prefix-marginalized proxy advantage $A_{\text{px}}^i(\boldsymbol{s}, \boldsymbol{a}_{1:i})$ (Eq.~\eqref{eq:proxy_advantage}).\;
        Update $\theta_i$ by maximizing the PPO-Clip objective with $A_{\text{px}}^i$.\;
        Update importance-sampling weights for subsequent agents.\;
    }

     \revise{For each role $g\in\{\mathrm{pow},\mathrm{pos}\}$, perform role-conditioned FD using $\mathcal{D}_{g}^{\mathrm{FD}}$, construct the teacher distribution, and update each same-role actor by maximizing $\mathcal{J}_{i}^{\mathrm{PPO-FD}}$.}\;
  \revise{Compute the average violation $\nu_k$ using \eqref{eq:penalty_violation} and update $\varrho$ according to the adaptive rule in \eqref{eq:penalty_update}.}\;
}
\end{algorithm}

\vspace{-0.2cm}
\section{Theoretical Analysis of the Proposed Framework}
\label{subsec:happo_liir_theory}
\vspace{-0.05cm}
We derive (i) a HAPPO-specific policy-gradient identity that explicitly accounts for prefix marginalization, (ii) a monotonic improvement bound on the extrinsic objective, enforced via per-agent trust regions, and (iii) an analysis of the impact of replacing purely extrinsic advantages with a proxy advantage that blends intrinsic and extrinsic value signals. Additionally, we establish sufficient conditions under which the proposed LIIR procedure admits two-timescale convergence to stationary points of the \revise{bilevel} objective.

\begin{table}[t]
\caption{\revise{Summary of Symbols and Policy Definitions.}}
\label{tab:symbol_policy_summary}
\centering
{\color{black}
\fontsize{9}{11}\selectfont
\renewcommand{\arraystretch}{1.08}
\begin{tabularx}{\columnwidth}{@{}p{0.36\columnwidth}X@{}}
\toprule
Symbols & Definitions \\
\midrule
$\boldsymbol{s}_t,\boldsymbol{o}_t^i,\boldsymbol{a}_t^i,\boldsymbol{u}_t$ & Global state, local observation, local action, and joint action. \\
$\pi_{\theta_i}$, $\pi_{\boldsymbol{\theta}}$ & Agent-$i$ actor and factored candidate joint policy. \\
$\pi_{\boldsymbol{\theta}^{k}}$, $\pi_{\theta_i^{k}}$ & Joint and per-agent rollout/reference policies at iteration $k$; these replace the ambiguous reference-policy notation $\pi_0$. \\
$\pi_{\boldsymbol{\theta}}$ & Current candidate joint policy used to define the value, $Q$, and advantage functions before the inner update is applied. \\
$\EuScript{P}_{i}$ & Prefix set of already-updated agents. \\
$\pi_{\boldsymbol{\theta}'}$, $\pi_{\boldsymbol{\theta}'(\boldsymbol{\eta})}$ & Post-update joint policy after one HAPPO pass and its dependence on $\boldsymbol{\eta}$. \\
$r_{\mathrm{ex}}$, $r_{\mathrm{in}}^i$, $r_{\mathrm{px}}^i$ & Extrinsic, intrinsic, and proxy rewards. \\
$V_{\mathrm{ex}}^{\pi}(\boldsymbol{s})$, $V_{\mathrm{px}}^{i,\pi}(\boldsymbol{s})$ & Extrinsic and agent-$i$ proxy value functions evaluated under policy $\pi$. \\
$Q_{\mathrm{px},\EuScript{S}}^{\,i,\pi}$, $A_{\mathrm{ex}}^{\pi}$, $A_{\mathrm{px}}^{i,\pi}$ & Marginal proxy $Q$-function and advantage terms evaluated under policy $\pi$. \\
$\boldsymbol{\eta}$, $\lambda$, $\gamma$ & Intrinsic-reward parameters, reward weight, and discount factor. \\
\bottomrule
\end{tabularx}}
\end{table}
\begin{revblock}
For readability, Table~\ref{tab:symbol_policy_summary} summarizes the main symbols and policy scopes used in the subsequent theoretical derivations.
\end{revblock}

Recall the extrinsic reward and proxy reward from \eqref{eq:extrinsic}--\eqref{eq:proxy_reward}, and the HAPPO proxy advantage $A_{\mathrm{px}}^{\,i}$ in \eqref{eq:proxy_advantage}. Let $\rho_{\pi_{\boldsymbol{\theta}}}$ denote the $\gamma$-discounted occupancy measure and $d_{\pi_{\boldsymbol{\theta}}}(\boldsymbol{s})=(1-\gamma)\sum_{t\ge 0}\gamma^t \Pr_{\pi_{\boldsymbol{\theta}}}(\boldsymbol{s}_t=\boldsymbol{s})$ the normalized discounted state visitation. We define the extrinsic multi-agent advantage in the same prefix form as \eqref{eq:proxy_advantage} but exploiting $r_{\mathrm{ex}}$:
\begin{equation}
\label{eq:extrinsic_prefix_adv}
A_{\mathrm{ex}}^{\,i}(\boldsymbol{s},\boldsymbol{a}_{\EuScript{P}_{\!i}},\boldsymbol{a}^i)= Q_{\mathrm{ex},\EuScript{P}_{\!i}\cup\{i\}}^{\,i}(\boldsymbol{s},\boldsymbol{a}_{\EuScript{P}_{i}},\boldsymbol{a}^i)-Q_{\mathrm{ex},\EuScript{P}_{i}}^{i}(\boldsymbol{s},\boldsymbol{a}_{\EuScript{P}_{i}}),
\end{equation}
where $Q_{\mathrm{ex},\EuScript{S}}^{i}$ is defined as in \eqref{eq:marginal_q} with $r_{\mathrm{ex}}$ in place of $r_{\mathrm{px}}^{i}$, i.e.,
\begin{equation}
\begin{aligned}
Q_{\mathrm{ex},\EuScript{S}}^{i}(\boldsymbol{s},\boldsymbol{a}_{\EuScript{S}})
=\ &\mathbb{E}_{\boldsymbol{a}_{-\EuScript{S}}\sim\pi_{\boldsymbol{\theta},-\EuScript{S}}}\Biggl[
\sum_{t\ge0}\gamma^t\, r_{\mathrm{ex}}(\boldsymbol{s}_t,\boldsymbol{u}_t)
\Bigm|
\substack{\boldsymbol{s}_0=\boldsymbol{s}\\ \boldsymbol{a}_{\EuScript{S},0}=\boldsymbol{a}_{\EuScript{S}}}
\Biggr].
\end{aligned}
\end{equation}
For completeness, define the joint extrinsic $Q$-function
\begin{equation}
Q_{\mathrm{ex}}(\boldsymbol{s},\boldsymbol{u})
= \mathbb{E}_{\pi}\!\left[\sum_{t\ge0}\gamma^t\, r_{\mathrm{ex}}(\boldsymbol{s}_t,\boldsymbol{u}_t)\ \middle|\ \boldsymbol{s}_0=\boldsymbol{s},\ \boldsymbol{u}_0=\boldsymbol{u} \right],
\end{equation}
and the joint extrinsic advantage
$A_{\mathrm{ex}}(\boldsymbol{s},\boldsymbol{u})=Q_{\mathrm{ex}}(\boldsymbol{s},\boldsymbol{u})-V_{\mathrm{ex}}(\boldsymbol{s})$.

\begin{assumption}[\revise{Controlled graph-critic and FD perturbations}]
  \label{ass:gfd}
  \begin{revblock}
  At iteration $k$, let $G_k$ be the sparse dynamic-interaction graph selected by the graph critic and let $\widehat A_{\mathrm{px},G_k}^{i}$ denote the proxy advantage estimated from this critic. We assume that the graph-induced approximation and non-stationarity error is uniformly bounded, i.e.,
  \begin{equation}
  \label{eq:graph_error_bound}
  \big|\widehat A_{\mathrm{px},G_k}^{i}(\boldsymbol{s},\boldsymbol{a}_{\EuScript{P}_i},\boldsymbol{a}^i)-A_{\mathrm{px}}^{i}(\boldsymbol{s},\boldsymbol{a}_{\EuScript{P}_i},\boldsymbol{a}^i)\big|\le \varepsilon_{G,k}.
  \end{equation}
  Moreover, the graph sequence is slowly varying in the sense that
  \begin{equation}
  \label{eq:graph_slow_variation}
  \|V_{\phi_{k+1},G_{k+1}}-V_{\phi_k,G_k}\|_{\infty}\le L_G\big(\|\phi_{k+1}-\phi_k\|+\Delta_{G,k}\big),
  \end{equation}
  where $\Delta_{G,k}=M^{-1}\|\boldsymbol{A}_{G_{k+1}}-\boldsymbol{A}_{G_k}\|_1$ measures the attention-graph variation. FD changes the actor direction from the pure HAPPO direction to
  \begin{equation}
  \label{eq:gfd_direction}
  g_{i,k}^{\mathrm{GFD}}=\widehat g_{i,k}^{\mathrm{HAPPO}}-\omega_{\mathrm{FD},k}\nabla_{\theta_i}\mathcal{L}_{i}^{\mathrm{FD}},
  \end{equation}
  with $\|\nabla_{\theta_i}\mathcal{L}_{i}^{\mathrm{FD}}\|\le G_{\mathrm{FD}}$. The realized step is accepted only when it satisfies the per-agent average-KL budget.
  \end{revblock}
  \end{assumption}

\begin{lemma}[Prefix-marginalized policy-gradient identity]
\label{lem:prefix_pg}
For any agent $i$ and parameters $\boldsymbol{\theta}$, define the proxy return $J_{\mathrm{px}}^{i}(\boldsymbol{\theta})\!=\!\mathbb{E}_{\pi_{\boldsymbol{\theta}}}\!\left[\sum_{t\ge 0}\gamma^t\,r_{\mathrm{px}}^{i}(\boldsymbol{s}_t,\boldsymbol{u}_t)\right]$. Then
\begin{equation}
\label{eq:prefix_pg_identity}
\begin{aligned}
\nabla_{\boldsymbol{\theta}_i} J_{\mathrm{px}}^{i}(\boldsymbol{\theta})
&=
\mathbb{E}_{\boldsymbol{s}\sim d_{\pi_{\boldsymbol{\theta}}}}
\ \mathbb{E}_{\substack{
\boldsymbol{a}_{\EuScript{P}_i}\sim\pi_{\boldsymbol{\theta},\EuScript{P}_i} \\
\boldsymbol{a}^i\sim\pi_{\theta_i}
}}\!\Big[
\nabla_{\theta_i}\log \pi_{\theta_i}(\boldsymbol{a}^i \mid \boldsymbol{o}^i)\, \\
&\qquad\qquad
A_{\mathrm{px}}^{i}(\boldsymbol{s},\boldsymbol{a}_{\EuScript{P}_{i}},\boldsymbol{a}^i)
\Big].
\end{aligned}
\end{equation}
\end{lemma}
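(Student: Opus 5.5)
The plan is to start from the standard (single-agent) policy-gradient theorem applied to the joint policy $\pi_{\boldsymbol{\theta}}$ and the proxy reward $r_{\mathrm{px}}^{i}$, treated as a fixed reward function with $\eta_i$ held constant. This gives
\begin{equation*}
\nabla_{\theta_i} J_{\mathrm{px}}^{i}(\boldsymbol{\theta})=\frac{1}{1-\gamma}\,\mathbb{E}_{\boldsymbol{s}\sim d_{\pi_{\boldsymbol{\theta}}}}\mathbb{E}_{\boldsymbol{u}\sim\pi_{\boldsymbol{\theta}}}\big[\nabla_{\theta_i}\log\pi_{\boldsymbol{\theta}}(\boldsymbol{u}\mid\boldsymbol{s})\,Q_{\mathrm{px}}^{i}(\boldsymbol{s},\boldsymbol{u})\big].
\end{equation*}
The factor $(1-\gamma)^{-1}$ comes from the normalization of $d_{\pi_{\boldsymbol{\theta}}}$; I would either absorb it into the occupancy measure $\rho_{\pi_{\boldsymbol{\theta}}}$ or state it explicitly as the convention under which \eqref{eq:prefix_pg_identity} holds. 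Because the joint policy factorizes over agents, $\nabla_{\theta_i}\log\pi_{\boldsymbol{\theta}}(\boldsymbol{u}\mid\boldsymbol{s})=\nabla_{\theta_i}\log\pi_{\theta_i}(\boldsymbol{a}^i\mid\boldsymbol{o}^i)$. Only agent $i$'s own score survives, and it depends only on $\boldsymbol{a}^i$ and $\boldsymbol{o}^i$, where $\boldsymbol{o}^i$ is a deterministic function of $\boldsymbol{s}$.

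Next I split the joint action as $\boldsymbol{u}=(\boldsymbol{a}_{\EuScript{P}_i},\boldsymbol{a}^i,\boldsymbol{a}_{-(\EuScript{P}_i\cup\{i\})})$. Agents act independently given $\boldsymbol{s}$, so I apply the tower property and integrate out the suffix actions first. By definition \eqref{eq:marginal_q}, this marginalization turns $Q_{\mathrm{px}}^{i}(\boldsymbol{s},\boldsymbol{u})$ into $Q_{\mathrm{px},\EuScript{P}_i\cup\{i\}}^{i}(\boldsymbol{s},\boldsymbol{a}_{\EuScript{P}_i},\boldsymbol{a}^i)$. The score factor passes through this step untouched because it does not depend on the suffix actions.

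Then I subtract the baseline $Q_{\mathrm{px},\EuScript{P}_i}^{i}(\boldsymbol{s},\boldsymbol{a}_{\EuScript{P}_i})$. This baseline is independent of $\boldsymbol{a}^i$, so its contribution is
\begin{equation*}
\mathbb{E}_{\boldsymbol{a}^i\sim\pi_{\theta_i}}\big[\nabla_{\theta_i}\log\pi_{\theta_i}(\boldsymbol{a}^i\mid\boldsymbol{o}^i)\big]=\nabla_{\theta_i}\int\pi_{\theta_i}(\boldsymbol{a}^i\mid\boldsymbol{o}^i)\,d\boldsymbol{a}^i=0.
\end{equation*}
Removing it is therefore harmless, and what remains is exactly $A_{\mathrm{px}}^{i}$ from \eqref{eq:proxy_advantage}, which yields \eqref{eq:prefix_pg_identity}.

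The main obstacle is rigor rather than algebra. First, the action spaces are continuous, so interchanging gradient and integral needs justification. I would assume bounded rewards: $R_k$ is bounded for bounded powers, and the penalty term is bounded because the region $\mathcal{C}$ is compact. I would also assume a differentiable Gaussian policy whose score is integrable, and then invoke dominated convergence. Second, $r_{\mathrm{px}}^{i}$ contains $r_{\mathrm{in}}^{i}(\boldsymbol{o}^i,\boldsymbol{a}^i;\eta_i)$, which does not depend on $\theta_i$ explicitly. I need to confirm that the reward carries no direct $\theta_i$-dependence, so that only the distributional dependence contributes; holding $\eta_i$ fixed, as the statement implicitly does, settles this. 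Finally, the identity requires that the prefix actions be sampled under the same $\pi_{\boldsymbol{\theta}}$. I would note that this holds at a fixed $\boldsymbol{\theta}$, before any sequential update is applied.
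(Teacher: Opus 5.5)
Your proposal is correct and follows the same route as the paper's proof. Both arguments apply the policy-gradient theorem to $J_{\mathrm{px}}^{i}$, use the factorization so that only $\nabla_{\theta_i}\log\pi_{\theta_i}(\boldsymbol{a}^i\mid\boldsymbol{o}^i)$ survives, integrate out the suffix actions to obtain $Q_{\mathrm{px},\EuScript{P}_i\cup\{i\}}^{i}$, and subtract the $\boldsymbol{a}^i$-independent baseline $Q_{\mathrm{px},\EuScript{P}_i}^{i}$. Your remark about the $(1-\gamma)^{-1}$ factor is also right: with the normalized $d_{\pi_{\boldsymbol{\theta}}}$ defined in the paper, the stated identity holds only up to that constant (or with the unnormalized occupancy measure), and the paper's proof glosses over this.
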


\begin{IEEEproof}
Express the joint log-likelihood gradient as $\nabla_{\boldsymbol{\theta}}\log\pi_{\boldsymbol{\theta}}(\boldsymbol{u}\!\mid\!\boldsymbol{s})=\sum_{i=1}^M \nabla_{\theta_i}\log \pi_{\theta_i}(\boldsymbol{a}^i\!\mid\!\boldsymbol{o}^i)$. Applying the policy-gradient theorem to $J_{\mathrm{px}}^{i}$ gives
\begin{equation}
\nabla_{\theta_i} J_{\mathrm{px}}^{i}(\boldsymbol{\theta})
=\mathbb{E}_{\pi_{\boldsymbol{\theta}}}\!\left[\nabla_{\theta_i}\log \pi_{\theta_i}(\boldsymbol{a}^i\!\mid\!\boldsymbol{o}^i)\,Q_{\mathrm{px}}^{i}(\boldsymbol{s},\boldsymbol{u})\right].
\end{equation}
Taking conditional expectation over $\boldsymbol{a}_{-\EuScript{P}_i\setminus\{i\}}$ first, and then subtracting the corresponding baseline $Q_{\mathrm{px},\EuScript{P}_i}^{i}(\boldsymbol{s},\boldsymbol{a}_{\EuScript{P}_i})$ (which has zero correlation with $\nabla_{\theta_i}\log \pi_{\theta_i}$) yields \eqref{eq:prefix_pg_identity}. The expectations reduce exactly to the marginalized $Q$-difference in \eqref{eq:proxy_advantage}.
\end{IEEEproof}

\begin{lemma}[Telescoping decomposition of the joint advantage]
\label{lem:telescoping}
For every $(\boldsymbol{s},\boldsymbol{u})$,
\begin{equation}
\label{eq:telescoping}
\vspace{-6pt}
A_{\mathrm{ex}}(\boldsymbol{s},\boldsymbol{u})=\sum_{i=1}^{M} A_{\mathrm{ex}}^{i}(\boldsymbol{s},\boldsymbol{a}_{\EuScript{P}_{i}},\boldsymbol{a}^i).
\vspace{-3pt}
\end{equation}
This identity holds under the mild regularity conditions ensuring the value functions are well-defined.
\end{lemma}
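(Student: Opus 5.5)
The plan is a telescoping argument over the nested family of marginalized extrinsic $Q$-functions $Q_{\mathrm{ex},\EuScript{S}}^{i}$ along the fixed sequential order $1,\dots,M$. Two observations come first. The function $Q_{\mathrm{ex},\EuScript{S}}^{i}$ in the definition following \eqref{eq:extrinsic_prefix_adv} does not actually depend on the superscript $i$: the reward $r_{\mathrm{ex}}$ is shared by all agents, so I will write $Q_{\mathrm{ex},\EuScript{S}}$. The two extreme cases are then identified.

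\begin{itemize}
\item For $\EuScript{S}=\{1,\dots,M\}$, no action is marginalized, so $Q_{\mathrm{ex},\{1,\dots,M\}}(\boldsymbol{s},\boldsymbol{u})=Q_{\mathrm{ex}}(\boldsymbol{s},\boldsymbol{u})$, the joint extrinsic $Q$-function.
\item For $\EuScript{S}=\varnothing$, all first-step actions are drawn from $\pi_{\boldsymbol{\theta}}$. By the tower property, $Q_{\mathrm{ex},\varnothing}(\boldsymbol{s})=\mathbb{E}_{\boldsymbol{u}\sim\pi_{\boldsymbol{\theta}}}[Q_{\mathrm{ex}}(\boldsymbol{s},\boldsymbol{u})]=V_{\mathrm{ex}}(\boldsymbol{s})$.
\end{itemize}

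Both identifications use only the definitions together with the factorization $\pi_{\boldsymbol{\theta}}(\boldsymbol{u}\mid\boldsymbol{s})=\prod_i\pi_{\theta_i}(\boldsymbol{a}^i\mid\boldsymbol{o}^i)$. The factorization lets us marginalize any subset of agents' first-step actions independently of the others.

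Next, since $\EuScript{P}_i\cup\{i\}=\EuScript{P}_{i+1}$, with $\EuScript{P}_1=\varnothing$ and $\EuScript{P}_{M+1}=\{1,\dots,M\}$, each summand in \eqref{eq:extrinsic_prefix_adv} is a consecutive difference:
\[
A_{\mathrm{ex}}^{i}(\boldsymbol{s},\boldsymbol{a}_{\EuScript{P}_i},\boldsymbol{a}^i)=Q_{\mathrm{ex},\EuScript{P}_{i+1}}(\boldsymbol{s},\boldsymbol{a}_{\EuScript{P}_{i+1}})-Q_{\mathrm{ex},\EuScript{P}_{i}}(\boldsymbol{s},\boldsymbol{a}_{\EuScript{P}_{i}}).
\]
Summing over $i=1,\dots,M$ telescopes to $Q_{\mathrm{ex},\EuScript{P}_{M+1}}-Q_{\mathrm{ex},\EuScript{P}_1}=Q_{\mathrm{ex}}(\boldsymbol{s},\boldsymbol{u})-V_{\mathrm{ex}}(\boldsymbol{s})=A_{\mathrm{ex}}(\boldsymbol{s},\boldsymbol{u})$, which gives \eqref{eq:telescoping}. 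The same argument works for any order, such as the random order used in Algorithm~\ref{alg:marl_scalable}, after relabeling.

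The only step needing care is the regularity caveat in the statement. The finite-sum manipulation is purely algebraic, but each $Q_{\mathrm{ex},\EuScript{S}}$ must be finite and the tower-property exchange must be justified. I would assume $r_{\mathrm{ex}}$ is bounded, which is natural here: the sum rate is bounded under the power constraints and the channel model, and the penalty term is bounded because $\mathcal{C}$ is compact. With $\gamma<1$, every discounted series is then absolutely convergent and bounded by $\|r_{\mathrm{ex}}\|_\infty/(1-\gamma)$, so Fubini applies. Alternatively one can assume integrability of the return under $\pi_{\boldsymbol{\theta}}$.

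I expect the main obstacle to be notational. The marginalization in \eqref{eq:marginal_q} is over first-step actions only, and later steps follow $\pi_{\boldsymbol{\theta}}$. I will state explicitly that all $Q_{\mathrm{ex},\EuScript{S}}$ are evaluated under the same policy $\pi_{\boldsymbol{\theta}}$, since otherwise the differences would not telescope.
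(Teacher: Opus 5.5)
Your proposal is correct and is essentially the paper's argument. The paper defines $Q^{(i)}(\boldsymbol{s},\boldsymbol{a}_{1:i})$ as the marginal of $Q_{\mathrm{ex}}$ over $\boldsymbol{a}_{i+1:M}$, telescopes $Q^{(M)}-Q^{(0)}=Q_{\mathrm{ex}}-V_{\mathrm{ex}}$, and identifies each increment with $A_{\mathrm{ex}}^{i}$; you instead telescope the prefix functions $Q_{\mathrm{ex},\EuScript{P}_i}$ directly via $\EuScript{P}_i\cup\{i\}=\EuScript{P}_{i+1}$, and your remarks that $Q_{\mathrm{ex},\EuScript{S}}^{i}$ does not depend on $i$ and that marginalization is over first-step actions under a common $\pi_{\boldsymbol{\theta}}$ make precise what the paper asserts ``by construction.''
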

\begin{IEEEproof}
See Appendix~\ref{proof_C}.
\end{IEEEproof}

\begin{theorem}[Joint KL decomposition for factorized multi-agent policies]
\label{KL_decomposition}
Fix a state $\boldsymbol{s}$ and consider the joint action \(\boldsymbol{u}=(\boldsymbol{a}_1,\ldots,\boldsymbol{a}_M)\).
Let the two joint policies parameterized by $\boldsymbol{\theta}$ and $\boldsymbol{\theta}'$ factorize across agents as
\begin{equation}
\pi_{\boldsymbol{\theta}}\!(\boldsymbol{u}\mid \boldsymbol{s})\!=\!\prod_{i=1}^M \!\pi_{\theta_i}(\boldsymbol{a}_i\mid \boldsymbol{o}_i),\!
\pi_{\boldsymbol{\theta}'}\!(\boldsymbol{u}\mid \boldsymbol{s})\!=\!\prod_{i=1}^M \!\pi_{\theta'_i}(\boldsymbol{a}_i\mid \boldsymbol{o}_i),
\end{equation}
where $\boldsymbol{o}_i=\boldsymbol{o}_i(\boldsymbol{s})$ is agent \(i\)'s local observation at state $\boldsymbol{s}$.
Assume that for every \(i\), \(\pi_{\theta_i}(\cdot\mid \boldsymbol{o}_i)\) is absolutely continuous with respect to
\(\pi_{\theta'_i}(\cdot\mid \boldsymbol{o}_i)\) (equivalently, both admit densities w.r.t. a common base measure on \(\mathcal A_i\)).
Then the conditional joint KL divergence decomposes additively:
\begin{equation}\label{eq:KL-sum}
\begin{aligned}
&D_{\mathrm{KL}}\Big(
  \prod_{i=1}^M \pi_{\theta_i}(\cdot\mid \boldsymbol{o}_i)\,\Big\|\,
 \prod_{i=1}^M \pi_{\theta'_i}(\cdot\mid \boldsymbol{o}_i)
\Big)\\
=&
\sum_{i=1}^M D_{\mathrm{KL}}\big(
  \pi_{\theta_i}(\cdot\mid \boldsymbol{o}_i)\,\big\|\,
  \pi_{\theta'_i}(\cdot\mid \boldsymbol{o}_i)
\big).
\end{aligned}
\end{equation}
Consequently, for any distribution $d(\boldsymbol{s})$ over states,
linearity of expectation yields
\begin{equation}\label{eq:avg-KL}
\begin{aligned}
&\mathbb{E}_{\boldsymbol{s}\sim d}\Big[
    D_{\mathrm{KL}}\big(\pi_{\boldsymbol{\theta}}(\cdot\mid \boldsymbol{s})\,\big\|\,\pi_{\boldsymbol{\theta}'}(\cdot\mid \boldsymbol{s})\big)
\Big]\\
=&
\sum_{i=1}^M 
\mathbb{E}_{\boldsymbol{s}\sim d}\Big[
    D_{\mathrm{KL}}\big(\pi_{\theta_i}(\cdot\mid \boldsymbol{o}_i(\boldsymbol{s}))\,\big\|\,\pi_{\theta'_i}(\cdot\mid \boldsymbol{o}_i(\boldsymbol{s}))\big)
\Big].
\end{aligned}
\end{equation}
\end{theorem}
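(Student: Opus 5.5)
The plan is to prove \eqref{eq:KL-sum} directly from the definition of KL divergence, using the product structure of the joint densities, and then obtain \eqref{eq:avg-KL} by integrating over $d$. Fix $\boldsymbol{s}$, and write $p_i(\boldsymbol{a}_i)$ and $q_i(\boldsymbol{a}_i)$ for the densities of $\pi_{\theta_i}(\cdot\mid\boldsymbol{o}_i)$ and $\pi_{\theta'_i}(\cdot\mid\boldsymbol{o}_i)$ with respect to the common base measure $\mu_i$ on $\mathcal{A}_i$. The joint densities with respect to the product measure $\mu=\otimes_i\mu_i$ on $\mathcal{A}_1\times\cdots\times\mathcal{A}_M$ are then $p(\boldsymbol{u})=\prod_i p_i(\boldsymbol{a}_i)$ and $q(\boldsymbol{u})=\prod_i q_i(\boldsymbol{a}_i)$. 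Absolute continuity of each factor gives absolute continuity of the product, since $q(\boldsymbol{u})=0$ forces $q_i(\boldsymbol{a}_i)=0$ for some $i$, hence $p_i(\boldsymbol{a}_i)=0$ $\mu_i$-a.e. there. The joint KL is therefore well defined as $\int p\log(p/q)\,d\mu$.

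The core step is the identity $\log\bigl(p(\boldsymbol{u})/q(\boldsymbol{u})\bigr)=\sum_i \log\bigl(p_i(\boldsymbol{a}_i)/q_i(\boldsymbol{a}_i)\bigr)$, which holds $p$-a.e. Integrating against $p$ and using linearity, each summand depends only on $\boldsymbol{a}_i$. By Fubini--Tonelli, the other coordinates integrate out because $\int p_j\,d\mu_j=1$, leaving $\int p_i\log(p_i/q_i)\,d\mu_i=D_{\mathrm{KL}}(\pi_{\theta_i}\|\pi_{\theta'_i})$. This yields \eqref{eq:KL-sum}.

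The main obstacle is justifying the interchange of sum and integral and the use of Fubini, because the individual log-ratios need not be integrable and could in principle produce $\infty-\infty$. I would handle this with the standard decomposition $p_i\log(p_i/q_i)=\bigl[p_i\log(p_i/q_i)-p_i+q_i\bigr]+(p_i-q_i)$. The bracketed term is nonnegative, so Tonelli applies to it. The remainder integrates to zero and is absolutely integrable, so each $D_{\mathrm{KL}}$ lies in $[0,\infty]$. The same argument applied to the joint densities shows that both sides of \eqref{eq:KL-sum} are well defined in $[0,\infty]$. If some factor KL is infinite, both sides equal $+\infty$; otherwise every log-ratio is $p$-integrable and the splitting is legitimate.

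Finally, \eqref{eq:avg-KL} follows by taking $\mathbb{E}_{\boldsymbol{s}\sim d}$ of \eqref{eq:KL-sum}. Since every term is nonnegative and $\boldsymbol{o}_i(\boldsymbol{s})$ is a measurable function of $\boldsymbol{s}$, linearity of expectation, or monotone convergence for sums of nonnegative terms, allows the expectation to be exchanged with the finite sum over $i$. This step needs no integrability assumption.
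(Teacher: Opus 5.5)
Your proposal is correct and follows the paper's proof in Appendix D: write the log-ratio of the product densities as a sum of per-agent log-ratios, reduce each summand's joint expectation to its own marginal, and then take the expectation over $\boldsymbol{s}$. Your extra integrability bookkeeping goes beyond the paper, which interchanges the sum and the expectation without comment, but one claim there is asserted rather than shown: that the joint KL is $+\infty$ whenever some factor KL is. It follows in one line, since $p_i\log(q_i/p_i)\le q_i-p_i$ makes the negative part of each $\log(p_i/q_i)$ integrable under $p$, so linearity holds in $(-\infty,\infty]$.
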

\begin{IEEEproof}
See Appendix~\ref{proof_D}.
\end{IEEEproof}

\begin{lemma}[Proxy--extrinsic advantage gap]
\label{lem:gap_bound}
For all $(\boldsymbol{s},\boldsymbol{a}_{\EuScript{P}_i},\boldsymbol{a}^i)$,
\begin{equation}
\label{eq:gap_bound}
\big|A_{\mathrm{px}}^{i}(\boldsymbol{s},\boldsymbol{a}_{\EuScript{P}_i},\boldsymbol{a}^i)-A_{\mathrm{ex}}^{i}(\boldsymbol{s},\boldsymbol{a}_{\EuScript{P}_i},\boldsymbol{a}^i)\big|
\ \le\ \frac{\lambda\,\bar r_{\mathrm{in}}}{1-\gamma}= \Delta_{\max}.
\end{equation}
\end{lemma}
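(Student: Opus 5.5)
The plan is to reduce the gap to a single discounted sum of intrinsic rewards, using that the proxy and extrinsic rewards differ only by the intrinsic term. By \eqref{eq:proxy_reward}, $r_{\mathrm{px}}^i-r_{\mathrm{ex}}=\lambda r_{\mathrm{in}}^i$ pointwise. Both marginal $Q$-functions in \eqref{eq:marginal_q} and its extrinsic counterpart are expectations over the \emph{same} trajectory law: the same initial conditioning, the same complementary policies $\pi_{\boldsymbol{\theta},-\EuScript{S}}$, and the same transition kernel $\mathcal{T}$. By linearity of expectation (justified by boundedness and $\gamma<1$, so the series converge absolutely), for every $\EuScript{S}$ I will write
\begin{equation*}
Q_{\mathrm{px},\EuScript{S}}^{i}(\boldsymbol{s},\boldsymbol{a}_{\EuScript{S}})-Q_{\mathrm{ex},\EuScript{S}}^{i}(\boldsymbol{s},\boldsymbol{a}_{\EuScript{S}})=\lambda\, I_{\EuScript{S}}(\boldsymbol{s},\boldsymbol{a}_{\EuScript{S}}),
\end{equation*}
where $I_{\EuScript{S}}(\boldsymbol{s},\boldsymbol{a}_{\EuScript{S}}) \triangleq \mathbb{E}\big[\sum_{t\ge0}\gamma^t r_{\mathrm{in}}^i(\boldsymbol{o}_t^i,\boldsymbol{a}_t^i;\eta_i)\mid \boldsymbol{s}_0=\boldsymbol{s},\boldsymbol{a}_{\EuScript{S},0}=\boldsymbol{a}_{\EuScript{S}}\big]$.

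Subtracting \eqref{eq:extrinsic_prefix_adv} from \eqref{eq:proxy_advantage} then gives
\begin{equation*}
A_{\mathrm{px}}^{i}-A_{\mathrm{ex}}^{i}=\lambda\big(I_{\EuScript{P}_i\cup\{i\}}(\boldsymbol{s},\boldsymbol{a}_{\EuScript{P}_i},\boldsymbol{a}^i)-I_{\EuScript{P}_i}(\boldsymbol{s},\boldsymbol{a}_{\EuScript{P}_i})\big).
\end{equation*}
So the lemma reduces to bounding the difference of two discounted intrinsic returns that differ only in whether $\boldsymbol{a}^i$ is fixed or drawn from $\pi_{\theta_i}$ at $t=0$.

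I will use the standing boundedness implicit in the statement: the learned intrinsic reward takes values in $[0,\bar r_{\mathrm{in}}]$, as in LIIR implementations with a sigmoid-bounded intrinsic head. Then each $I_{\EuScript{S}}$ lies in $[0,\bar r_{\mathrm{in}}/(1-\gamma)]$ by the geometric series. The difference of two numbers in an interval of length $\bar r_{\mathrm{in}}/(1-\gamma)$ has absolute value at most that length. Multiplying by $\lambda$ gives \eqref{eq:gap_bound}.

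The main obstacle is that this one-sided range assumption is what yields the constant $1$ rather than $2$. If only $|r_{\mathrm{in}}^i|\le\bar r_{\mathrm{in}}$ is available, the naive triangle inequality gives $2\lambda\bar r_{\mathrm{in}}/(1-\gamma)$. I would first try to recover the sharper constant in one of two ways. One is a shift argument: adding a constant $c$ to $r_{\mathrm{in}}^i$ adds $c/(1-\gamma)$ to both $I$'s and so leaves the difference unchanged, so the bound only depends on the oscillation of $r_{\mathrm{in}}^i$, and one defines $\bar r_{\mathrm{in}}$ as that oscillation. The other is to note that the $t\ge1$ parts of $I_{\EuScript{P}_i\cup\{i\}}$ and $I_{\EuScript{P}_i}$ differ only through the state distribution at $t=1$. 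Failing both, I would state the lemma with $\bar r_{\mathrm{in}}$ interpreted as the range (oscillation) bound, which is exactly what the argument uses.
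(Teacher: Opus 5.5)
Your reduction is the one the paper uses. Since $r_{\mathrm{px}}^i-r_{\mathrm{ex}}=\lambda r_{\mathrm{in}}^i$ and both marginal $Q$-functions are taken under the same trajectory law, the gap equals $\lambda\bigl(I_{\EuScript{P}_i\cup\{i\}}-I_{\EuScript{P}_i}\bigr)$. Your final step differs from the paper's, and yours is the sound one.

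The paper bounds each $Q$-gap by $\sum_{t\ge0}\gamma^t\lambda|r_{\mathrm{in}}^i|\le\lambda\bar r_{\mathrm{in}}/(1-\gamma)$, so it implicitly reads $\bar r_{\mathrm{in}}$ as a bound on $|r_{\mathrm{in}}^i|$. It then asserts that ``baselines cancel in the difference of $Q$'s.'' They do not cancel. $I_{\EuScript{P}_i}(\boldsymbol{s},\boldsymbol{a}_{\EuScript{P}_i})$ is the $\pi_{\theta_i}$-average of $I_{\EuScript{P}_i\cup\{i\}}(\boldsymbol{s},\boldsymbol{a}_{\EuScript{P}_i},\cdot)$, not equal to it. Under $|r_{\mathrm{in}}^i|\le\bar r_{\mathrm{in}}$ the honest bound is therefore $2\lambda\bar r_{\mathrm{in}}/(1-\gamma)$. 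You correctly identified the missing hypothesis: the constant $1$ requires $\bar r_{\mathrm{in}}$ to bound the \emph{oscillation} $\sup r_{\mathrm{in}}^i-\inf r_{\mathrm{in}}^i$. Your shift argument correctly shows that only the oscillation matters.

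You should make two adjustments.

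\textbf{Do not treat the range reading as a fallback.} Trying first to recover the constant $1$ under $|r_{\mathrm{in}}^i|\le\bar r_{\mathrm{in}}$ cannot succeed, because the factor $2$ is sharp. Consider the following construction:
\begin{itemize}
\item Some action $\boldsymbol{a}^i$ has $r_{\mathrm{in}}^i=\bar r_{\mathrm{in}}$ and sends the system into an absorbing region, distinguishable in $\boldsymbol{o}^i$, where $r_{\mathrm{in}}^i\equiv\bar r_{\mathrm{in}}$.
\item $\pi_{\theta_i}$ puts mass $1-\epsilon$ on actions with $r_{\mathrm{in}}^i=-\bar r_{\mathrm{in}}$ that lead to an absorbing region where $r_{\mathrm{in}}^i\equiv-\bar r_{\mathrm{in}}$.
\end{itemize}
Then $\lambda\bigl|I_{\EuScript{P}_i\cup\{i\}}-I_{\EuScript{P}_i}\bigr|=(2-2\epsilon)\lambda\bar r_{\mathrm{in}}/(1-\gamma)$, which tends to $2\lambda\bar r_{\mathrm{in}}/(1-\gamma)$. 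In particular, your second idea gives nothing: the $t\ge1$ parts differing only through the law of $\boldsymbol{s}_1$ does not help, since that law can change completely with $\boldsymbol{a}^i$.

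\textbf{State the range assumption explicitly.} The paper never says the intrinsic head is sigmoid-bounded or nonnegative; it never defines $\bar r_{\mathrm{in}}$ at all. Add $\sup r_{\mathrm{in}}^i-\inf r_{\mathrm{in}}^i\le\bar r_{\mathrm{in}}$ as a hypothesis of the lemma, and drop the appeal to LIIR implementations. With that hypothesis, your two-line argument is a complete proof, and it is more rigorous than the paper's.
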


\begin{IEEEproof}
Both advantages are discounted sums of per-step rewards that differ only in the additive intrinsic term $\lambda r_{\mathrm{in}}^i$ used inside the agent-$i$ proxy return. The difference of the corresponding $Q$-functions is bounded by $\sum_{t\ge 0}\gamma^t \lambda|r_{\mathrm{in}}^i|\le \frac{\lambda\bar r_{\mathrm{in}}}{1-\gamma}$, and baselines cancel in the difference of $Q$'s.
\end{IEEEproof}

\begin{proposition}[Potential-based intrinsic shaping yields no bias]
  \label{prop:potential}
  \begin{revblock}
  If $r_{\mathrm{in}}^i(\boldsymbol{o}^i,\boldsymbol{a}^i;\eta_i)=\gamma\Phi_i(\boldsymbol{s}')-\Phi_i(\boldsymbol{s})$ for some bounded potential $\Phi_i$, then, for any fixed prefix $\boldsymbol{a}_{\EuScript{P}_i}$,
  \begin{equation}
  \label{eq:potential_no_bias}
  A_{\mathrm{px}}^{i}(\boldsymbol{s},\boldsymbol{a}_{\EuScript{P}_i},\boldsymbol{a}^i)=A_{\mathrm{ex}}^{i}(\boldsymbol{s},\boldsymbol{a}_{\EuScript{P}_i},\boldsymbol{a}^i).
  \end{equation}
  Thus, potential-based shaping contributes only an action-independent baseline and does not bias the agent-$i$ policy update.
  \end{revblock}
  \end{proposition}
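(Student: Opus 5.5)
The plan is the classical Ng--Harada--Russell telescoping argument, applied at the level of the marginalized $Q$-functions in \eqref{eq:marginal_q}. The claim then follows by subtracting the two marginal $Q$'s that define $A_{\mathrm{px}}^{i}$ in \eqref{eq:proxy_advantage} and $A_{\mathrm{ex}}^{i}$ in \eqref{eq:extrinsic_prefix_adv}. Throughout I interpret $r_{\mathrm{in}}^i$ along a trajectory as $r_{\mathrm{in},t}^i=\gamma\Phi_i(\boldsymbol{s}_{t+1})-\Phi_i(\boldsymbol{s}_t)$, as the hypothesis prescribes.

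First, fix an arbitrary subset $\EuScript{S}$, a state $\boldsymbol{s}$ and actions $\boldsymbol{a}_{\EuScript{S}}$. Consider any trajectory generated from $\boldsymbol{s}_0=\boldsymbol{s}$, $\boldsymbol{a}_{\EuScript{S},0}=\boldsymbol{a}_{\EuScript{S}}$, with the remaining actions drawn from $\pi_{\boldsymbol{\theta},-\EuScript{S}}$ and subsequent actions from $\pi_{\boldsymbol{\theta}}$. Along it, the partial sums of the shaping term telescope:
\begin{equation*}
\sum_{t=0}^{T-1}\gamma^t\big(\gamma\Phi_i(\boldsymbol{s}_{t+1})-\Phi_i(\boldsymbol{s}_t)\big)=\gamma^{T}\Phi_i(\boldsymbol{s}_T)-\Phi_i(\boldsymbol{s}_0).
\end{equation*}
Since $\Phi_i$ is bounded and $\gamma<1$, the boundary term $\gamma^T\Phi_i(\boldsymbol{s}_T)\to 0$ pathwise and uniformly. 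Hence the infinite discounted shaping sum equals $-\Phi_i(\boldsymbol{s})$ almost surely.

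Second, I take expectations in \eqref{eq:marginal_q}, using $r_{\mathrm{px}}^i=r_{\mathrm{ex}}+\lambda r_{\mathrm{in}}^i$. The extrinsic rewards are bounded, as they are for Lemma~\ref{lem:gap_bound} to make sense, and the shaping term is bounded by $(1+\gamma)\sup|\Phi_i|$. Therefore the discounted series are dominated by a summable geometric sequence, and by dominated convergence sum and expectation can be interchanged. This gives, for every $\EuScript{S}$,
\begin{equation*}
Q_{\mathrm{px},\EuScript{S}}^{\,i}(\boldsymbol{s},\boldsymbol{a}_{\EuScript{S}})=Q_{\mathrm{ex},\EuScript{S}}^{\,i}(\boldsymbol{s},\boldsymbol{a}_{\EuScript{S}})-\lambda\Phi_i(\boldsymbol{s}).
\end{equation*}
The key point is that the correction $-\lambda\Phi_i(\boldsymbol{s})$ depends only on the initial state, not on $\EuScript{S}$ or $\boldsymbol{a}_{\EuScript{S}}$. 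This is the sense in which the shaping acts only as an action-independent baseline.

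Third, I apply this identity with $\EuScript{S}=\EuScript{P}_i\cup\{i\}$ and with $\EuScript{S}=\EuScript{P}_i$, for the fixed prefix $\boldsymbol{a}_{\EuScript{P}_i}$, and subtract. The two $-\lambda\Phi_i(\boldsymbol{s})$ terms cancel, which yields \eqref{eq:potential_no_bias}. The "no bias" conclusion then follows by inserting \eqref{eq:potential_no_bias} into Lemma~\ref{lem:prefix_pg}: the proxy gradient direction for agent $i$ coincides with the extrinsic one.

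The main obstacle is not the algebra but making the telescoping rigorous. One must justify the interchange of the infinite sum and the expectation and the vanishing of the tail term, and both rest on the boundedness of $\Phi_i$ and $r_{\mathrm{ex}}$. A secondary point is notational: the statement writes $r_{\mathrm{in}}^i$ as a function of $(\boldsymbol{o}^i,\boldsymbol{a}^i)$, whereas the potential form depends on $(\boldsymbol{s},\boldsymbol{s}')$. I would state explicitly that the argument uses only the trajectory-level form, so no Markov structure of $r_{\mathrm{in}}^i$ in $(\boldsymbol{o}^i,\boldsymbol{a}^i)$ is needed.
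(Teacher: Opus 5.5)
Your proposal is correct and follows the paper's proof essentially step for step. You telescope the discounted shaping sum pathwise to $-\Phi_i(\boldsymbol{s})$, deduce $Q_{\mathrm{px},\EuScript{S}}^{i}=Q_{\mathrm{ex},\EuScript{S}}^{i}-\lambda\Phi_i(\boldsymbol{s})$ for both $\EuScript{S}=\EuScript{P}_i\cup\{i\}$ and $\EuScript{S}=\EuScript{P}_i$, and cancel the shift in the difference. Your dominated-convergence justification and your remark on the $(\boldsymbol{o}^i,\boldsymbol{a}^i)$ versus $(\boldsymbol{s},\boldsymbol{s}')$ notation are careful additions that the paper leaves implicit. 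For the no-bias conclusion, invoking Lemma~\ref{lem:prefix_pg} is an equally valid substitute for the paper's zero-score-function remark.
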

  
  \begin{IEEEproof}
  \begin{revblock}
  Let $\EuScript{S}\in\{\EuScript{P}_i\cup\{i\},\EuScript{P}_i\}$. For a trajectory with $\boldsymbol{s}_0=\boldsymbol{s}$, the finite-horizon shaping return telescopes as
  \begin{equation}
  \begin{aligned}
  B_{i,T}
  &\triangleq \sum_{t=0}^{T}\gamma^t\bigl(\gamma\Phi_i(\boldsymbol{s}_{t+1})-\Phi_i(\boldsymbol{s}_{t})\bigr) \\
  &= \sum_{t=0}^{T}\gamma^{t+1}\Phi_i(\boldsymbol{s}_{t+1})-\sum_{t=0}^{T}\gamma^{t}\Phi_i(\boldsymbol{s}_{t}) \\
  &= -\Phi_i(\boldsymbol{s}_0)+\gamma^{T+1}\Phi_i(\boldsymbol{s}_{T+1}).
  \end{aligned}
  \label{eq:potential_telescoping}
  \end{equation}
  Since $\gamma\in(0,1)$ and $\Phi_i$ is bounded, the terminal term vanishes as $T\to\infty$, giving
  \begin{equation}
  \sum_{t=0}^{\infty}\gamma^t r_{\mathrm{in}}^i(\boldsymbol{o}_t^i,\boldsymbol{a}_t^i;\eta_i)=-\Phi_i(\boldsymbol{s}).
  \label{eq:potential_baseline}
  \end{equation}
  Hence, for either subset $\EuScript{S}$,
  \begin{equation}
  Q_{\mathrm{px},\EuScript{S}}^{i}(\boldsymbol{s},\boldsymbol{a}_{\EuScript{S}})
  =Q_{\mathrm{ex},\EuScript{S}}^{i}(\boldsymbol{s},\boldsymbol{a}_{\EuScript{S}})-\lambda\Phi_i(\boldsymbol{s}).
  \label{eq:potential_q_shift}
  \end{equation}
  The shift $-\lambda\Phi_i(\boldsymbol{s})$ is independent of $\boldsymbol{a}^i$ and cancels in the prefix-marginalized advantage:
  \begin{equation}
  \begin{aligned}
  A_{\mathrm{px}}^{i}
  \!=\!\bigl(\!Q_{\mathrm{ex},\EuScript{P}_i\cup\{i\}}^{i}\!-\!\lambda\Phi_i(\boldsymbol{s})\!\bigr)
   \!-\!\bigl(\!Q_{\mathrm{ex},\!\EuScript{P}_i}^{i}\!-\!\lambda\Phi_i(\boldsymbol{s})\bigr)
  \!=\!A_{\mathrm{ex}}^{i}.
  \end{aligned}
  \end{equation}
  This proves \eqref{eq:potential_no_bias}. Moreover, any action-independent residual has zero score-function contribution since $\mathbb{E}_{\boldsymbol{a}^i\sim\pi_{\theta_i}}[\nabla_{\theta_i}\log\pi_{\theta_i}(\boldsymbol{a}^i\mid\boldsymbol{o}^i)c_i]=c_i\nabla_{\theta_i}1=0$.
  \end{revblock}
  \end{IEEEproof}

\begin{theorem}[Extrinsic improvement under KL-controlled HAPPO updates]\label{thm:bias_aware}
  Let $\btheta \to \btheta'$ be one HAPPO pass with proxy advantages. \rcom{Under Assumption~\ref{ass:gfd}, this pass uses the graph-estimated advantage $\widehat A_{\mathrm{px},G_k}^{i}$ and the FD-regularized direction $g_{i,k}^{\mathrm{GFD}}$.} Assume that each accepted per-agent update satisfies the average-KL budget $\mathbb{E}_{\boldsymbol{s}\sim d_{\pi_{\boldsymbol{\theta}}}}[D_{\mathrm{KL}}(\pi_{\theta_m}(\cdot\mid\boldsymbol{o}^m)\|\pi_{\theta'_m}(\cdot\mid\boldsymbol{o}^m))]\le\delta_m$. This condition is enforced directly by the trust-region update in \eqref{eq:trust_region}; for the PPO-Clip update in \eqref{eq:happo_obj}, the theorem applies to the realized update only when the same KL budgets are satisfied through the KL monitoring/early-stopping rule described above. The extrinsic performance obeys:
  \begin{equation} \label{eq:theorem2_bound}
      J_{\text{ex}}(\pi_{\btheta'}) \geq \rcom{\widehat L^{\pi_{\btheta}}_{\mathrm{GFD}}}(\pi_{\btheta'}) - C_{\text{ex}} \sqrt{\frac{1}{2} \sum_{m=1}^M \delta_m} - \frac{M \lambda \rin}{(1 - \gamma)^2} \rcom{- \frac{M\varepsilon_{G,k}}{1-\gamma}},
  \end{equation}
  where the \rcom{graph-estimated} surrogate objective using proxy advantages is defined as:
  \begin{equation}
  \begin{aligned}
  \rcom{\widehat L^{\pi_{\btheta}}_{\mathrm{GFD}}}(\pi_{\btheta'})
    &= J_{\text{ex}}(\pi_{\btheta})  \\
     +& \frac{1}{1 - \gamma} 
      \sum_{i=1}^M 
      \E_{\substack{
          \boldsymbol{s} \sim d^{\pi_{\btheta}}\\
          \boldsymbol{a}_{\EuScript{P}_i} \sim \pi'_{\EuScript{P}_i}\\
          \boldsymbol{a}^i \sim \pi'_i}}
      \left[ \rcom{\widehat A^i_{\mathrm{px},G_k}}(\boldsymbol{s}, \boldsymbol{a}_{\EuScript{P}_i}, \boldsymbol{a}^i) \right].
  \end{aligned}
  \end{equation}
  \end{theorem}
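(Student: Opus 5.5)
The plan is to follow the standard TRPO/HAPPO route. We first obtain an exact performance-difference expression for the extrinsic return. We then decompose the joint advantage per agent via the telescoping identity. Finally, we replace extrinsic prefix advantages first by proxy advantages and then by graph-estimated ones, paying an additive error at each substitution.

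Concretely, we start from the performance-difference lemma
\[
J_{\mathrm{ex}}(\pi_{\btheta'})-J_{\mathrm{ex}}(\pi_{\btheta})=\frac{1}{1-\gamma}\E_{\boldsymbol{s}\sim d^{\pi_{\btheta'}}}\E_{\boldsymbol{u}\sim\pi_{\btheta'}}\big[A_{\mathrm{ex}}(\boldsymbol{s},\boldsymbol{u})\big],
\]
with $A_{\mathrm{ex}}$ evaluated under $\pi_{\btheta}$. Lemma~\ref{lem:telescoping} rewrites the inner term as $\sum_i A_{\mathrm{ex}}^i(\boldsymbol{s},\boldsymbol{a}_{\EuScript{P}_i},\boldsymbol{a}^i)$. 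Because $\pi_{\btheta'}$ factorizes, the expectation over $\boldsymbol{u}$ reduces to expectations over $\boldsymbol{a}_{\EuScript{P}_i}\sim\pi'_{\EuScript{P}_i}$ and $\boldsymbol{a}^i\sim\pi'_i$. The suffix agents are integrated out, which is consistent with the marginalization built into $Q^i_{\mathrm{ex},\EuScript{S}}$.

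Next, we swap the state distribution $d^{\pi_{\btheta'}}$ for $d^{\pi_{\btheta}}$. We bound the error by $\|d^{\pi_{\btheta'}}-d^{\pi_{\btheta}}\|_1$ times a uniform bound $\epsilon_A$ on the advantage sums, using $\|d'-d\|_1\le\frac{2\gamma}{1-\gamma}\E_d[D_{\mathrm{TV}}(\pi'\|\pi)]$. Pinsker and Jensen give $\E_d[D_{\mathrm{TV}}]\le\sqrt{\tfrac12\E_d[D_{\mathrm{KL}}]}$. Theorem~\ref{KL_decomposition} turns the joint average KL into $\sum_m\delta_m$, which is where the per-agent budgets enter. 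All constants, including $\epsilon_A$, the $\gamma/(1-\gamma)$ factors and the factor $2$, are absorbed into $C_{\mathrm{ex}}$. This yields $-C_{\mathrm{ex}}\sqrt{\tfrac12\sum_m\delta_m}$.

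We then convert $A_{\mathrm{ex}}^i$ into $A_{\mathrm{px}}^i$ term by term using Lemma~\ref{lem:gap_bound}. Each of the $M$ summands moves by at most $\lambda\rin/(1-\gamma)$, and with the prefactor $1/(1-\gamma)$ this gives the term $-M\lambda\rin/(1-\gamma)^2$. Finally, Assumption~\ref{ass:gfd}, via \eqref{eq:graph_error_bound}, replaces $A_{\mathrm{px}}^i$ by $\widehat A^i_{\mathrm{px},G_k}$ at cost $\varepsilon_{G,k}$ per agent. This gives $-M\varepsilon_{G,k}/(1-\gamma)$, and collecting the pieces produces $\widehat L^{\pi_{\btheta}}_{\mathrm{GFD}}(\pi_{\btheta'})$ and \eqref{eq:theorem2_bound}.

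The FD-regularized direction $g^{\mathrm{GFD}}_{i,k}$ plays no separate role. The bound holds for any accepted $\btheta'$ meeting the KL budgets, and the acceptance test in Assumption~\ref{ass:gfd} guarantees this, so the FD perturbation is controlled entirely through $\delta_m$.

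The main obstacle is the distribution-shift step. The prefix advantages are defined with respect to $\pi_{\btheta}$ but sampled under $\pi'$, so the uniform bound $\epsilon_A=\max_{\boldsymbol{s}}|\E_{\pi'}[\sum_i A^i_{\mathrm{ex}}]|$ must be finite. We would try first to get this from bounded rewards: $r_{\mathrm{ex}}$ is bounded since rates are finite under bounded power and the penalty is bounded on compact $\mathcal C$. A second difficulty is that the KL budgets are stated in the direction $D_{\mathrm{KL}}(\pi\|\pi')$ under $d_{\pi_{\btheta}}$. Pinsker is symmetric in total variation, so this direction suffices, and we would stay with the $d^{\pi_{\btheta}}$ version of the visitation-difference bound so that the expectation matches the one in the budget.
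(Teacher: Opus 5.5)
Your proposal is correct. It shares the paper's skeleton: a TRPO-type lower bound, Lemma~\ref{lem:telescoping}, then Lemma~\ref{lem:gap_bound} with \eqref{eq:graph_error_bound} for the $M\lambda\rin/(1-\gamma)^2$ and $M\varepsilon_{G,k}/(1-\gamma)$ terms, and Pinsker with Theorem~\ref{KL_decomposition} for the divergence term. One difference is cosmetic: the paper adds and subtracts $\widehat A^i_{\mathrm{px},G_k}$ once and bounds $|\widehat A^i_{\mathrm{px},G_k}-A^i_{\mathrm{ex}}|\le\lambda\rin/(1-\gamma)+\varepsilon_{G,k}$ directly, whereas you substitute in two stages.

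The substantive difference is the distribution-shift step. The paper quotes the TRPO bound as a black box, with penalty $C_{\mathrm{ex}}\max_{\boldsymbol{s}}D_{\mathrm{TV}}(\pi(\cdot\mid\boldsymbol{s})\|\pi'(\cdot\mid\boldsymbol{s}))$. It then bounds only $\E_{\boldsymbol{s}}[D_{\mathrm{TV}}]$ by $\sqrt{\tfrac12\sum_m\delta_m}$. An average-KL budget does not control a maximum over states, so that final substitution is not justified as written.

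You instead start from the performance-difference lemma and use the visitation inequality $\|d^{\pi'}-d^{\pi}\|_1\le\frac{2\gamma}{1-\gamma}\E_{\boldsymbol{s}\sim d^{\pi}}[D_{\mathrm{TV}}(\pi'\|\pi)[\boldsymbol{s}]]$, the lemma from Achiam et al.'s CPO analysis. This yields an \emph{average} TV penalty under $d^{\pi_{\btheta}}$, which is exactly the quantity the stated per-agent budgets control. Your route therefore closes the max-versus-average mismatch and gives the explicit constant $C_{\mathrm{ex}}=2\gamma\epsilon_A/(1-\gamma)^2$.

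One small adjustment is needed. Your $\epsilon_A=\max_{\boldsymbol{s}}|\E_{\pi'}[\sum_i A^i_{\mathrm{ex}}]|$ depends on $\pi'$. To make $C_{\mathrm{ex}}$ a genuine constant, bound it by $\max_{\boldsymbol{s},\boldsymbol{u}}|A_{\mathrm{ex}}(\boldsymbol{s},\boldsymbol{u})|\le 2\bar r_{\mathrm{ex}}/(1-\gamma)$, where $\bar r_{\mathrm{ex}}$ bounds $|r_{\mathrm{ex}}|$. Your bounded-reward remark already supplies this.

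What each route buys: the paper's is shorter because it leans on the classical TRPO statement. Yours is self-contained and actually matches the average-KL hypothesis the theorem imposes.
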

  
  \begin{IEEEproof}
  Let $\pi = \pi_{\btheta}$ and $\pi' = \pi_{\btheta'}$. The proof only uses the attained policy $\pi'$ and the per-agent KL budgets, not the numerical optimizer used to obtain $\pi'$. Hence, it directly covers \eqref{eq:trust_region} and conditionally covers \eqref{eq:happo_obj} when the clipped step is accepted under the same KL budgets. \rcom{Thus the FD gradient alteration in \eqref{eq:gfd_direction} is accounted for through the attained policy $\pi'$ and the accepted KL budget; the argument does not assume an unregularized HAPPO direction.} By the standard TRPO performance bound, there exists a constant $C_{\text{ex}} > 0$ such that:
  \begin{equation} \label{eq:trpo_bound}
      J_{\text{ex}}(\pi') \geq L^{\text{ex}}_{\pi}(\pi') - C_{\text{ex}} \max_{\boldsymbol{s}} D_{\text{TV}}(\pi(\cdot|\boldsymbol{s}) || \pi'(\cdot|\boldsymbol{s})),
  \end{equation}
  where $L^{\text{ex}}_{\pi}(\pi') := J_{\text{ex}}(\pi) + \frac{1}{1 - \gamma} \E_{\boldsymbol{s} \sim d^{\pi}} \E_{\boldsymbol{u} \sim \pi'(\cdot|\boldsymbol{s})} [A_{\text{ex}}(\boldsymbol{s}, \boldsymbol{u})]$.
  
  Using Lemma~\ref{lem:telescoping}, the joint advantage decomposes as $A_{\text{ex}}(\boldsymbol{s}, \boldsymbol{u}) = \sum_{i=1}^M A^i_{\text{ex}}(\boldsymbol{s}, \boldsymbol{a}_{\EuScript{P}_i}, \boldsymbol{a}^i)$. \rcom{We introduce the graph-estimated proxy advantage $\widehat A^i_{\mathrm{px},G_k}$ by adding and subtracting it inside the surrogate expectation:}
  \begin{align}
      L^{\text{ex}}_{\pi}(\pi') \!\geq\! \rcom{\widehat L^{\pi_{\btheta}}_{\mathrm{GFD}}}(\pi') \!- \!\frac{1}{1 - \gamma} \sum_{i=1}^M \left| \E \left[ \rcom{\widehat A^i_{\mathrm{px},G_k}} - A^i_{\text{ex}} \right] \right|. \label{eq:bias_step}
  \end{align}
  By Lemma~\ref{lem:gap_bound} and Assumption~\ref{ass:gfd}, \rcom{the graph-estimated advantage satisfies}
  \begin{equation}
  \rcom{\big|\widehat A^i_{\mathrm{px},G_k}-A^i_{\mathrm{ex}}\big|\le \frac{\lambda \rin}{1-\gamma}+\varepsilon_{G,k}.}
  \end{equation}
  Substituting this into (\ref{eq:bias_step}):
  \begin{align}
  L^{\text{ex}}_{\pi}(\pi')
  &\geq \rcom{\widehat L^{\pi_{\btheta}}_{\mathrm{GFD}}}(\pi') - \frac{1}{1 - \gamma} \sum_{i=1}^M \left(\frac{\lambda \rin}{1 - \gamma}\rcom{+\varepsilon_{G,k}}\right) \nonumber\\
  &= \rcom{\widehat L^{\pi_{\btheta}}_{\mathrm{GFD}}}(\pi') - \frac{M \lambda \rin}{(1 - \gamma)^2}\rcom{-\frac{M\varepsilon_{G,k}}{1-\gamma}}.
  \end{align}
  For the divergence term in (\ref{eq:trpo_bound}), applying Pinsker's inequality and the joint KL decomposition from Theorem~\ref{KL_decomposition} yields:
  \begin{align}
  \vspace{-4pt}
      \E_{\bs} \left[ D_{\text{TV}}(\pi || \pi') \right] 
      \leq \sqrt{\frac{1}{2} \sum_{m=1}^M \delta_m}.
      \vspace{-4pt}
  \end{align}
  Combining these results recovers the bound in (\ref{eq:theorem2_bound}).
  \end{IEEEproof}


Recall the bilevel structure in \eqref{eq:bilevel} and the outer objective \eqref{eq:outer_obj}.
Let $\theta'(\boldsymbol{\eta})$ denote the parameters after one inner HAPPO pass with proxy advantages.


\begin{table}[t]
\centering
\caption{\revise{Model-parameter settings used in the simulations.}}
\label{tab:model-hyper}
\fontsize{9}{11}\selectfont
\setlength{\tabcolsep}{3pt}
\renewcommand{\arraystretch}{1.05}
\begin{tabularx}{\linewidth}{@{}>{\raggedright\arraybackslash}X l@{}}
\toprule
\textbf{Parameters} & \textbf{Value} \\
\midrule
Per-agent \textbf{Actor} (MAPPO/HAPPO/DG-HAPPO) & 64 and 64, Tanh \\
Team \textbf{Critic} (MAPPO/HAPPO) & 128 and 128, ReLU \\
Centralized \textbf{Actor} (CPPO) & 256 and 256, Tanh \\
Centralized \textbf{Critic} (CPPO) & 256 and 256, Tanh \\
Proxy value (per agent) & 64 and 64, ReLU \\
Intrinsic reward net (per agent) & 64 and 64, Tanh \\
Discount factor $\gamma$ & 0.99 \\
PPO epochs $K_{\text{epochs}}$ & 10 \\
Clip parameter $\epsilon_{\text{clip}}$ & 0.2 \\
Learning rates & $10^{-3}$ \\
Entropy coefficient (CPPO) & 0.02 \\
Action log-std (initial) & 0.0 \\
LIIR: $\alpha$, meta-LR, grad clip, scale & 0.4, $10^{-3}$, 2.0, 2.0 \\
\textcolor{black}{Penalty schedule: $\varrho_0$, $\eta_{\varrho}$, $\varrho_{\max}$, $\epsilon_{\rm feas}$} 
& \textcolor{black}{0.1, 1.05, 20, $10^{-3}D$} \\
\textcolor{black}{Federated distillation (FD): $B_{\mathrm{FD}}$, $\omega_{\mathrm{FD}}$, $\tau_{\mathrm{FD}}$} 
& \textcolor{black}{64, 0.5, 1.0} \\
Training episodes / update interval & 15000 / 64 \\
\bottomrule
\end{tabularx}
\end{table}

\begin{proposition}[Correctness of the LIIR meta-gradient]
\label{prop:liir_correct}
If trajectories are generated by $\pi_{\theta'(\boldsymbol{\eta})}$ (or reweighted via exact importance sampling), then
\begin{equation}
\label{eq:liir_correct}
\begin{aligned}
\nabla_{\eta_i} J_{\mathrm{ex}}(\pi_{\theta'(\boldsymbol{\eta})})
&= \mathbb{E}\!\left[
      \nabla_{\theta_i'}\log \pi_{\theta_i'}(\boldsymbol{a}^i\!\mid\!\boldsymbol{o}^i)\,
      A_{\mathrm{ex}}(\boldsymbol{s},\boldsymbol{u})
    \right] \\
&\cdot
   \mathbb{E}\!\left[
      \alpha_i\,\lambda\,
      \nabla_{\theta_i}\log \pi_{\theta_i}(\boldsymbol{a}^i\!\mid\!\boldsymbol{o}^i)\,
      \nabla_{\eta_i} r_{\mathrm{in}}^{\,i}(\boldsymbol{o}^i,\boldsymbol{a}^i;\eta_i)
    \right]
\end{aligned}
\end{equation}
matching \eqref{eq:outer_pg}--\eqref{eq:inner_sensitivity}.
\end{proposition}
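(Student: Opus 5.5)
The plan is to combine the chain rule \eqref{eq:chain_rule} with two separately justified factor identities: an outer policy-gradient identity at the post-update parameters, and an inner sensitivity identity for one ascent step. Throughout I adopt the standing LIIR conventions stated before \eqref{eq:inner_sensitivity}: a single inner step $\theta_i'=\theta_i+\alpha_i\nabla_{\theta_i}J_{\mathrm{px}}^{i}(\boldsymbol{\theta})$, with critics and proxy baselines treated as independent of $\eta_i$ (stopped gradients). Since $\theta_j'$ for $j\neq i$ depends on $\eta_i$ only through $r_{\mathrm{in}}^j$, which does not involve $\eta_i$, only the agent-$i$ term survives. This reduces $\nabla_{\eta_i}J_{\mathrm{ex}}(\pi_{\theta'(\boldsymbol{\eta})})$ to $\nabla_{\theta_i'}J_{\mathrm{ex}}(\pi_{\theta'})\cdot\nabla_{\eta_i}\theta_i'$, as in \eqref{eq:chain_rule}.

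\textbf{Outer factor.} I will apply the policy-gradient theorem to $J_{\mathrm{ex}}$ at $\pi_{\theta'}$. Using the factorization, the joint score splits as $\sum_m\nabla\log\pi_{\theta_m'}$, and only the agent-$i$ term survives differentiation in $\theta_i'$. This gives $\mathbb{E}_{\pi_{\theta'}}[\nabla_{\theta_i'}\log\pi_{\theta_i'}\,Q_{\mathrm{ex}}(\boldsymbol{s},\boldsymbol{u})]$. Subtracting the action-independent baseline $V_{\mathrm{ex}}(\boldsymbol{s})$ costs nothing, by the zero-mean score argument used at the end of the proof of Proposition~\ref{prop:potential}, and so $Q_{\mathrm{ex}}$ becomes $A_{\mathrm{ex}}$. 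The hypothesis that trajectories come from $\pi_{\theta'(\boldsymbol{\eta})}$, or are exactly importance-reweighted, is exactly what makes this expectation the on-policy one, so \eqref{eq:outer_pg} holds with equality rather than approximation.

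\textbf{Inner factor.} I will differentiate $\theta_i'=\theta_i+\alpha_i\nabla_{\theta_i}J_{\mathrm{px}}^i$ in $\eta_i$, with $\theta_i$ fixed. By Lemma~\ref{lem:prefix_pg}, $\nabla_{\theta_i}J_{\mathrm{px}}^i=\mathbb{E}[\nabla_{\theta_i}\log\pi_{\theta_i}\,A_{\mathrm{px}}^i]$. Here $\eta_i$ enters only through $r_{\mathrm{px}}^i=r_{\mathrm{ex}}+\lambda r_{\mathrm{in}}^i$, and by the stopped-critic convention the sampling distribution and baseline do not depend on $\eta_i$. Differentiating under the expectation therefore yields $\alpha_i\lambda\,\mathbb{E}[\nabla_{\theta_i}\log\pi_{\theta_i}\,\nabla_{\eta_i}r_{\mathrm{in}}^i]$, which is \eqref{eq:inner_sensitivity}. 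This uses the one-step (LIIR) credit convention, in which the advantage's reward term is the immediate $r_{\mathrm{in}}^i$; a Leibniz/dominated-convergence justification needs bounded $\nabla_{\eta_i}r_{\mathrm{in}}^i$. Multiplying the two factors gives \eqref{eq:liir_correct}. The result should be read as a matrix product: a row vector times the Jacobian $\nabla_{\eta_i}\theta_i'$.

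\textbf{Main obstacle.} The delicate step is the inner factor. Exact differentiation of the full discounted proxy advantage would propagate $\nabla_{\eta_i}r_{\mathrm{in}}^i$ through future rewards and through the learned proxy critic. I would handle this by stating explicitly that the proposition holds under the ideal conditions listed before \eqref{eq:meta_gradient_bias}: exact advantages, accounted or vanishing critic dependence, and no clipping. Under those conditions the extra terms are either absent by construction or zero, so the residual $\varepsilon_{\mathrm{meta},i}$ vanishes and equality holds.
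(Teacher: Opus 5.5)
Your decomposition is the same as the paper's: the chain rule \eqref{eq:chain_rule}, an on-policy outer policy gradient with a $V_{\mathrm{ex}}$ baseline, and a one-step inner sensitivity. The outer factor is fine; the gap is in the inner factor. If $\theta_i'=\theta_i+\alpha_i\nabla_{\theta_i}J_{\mathrm{px}}^i$ with the exact gradient of Lemma~\ref{lem:prefix_pg}, then $\eta_i$ enters through the entire marginalized $Q$-difference $A_{\mathrm{px}}^i$, not just through an immediate reward. Differentiating under the expectation then gives
\[
\nabla_{\eta_i}\theta_i'=\alpha_i\lambda\,\mathbb{E}\Bigl[\nabla_{\theta_i}\log\pi_{\theta_i}(\boldsymbol{a}^i\mid\boldsymbol{o}^i)\Bigl(\nabla_{\eta_i}r_{\mathrm{in}}^i(\boldsymbol{o}^i,\boldsymbol{a}^i;\eta_i)+\mathbb{E}\Bigl[\sum_{t\ge 1}\gamma^t\nabla_{\eta_i}r_{\mathrm{in}}^i(\boldsymbol{o}_t^i,\boldsymbol{a}_t^i;\eta_i)\Bigm|\boldsymbol{s},\boldsymbol{a}_{\EuScript{P}_i},\boldsymbol{a}^i\Bigr]\Bigr)\Bigr].
\]
The baseline $Q_{\mathrm{px},\EuScript{P}_i}^i$ drops by the zero-mean score argument. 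The $t\ge 1$ terms, however, depend on $\boldsymbol{a}^i$ through the transition and are generically nonzero. An example is the positioning agents, whose actions set the next MA positions via \eqref{eq:position_action_mapping}.

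Your remedy in the last paragraph does not remove these terms. Exact advantages and fully accounted critic dependence on $\eta_i$ are precisely the conditions under which the terms are present. In a TD form they reappear as $\gamma\nabla_{\eta_i}V_{\mathrm{px}}^{i}(\boldsymbol{s}')$, which depends on $\boldsymbol{a}^i$ through $\boldsymbol{s}'$. So ``equality holds'' does not follow. The one-step formula is the exact derivative only of the LIIR surrogate map the paper invokes when it states \eqref{eq:inner_sensitivity}:
\[
\theta_i'=\theta_i+\alpha_i\,\mathbb{E}\bigl[\nabla_{\theta_i}\log\pi_{\theta_i}(\boldsymbol{a}^i\mid\boldsymbol{o}^i)\bigl(r_{\mathrm{ex}}+\lambda r_{\mathrm{in}}^i+\gamma\bar V(\boldsymbol{s}')-\bar V(\boldsymbol{s})\bigr)\bigr],
\]
with $\bar V$ held fixed, so that $\eta_i$ appears only in the immediate reward. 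You should define $\theta'(\boldsymbol{\eta})$ this way from the outset and not route the inner factor through Lemma~\ref{lem:prefix_pg}. The proposition should then be read as the exact gradient of $J_{\mathrm{ex}}$ composed with this surrogate map, not of an exact ascent step on $J_{\mathrm{px}}^i$.

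A smaller point of the same kind concerns your claim that only the agent-$i$ term survives the chain rule. That holds for your simultaneous one-step model, but not for the sequential HAPPO pass that defines $\theta'(\boldsymbol{\eta})$ in the statement. Agents updated after $i$ use prefix importance weights involving $\pi_{\theta_i'}$ (the weight-update step in Algorithm~\ref{alg:marl_scalable}). Hence $\theta_j'$ depends on $\eta_i$ through $\theta_i'$, and the total derivative acquires terms $\nabla_{\theta_j'}J_{\mathrm{ex}}\,(\partial\theta_j'/\partial\theta_i')\,\nabla_{\eta_i}\theta_i'$. The paper's two-line proof drops these terms as well. You should still state this as an explicit convention (a decoupled inner step, or stopped gradients through the prefix weights) rather than present it as a consequence.
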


\begin{IEEEproof}
Apply the chain rule $\nabla_{\eta_i} J_{\mathrm{ex}}(\pi_{\theta'(\boldsymbol{\eta})})=(\nabla_{\theta_i'} J_{\mathrm{ex}})(\nabla_{\eta_i}\theta_i')$. The first factor is the extrinsic policy gradient; the second equals the sensitivity of one PPO step w.r.t.\ $\eta_i$, which reduces to \eqref{eq:inner_sensitivity}.
\end{IEEEproof}
\begin{assumption}[\revise{Graph-FD stochastic approximation}]
\label{ass:gfd-sa}
\begin{revblock}
The dynamic graph critic and FD-regularized actor losses are Lipschitz in their trainable parameters, with stochastic-gradient noise of bounded second moment. The graph-induced advantage error satisfies
\begin{equation}
\label{eq:gfd_sa_graph_error_bound}
\bigl\|\widehat{A}_{\mathrm{ex},k}-A_{\mathrm{ex},k}\bigr\|_{L_2(d_{\pi_{\boldsymbol{\theta}_k}})}
\le \varepsilon_{G,k},
\end{equation}
and varies slowly across consecutive inner iterations:
\begin{equation}
\label{eq:gfd_sa_graph_slow_variation}
\bigl\|\widehat{A}_{\mathrm{ex},k+1}-\widehat{A}_{\mathrm{ex},k}\bigr\|_{L_2(d_{\pi_{\boldsymbol{\theta}_k}})}
\le c_G\alpha_k+\varepsilon_{G,k},
\end{equation}
for some constant $c_G>0$.
\end{revblock}
\end{assumption}
\begin{theorem}[\revise{Two-time-scale convergence with graph-FD perturbations}]
\label{thm:gfd_convergence}
\begin{revblock}
Let $\sum_k\alpha_k=\sum_k\beta_k=\infty$, $\sum_k(\alpha_k^2+\beta_k^2)<\infty$, and $\beta_k/\alpha_k\to0$. Under Assumption~\ref{ass:gfd-sa} and bounded martingale-difference gradient noise, suppose $\sum_k\alpha_k\varepsilon_{G,k}<\infty$ and either (a) $\sum_k\alpha_k\omega_{\mathrm{FD},k}<\infty$ or (b) $\omega_{\mathrm{FD},k}\equiv\omega_{\mathrm{FD}}$. Then the iterates converge to stationary points of the corresponding limiting bilevel objective:
\begin{equation}
\liminf_{k\to\infty}\mathbb{E}\big[\|\nabla_{\theta}\mathcal{L}_{\mathrm{lim}}(\theta_k;\eta_k)\|^2\!+\!\|\nabla_{\eta}J_{\mathrm{lim}}(\theta_k,\eta_k)\|^2\big]\!=\!0,
\end{equation}
where $(\mathcal{L}_{\mathrm{lim}},J_{\mathrm{lim}})$ are the original HAPPO-LIIR limits in case (a), and the FD-regularized limits in case (b), with inner objective $\mathcal{L}_{\mathrm{HAPPO}}-\omega_{\mathrm{FD}}\sum_i\mathcal{L}_{i}^{\mathrm{FD}}$.
\end{revblock}
\end{theorem}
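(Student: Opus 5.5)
The plan is to cast the algorithm as a two-time-scale stochastic approximation (SA) and apply the Borkar-type ODE method. Each iteration is written as
\begin{align}
\theta_{k+1}&=\theta_k+\alpha_k\big(\nabla_\theta\mathcal{L}_{\mathrm{lim}}(\theta_k;\eta_k)+b^{\theta}_k+M^{\theta}_{k+1}\big),\\
\eta_{k+1}&=\eta_k+\beta_k\big(\nabla_\eta J_{\mathrm{lim}}(\theta_k,\eta_k)+b^{\eta}_k+M^{\eta}_{k+1}\big).
\end{align}
Here $M^{\theta}_{k+1}$ and $M^{\eta}_{k+1}$ are martingale-difference noises with bounded second moments, which follows from the assumption and from Assumption~\ref{ass:gfd-sa}. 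The terms $b^{\theta}_k,b^{\eta}_k$ collect deterministic biases. The first bias is the graph-critic error, bounded via \eqref{eq:gfd_sa_graph_error_bound} by $C\varepsilon_{G,k}$; Cauchy--Schwarz converts the $L_2(d_{\pi_{\theta_k}})$ advantage error into a score-weighted gradient error. The second is the FD term $-\omega_{\mathrm{FD},k}\nabla_\theta\mathcal{L}^{\mathrm{FD}}$, bounded by $\omega_{\mathrm{FD},k}G_{\mathrm{FD}}$. The third is the meta-gradient bias $\varepsilon_{\mathrm{meta},i}$ of \eqref{eq:meta_gradient_bias}. I would treat the last one under the ideal conditions of Proposition~\ref{prop:liir_correct}, or assume it is absorbed into $\varepsilon_{G,k}$.

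Case (a) and case (b) differ only in where the FD bias goes.

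In case (b), $\omega_{\mathrm{FD}}$ is constant, so the FD term is not a bias. It is moved into the drift by redefining $\mathcal{L}_{\mathrm{lim}}=\mathcal{L}_{\mathrm{HAPPO}}-\omega_{\mathrm{FD}}\sum_i\mathcal{L}_i^{\mathrm{FD}}$. By Assumption~\ref{ass:gfd-sa} this objective is still Lipschitz-smooth.

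In case (a), $\sum_k\alpha_k\omega_{\mathrm{FD},k}<\infty$. Together with $\sum_k\alpha_k\varepsilon_{G,k}<\infty$, this makes the total accumulated bias $\sum_k\alpha_k\|b^\theta_k\|$ summable. Since $\beta_k\le\alpha_k$ eventually, $\sum_k\beta_k\|b^\eta_k\|$ is summable as well.

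The core step is a descent-lemma argument, which I would use instead of invoking an ODE attractor structure we cannot verify. For the fast scale, apply $L$-smoothness of $\mathcal{L}_{\mathrm{lim}}(\cdot;\eta)$ and take expectations:
\begin{equation}
\mathbb{E}\mathcal{L}_{\mathrm{lim}}(\theta_{k+1};\eta_k)\ge\mathbb{E}\mathcal{L}_{\mathrm{lim}}(\theta_k;\eta_k)+\tfrac{\alpha_k}{2}\mathbb{E}\|\nabla_\theta\mathcal{L}_{\mathrm{lim}}\|^2-\alpha_k C\|b^\theta_k\|^{\,}-C\alpha_k^2.
\end{equation}
The drift of $\eta$ enters through the term $\mathcal{L}_{\mathrm{lim}}(\theta_{k+1};\eta_{k+1})-\mathcal{L}_{\mathrm{lim}}(\theta_{k+1};\eta_k)=O(\beta_k)$. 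Because $\beta_k/\alpha_k\to0$, this term is dominated by the $\alpha_k$ descent term, so $\eta$ is quasi-static on the fast scale. The slow-variation condition \eqref{eq:gfd_sa_graph_slow_variation} plays the same role for the critic: consecutive targets differ by $O(\alpha_k)+\varepsilon_{G,k}$, and these increments are again summable against the step sizes.

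A parallel inequality holds for $J_{\mathrm{lim}}$ in $\eta$ with steps $\beta_k$. Summing the two telescoping inequalities and using boundedness of the objectives (bounded rewards give $|J|\le \bar r/(1-\gamma)$) yields
\[\sum_k\alpha_k\mathbb{E}\|\nabla_\theta\mathcal{L}_{\mathrm{lim}}\|^2<\infty, \qquad \sum_k\beta_k\mathbb{E}\|\nabla_\eta J_{\mathrm{lim}}\|^2<\infty.\]
Since $\sum\alpha_k=\sum\beta_k=\infty$, both $\liminf$'s are zero.

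The main obstacle is converting these two separate $\liminf$ statements into the joint $\liminf$ of the sum. I would handle it with a weighted Lyapunov function $\Psi_k=J_{\mathrm{lim}}(\theta_k,\eta_k)+c\,\mathcal{L}_{\mathrm{lim}}(\theta_k;\eta_k)$, with $c$ chosen large. Its one-step increment is bounded below by $\beta_k\big(\|\nabla_\theta\mathcal{L}\|^2+\|\nabla_\eta J\|^2\big)$ minus summable terms. The reason is that $\alpha_k\ge\beta_k$ eventually, and the cross-coupling of $\theta$ into $J$ is controlled by $c\alpha_k$ dominating. Summing then gives a single series $\sum_k\beta_k\,\mathbb{E}[\cdot]<\infty$, and $\sum_k\beta_k=\infty$ forces the joint $\liminf$ to vanish.

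The delicate part is the cross term $\langle\nabla_\theta J,\Delta\theta\rangle$. It is not a priori sign-controlled, so I would bound it by Young's inequality against $c\alpha_k\|\nabla_\theta\mathcal{L}\|^2$, using Lipschitz continuity of the gradients.
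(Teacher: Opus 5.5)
Your route differs from the paper's. The paper gives a Borkar-style ODE sketch: the fast $\theta$-recursion equilibrates with $\eta$ quasi-static, and the slow $\eta$-recursion is then analyzed with $\theta$ replaced by its equilibrium. Your descent-lemma substitute has two gaps. The first is on the fast scale. The $\eta$-induced change $\mathcal{L}_{\mathrm{lim}}(\theta_{k+1};\eta_{k+1})-\mathcal{L}_{\mathrm{lim}}(\theta_{k+1};\eta_k)\approx\beta_k\langle\nabla_\eta\mathcal{L}_{\mathrm{lim}},\nabla_\eta J_{\mathrm{lim}}\rangle$ is $O(\beta_k)$ with no sign. It is not ``dominated'' by $\tfrac{\alpha_k}{2}\|\nabla_\theta\mathcal{L}_{\mathrm{lim}}\|^2$, because that term vanishes exactly near the inner stationary set, which is where the iterates should be. Since $\sum_k\beta_k=\infty$, telescoping only gives $\sum_{k\le K}\alpha_k\mathbb{E}\|\nabla_\theta\mathcal{L}_{\mathrm{lim}}\|^2\le C+C\sum_{k\le K}\beta_k$. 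So $\sum_k\alpha_k\mathbb{E}\|\nabla_\theta\mathcal{L}_{\mathrm{lim}}\|^2<\infty$ does not follow; a tracking lag of order $\beta_k/\alpha_k$ can make it diverge. This part is repairable: divide by $\sum_{k\le K}\alpha_k$ and use $\sum_{k\le K}\beta_k/\sum_{k\le K}\alpha_k\to0$ (Stolz--Ces\`aro) to recover the fast-scale $\liminf$.

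The second gap breaks the argument as written. The ``parallel inequality'' for $J_{\mathrm{lim}}$ fails because $\theta$ moves at rate $\alpha_k\gg\beta_k$, so the increment of $J_{\mathrm{lim}}(\theta_k,\eta_k)$ contains $\alpha_k\langle\nabla_\theta J_{\mathrm{lim}},\nabla_\theta\mathcal{L}_{\mathrm{lim}}\rangle$.
\begin{itemize}
\item In $\Psi_k$, Young's inequality trades this for $\tfrac{c\alpha_k}{2}\|\nabla_\theta\mathcal{L}_{\mathrm{lim}}\|^2+\tfrac{\alpha_k}{2c}\|\nabla_\theta J_{\mathrm{lim}}\|^2$. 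The second piece is neither summable nor dominated by $\beta_k\|\nabla_\eta J_{\mathrm{lim}}\|^2$. The reason is that $\nabla_\theta J_{\mathrm{lim}}$ does not vanish where $\nabla_\theta\mathcal{L}_{\mathrm{lim}}$ does: proxy-stationary policies are not extrinsic-stationary, which is exactly why LIIR is bilevel.
\item The term $c\,\mathcal{L}_{\mathrm{lim}}$ also contributes $c\beta_k\langle\nabla_\eta\mathcal{L}_{\mathrm{lim}},\nabla_\eta J_{\mathrm{lim}}\rangle$, which grows with $c$. So no choice of $c$ yields ``$\beta_k(\cdots)$ minus summable terms.''
\item Even with perfect tracking the cross term survives. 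Along $\theta=\theta^\star(\eta)$ it becomes $\beta_k\langle\nabla_\theta J_{\mathrm{lim}},D\theta^\star(\eta)\nabla_\eta J_{\mathrm{lim}}\rangle$, which has the same order as the progress term and no sign.
\end{itemize}
The missing ingredient is the structure you chose to avoid. You need an inner solution map $\theta^\star(\eta)$, for example a stable, Lipschitz equilibrium of $\dot\theta=\nabla_\theta\mathcal{L}_{\mathrm{lim}}(\theta;\eta)$. With it you prove $\theta_k-\theta^\star(\eta_k)\to0$ as a full limit, which also settles your joint-versus-separate $\liminf$ issue. You then analyze the slow recursion in $\eta$ alone, with drift $\nabla_\eta J_{\mathrm{lim}}(\theta^\star(\eta),\eta)$, and this still requires a Lyapunov function for that drift. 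The paper obtains this only implicitly, by appealing to standard two-time-scale results. A descent lemma on the raw pair $(\theta_k,\eta_k)$ cannot replace it.
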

\begin{IEEEproof}
\begin{revblock}
The graph critic error in \eqref{eq:gfd_sa_graph_error_bound} is a summable perturbation because $\sum_k\alpha_k\varepsilon_{G,k}<\infty$, while \eqref{eq:gfd_sa_graph_slow_variation} rules out non-summable critic drift. FD adds the drift $-\omega_{\mathrm{FD},k}\nabla_{\theta_i}\mathcal{L}_{i}^{\mathrm{FD}}$ in \eqref{eq:gfd_direction}: it is summable in case (a), and becomes part of the fixed FD-regularized limiting ODE in case (b). Since $\beta_k/\alpha_k\to0$, the inner recursion evolves on the fast timescale with quasi-static LIIR parameters, whereas the outer recursion tracks the resulting stationary set. Standard two-time-scale stochastic-approximation results then imply convergence to the stationary set of the relevant limiting bilevel objective.
\end{revblock}
\end{IEEEproof}

Theorem~\ref{thm:bias_aware} quantifies the cost of using proxy advantages\revise{; the bound tightens} to the extrinsic case under potential-based shaping (Proposition~\ref{prop:potential}). Proposition~\ref{prop:liir_correct} ensures that the LIIR meta-gradient aligns outer learning with the system objective.

\begin{revblock}
The analysis above provides a KL-controlled, sequential stochastic-approximation abstraction of Algorithm~\ref{alg:marl_scalable}, rather than a line-by-line proof of all PPO-Clip implementation details. Practical components such as rollout reuse, learned graph critics, FD regularization, and mini-batch updates are covered through KL monitoring, clipped ratios, bounded approximation errors, and standard stochastic-approximation conditions.
\end{revblock}

\begin{figure}[t]
\captionsetup{skip=2pt}
    \centering
    \subfloat[Perfect CSI]{
        \includegraphics[width=0.8\linewidth]{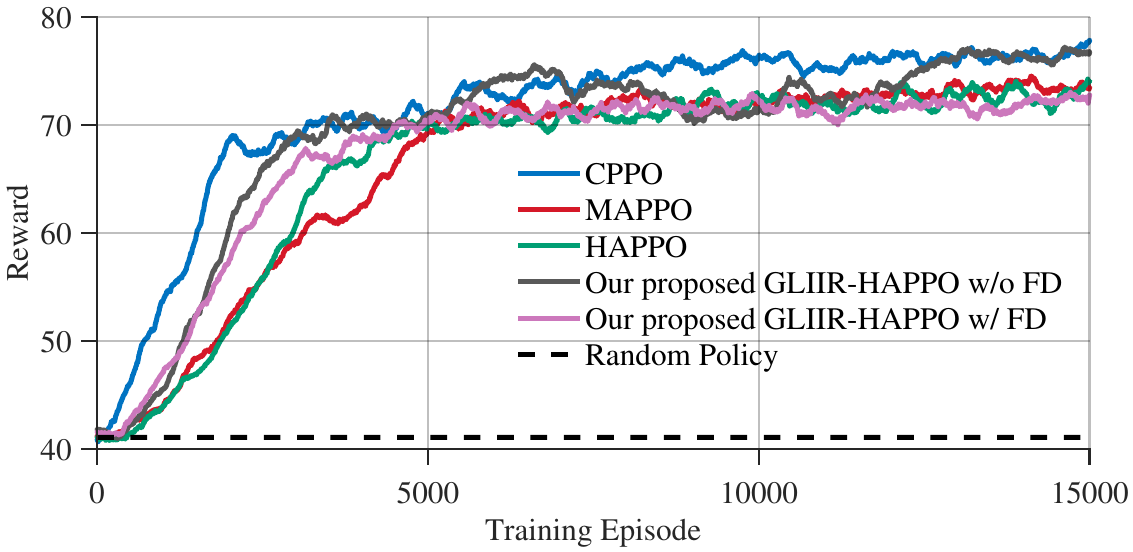}
        \label{fig:noiseA}
    }\\[-0.3ex]   
    \subfloat[Imperfect CSI, $\varepsilon=0.05$]{
        \includegraphics[width=0.8\linewidth]{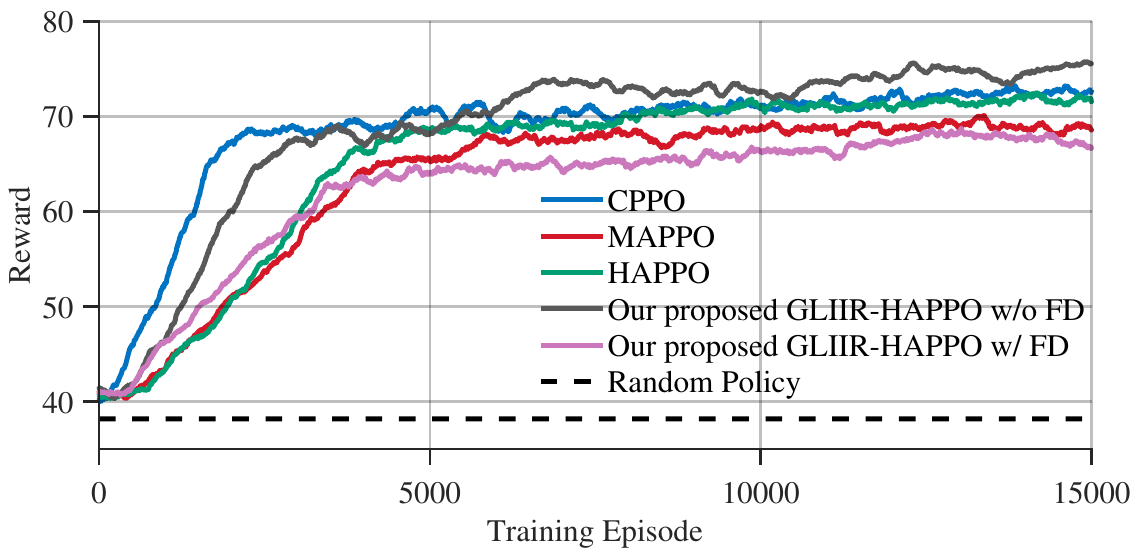}
        \label{fig:noiseB}
    }
    \caption{Convergence performance of the proposed GLIIR-HAPPO algorithm compared with benchmark schemes under (a) perfect CSI and (b) imperfect CSI with channel estimation error $\varepsilon = 0.05$.}
    \label{fig:Convergence_noise}
\end{figure}

\vspace{-0.8cm}
\section{Simulation Results}\label{sec:simulation_results}
\vspace{-0.1cm}
In this section, simulation results are presented to evaluate the performance of the proposed
GLIIR-HAPPO algorithm in MA-aided cell-free massive MIMO systems.
We consider a system with $L = 6$ APs and $K = 10$ UEs, where all entities are randomly distributed in a square area of $1000 \times 1000~\text{m}^2$ \cite{[162], 10559632}.
\begin{revblock}
Each AP is equipped with $N=4$ MAs, i.e., four movable radiating elements. This moderate-$N$ setting is used for computational tractability and controlled comparison, and it does not assume strong channel hardening.
\end{revblock}
The channel is modeled adopting a geometric approach \cite{10458417, 10243545} with $P = 10$ propagation paths at a carrier frequency of 3~GHz. The maximum downlink transmit power is $P_{l}^{\max}= 1$~W, $\forall l$, and the normalized noise power is $\sigma^2 = 0.01$. 
We utilize the regularized zero-forcing (RZF) precoding scheme \cite{4533694} with $\delta=10^{-2}$.
\begin{revblock}
  The same deterministic RZF precoding rule is adopted across all benchmark schemes. Consequently, the performance differences reported in Section V are solely attributable to the optimized power-allocation and MA-positioning strategies, ensuring a fair comparison without any confounding gains from scheme-specific precoder optimization.
\end{revblock}
To evaluate the robustness of the proposed algorithm, we consider the case of imperfect CSI, i.e., the practical channel can be modeled as $\widetilde{\boldsymbol{h}}_{lk}=\boldsymbol{h}_{lk}+\Delta \boldsymbol{h}_{lk}$ and $\Delta \boldsymbol{h}_{lk}\sim \mathcal{C N}(\mathbf{0},  \varepsilon \boldsymbol{I})$.
\revise{The model-parameter settings of the proposed GLIIR-HAPPO algorithm are shown in Table~\ref{tab:model-hyper}, and the simulations are conducted on an NVIDIA GeForce GTX 4060 GPU.}

Next, the performance of \textbf{Algorithm~\ref{alg:marl_scalable}} is compared with the following benchmark schemes:
(1) Heuristic-based schemes: particle swarm optimization (PSO) \cite{10742118}.
(2) MARL-based schemes: centralized PPO (CPPO) \cite{lu2024centralized}, MAPPO \cite{NEURIPS2022_9c1535a0}, and HAPPO \cite{Zhong2024HARL}. (3) Alternating optimization (AO)-based schemes \cite{jin2024handling}.
(4) Antenna selection (AS) \cite{1399466}: the AP adopts a larger UPA comprising $2N$ antennas ($\lambda / 2$-spaced), selecting
the optimal $N$-antenna subset via exhaustive search.

\begin{figure}[t]
\captionsetup{skip=-13pt}
	\centering
	\includegraphics[width=0.5\textwidth]{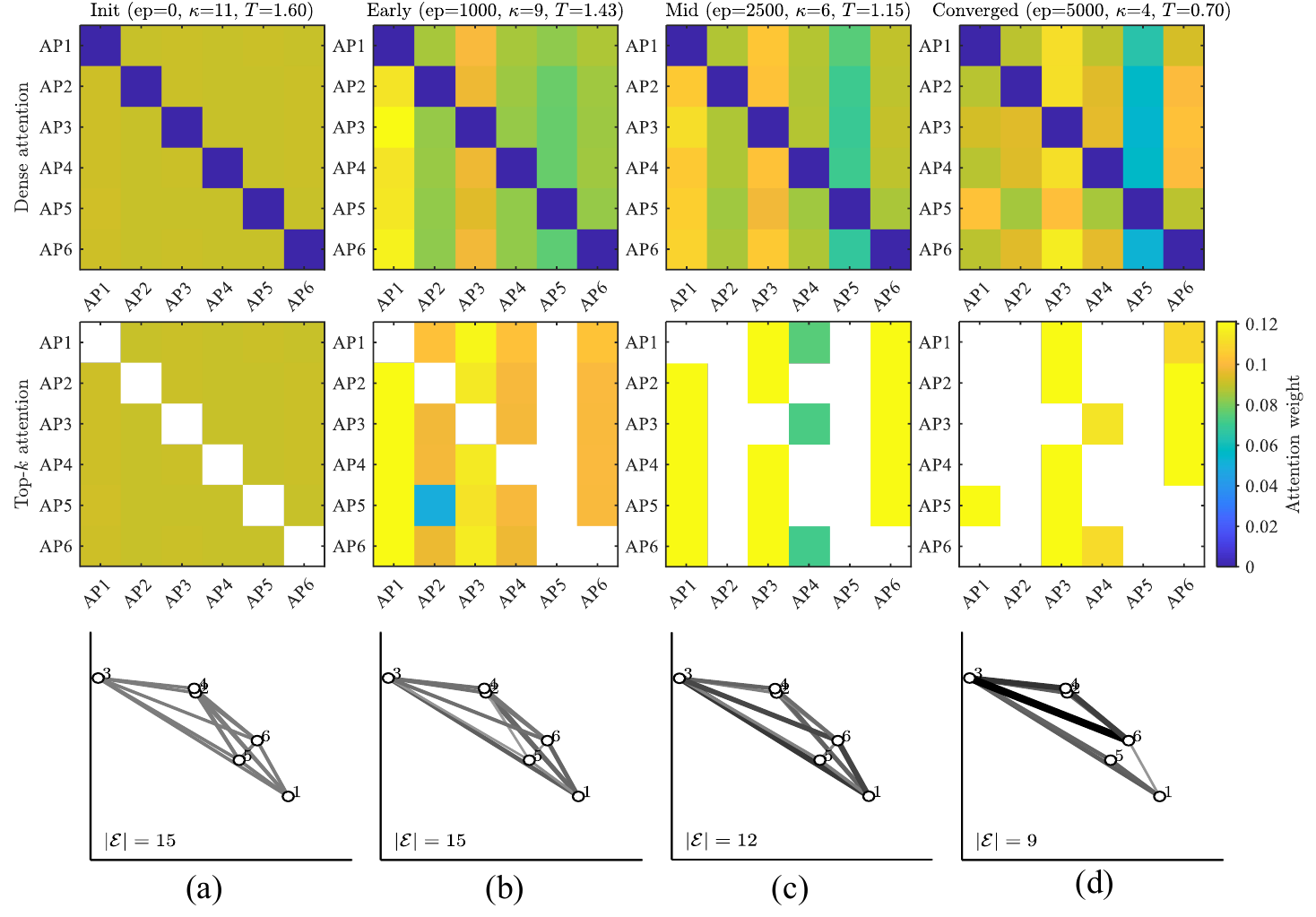}
	\caption{Visualization of the learned sparse attention patterns and AP--AP interaction graph.} 
\label{fig:heat_map}
\end{figure}

Fig.~\ref{fig:Convergence_noise} illustrates the convergence behavior of the proposed GLIIR-HAPPO algorithm under both perfect CSI and imperfect CSI with an estimation error of $\epsilon=0.05$. We benchmark our approach against CPPO, MAPPO, standard HAPPO, and a random policy.
\revise{In the perfect-CSI scenario shown in Fig.~\ref{fig:Convergence_noise}(a), the proposed framework demonstrates substantially faster convergence and a higher steady-state reward than the MAPPO and HAPPO baselines.} \scopeedit{In this simulation, the centralized CPPO baseline achieves a slightly higher final reward, while GLIIR-HAPPO achieves comparable performance with decentralized online policy execution.}

\revise{The effectiveness of the proposed architecture is further highlighted in the imperfect-CSI regime depicted in Fig.~\ref{fig:Convergence_noise}(b).} While all algorithms experience performance degradation under channel uncertainty, the proposed GLIIR-HAPPO algorithm maintains a stable and robust convergence trajectory. Notably, the inclusion of federated distillation (FD) plays a crucial role in improving training stability under noisy observations by aligning policies across homogeneous agent groups, thereby achieving the highest average reward among the distributed schemes.

\begin{table*}[t] 
  \centering
  \begingroup
  \color{black}
  \arrayrulecolor{black}
  \fontsize{8.5}{10.2}\selectfont
  \setlength{\tabcolsep}{2.2pt}
  \renewcommand{\arraystretch}{0.98}
  \makeatletter
  \def\fnum@table{\textcolor{black}{\tablename~\thetable}}
  \makeatother

  \caption{Computational Complexity of the Considered Algorithms.}
  \label{Computational_C}

  \begin{tabularx}{\textwidth}{
    @{}
    >{\raggedright\arraybackslash}p{0.29\textwidth}
    >{\centering\arraybackslash}X
    >{\centering\arraybackslash}X
    @{}
  }
    \toprule
    Algorithm 
      & Per-iteration / per-solve complexity 
      & Overall complexity \\
    \midrule

    AO (Penalty-BCD + geometric projection) 
      & \makecell[c]{Power subproblem: $\mathcal{O}\big((LK)^3\big)$;\\
        position subproblem: $\mathcal{O}(L N^3)$}
      & $\mathcal{O}\big(I_{\mathrm{AO}}\big((LK)^3 + L N^3\big)\big)$ \\[1pt]

    Heuristic (PSO) 
      & One iteration: $\mathcal{O}\big(P_{\mathrm{pop}}(D + L K N)\big)$ 
      & $\mathcal{O}\big(I_{\mathrm{PSO}} P_{\mathrm{pop}}(D + L K N)\big)$ \\[1pt]

    RL (CPPO, training)
      & \makecell[c]{One update step:\\
        $\mathcal{O}\big(P_{\mathrm{actor}}^{\mathrm{C}} 
        + P_{\mathrm{critic}}^{\mathrm{C}}\big)$}
      & \makecell[c]{$\mathcal{O}\big(E T 
        (P_{\mathrm{actor}}^{\mathrm{C}} 
        + P_{\mathrm{critic}}^{\mathrm{C}})\big)$} \\[1pt]

    RL (GLIIR-HAPPO, training) 
      & \makecell[c]{One update step:\\
        $\mathcal{O}\big(M^2 d_k 
        + M(P_{\mathrm{actor}} + P_{\mathrm{in}})\big)$}
      & \makecell[c]{$\mathcal{O}\big(E T 
        (M^2 d_k + M(P_{\mathrm{actor}} + P_{\mathrm{in}}))\big)$} \\[1pt]

    RL (GLIIR-HAPPO, online) 
      & Per time slot: $\mathcal{O}(M P_{\mathrm{actor}})$ 
      & (single-shot inference cost) \\

    \bottomrule
  \end{tabularx}

  \arrayrulecolor{black}
  \endgroup
\end{table*}

\begin{figure}[t]
\captionsetup{skip=-15pt}
	\centering
	\includegraphics[width=0.5\textwidth]{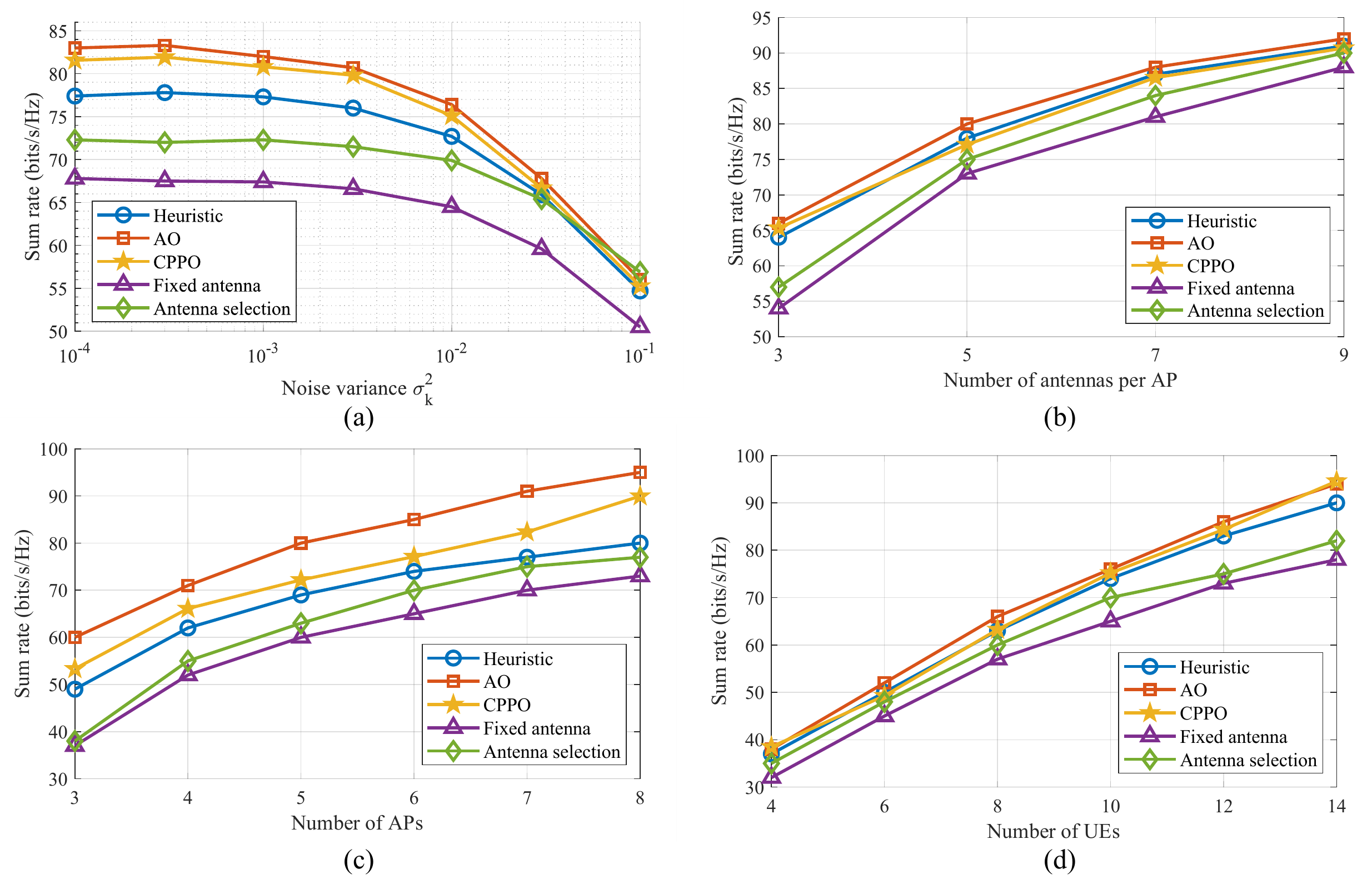}
	\caption{Performance comparison of the system sum rate against noise variance, number of antennas per AP, number of APs, and number of UEs for various benchmark algorithms.} 
\label{fig:SE_N}
\end{figure}

\begin{figure}[t]
\captionsetup{skip=-15pt}
	\centering
	\includegraphics[width=0.5\textwidth]{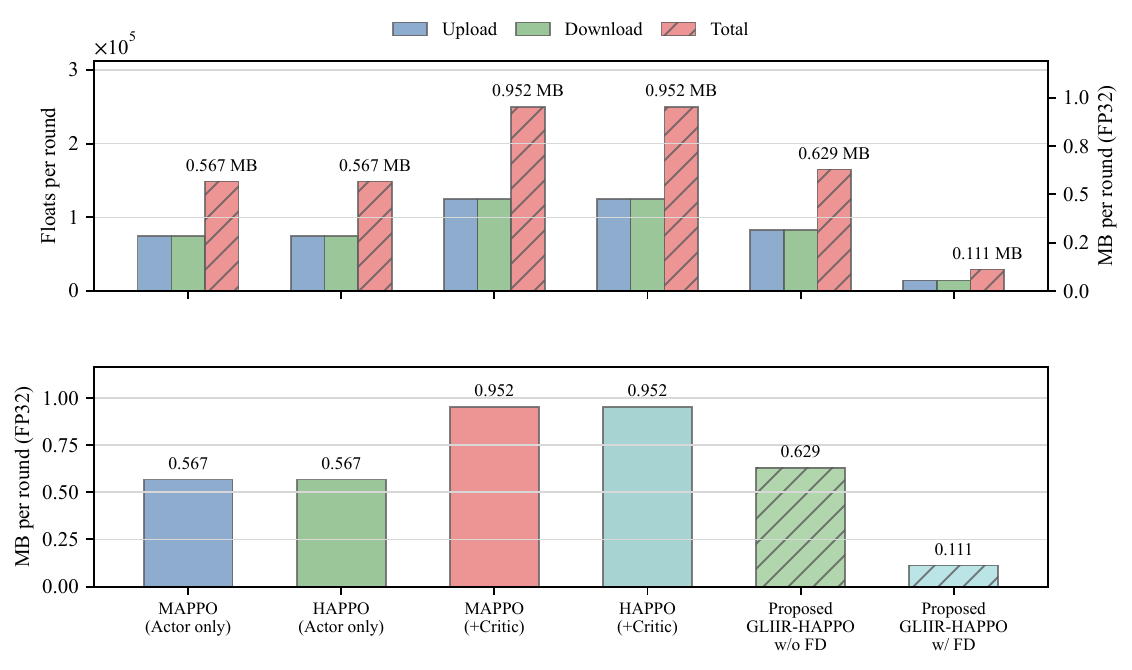}
	\caption{\scopeedit{Comparison of communication overhead per training round for different multi-agent learning schemes.}} 
\label{fig:overhead}
\end{figure}

\revise{To demonstrate the communication efficiency of the proposed architecture, Fig.~\ref{fig:heat_map} visualizes the evolution of sparse attention weights learned by the dynamic-interaction graph critic, illustrating how the attention mechanism uncovers a sparse coordination backbone during training.} In Fig.~\ref{fig:heat_map}(a)-(d), the top row shows dense attention
matrices, while the bottom row displays the corresponding top-$\kappa$
sparse graphs obtained by retaining, for each AP, only its $\kappa$
strongest outgoing neighbors. At initialization (\emph{Init}), the attention
is nearly uniform and the graph is close to fully connected, indicating that
the critic does not yet distinguish which AP-AP links are useful. As
training progresses to the \emph{Early} and \emph{Mid} stages, the
temperature $T$ and sparsity level $\kappa$ are gradually annealed, and the
attention patterns become increasingly concentrated on a few APs, while many
weak links are either heavily down-weighted or pruned. In the
\emph{Converged} stage, the interaction structure collapses to a small set
of high-weight edges forming a clear backbone that connects only a subset of
AP pairs, as reflected by the reduced edge count $|\mathcal{E}|$. This figure
highlights an important insight of our approach: the proposed dynamic
interaction critic does not merely fit value estimates, but simultaneously
performs structure learning, automatically eliminating redundant cooperation
links and uncovering a communication-efficient, interpretable graph topology
that supports the final performance.
\begin{revblock}
The sparse graphs in Fig.~4 visualize the learned attention-selected edges. In the actual training graph, these learned edges are augmented by the fixed connected backbone in \eqref{eq:connectivity_neighbor_set}. Therefore, the reduced edge count depicted in Fig. 4 represents the pruning of adaptive non-backbone links, while the retained backbone ensures that no AP becomes isolated, preserving inter-agent connectivity throughout training.
\end{revblock}

Fig.~\ref{fig:SE_N} compares the downlink sum-rate performance of the MA-aided cell-free massive MIMO network under different system configurations and optimization strategies, including the centralized CPPO baseline, heuristic PSO, AO, and the FPA and AS benchmarks. \revise{AO is a centralized per-instance solver, while CPPO is a centralized learning baseline; GLIIR-HAPPO is evaluated for distributed online execution.} As shown in Fig.~\ref{fig:SE_N}(a), all methods exhibit reduced throughput as the noise variance increases; however, CPPO maintains a clear performance margin, indicating that the learned joint power-and-position control is robust across different noise levels. Figs.~\ref{fig:SE_N}(b) and (c) show that increasing either the number of antennas per AP or the number of APs monotonically improves the sum rate for all schemes. Moreover, the gain achieved by MA is consistently larger than that of FPA and AS, confirming that the proposed framework can more effectively exploit the additional spatial degrees of freedom enabled by movable antennas than conventional designs. Finally, Fig.~\ref{fig:SE_N}(d) reveals that as the number of UEs increases and the system becomes increasingly interference-limited, the sum rate of all schemes gradually saturates; nonetheless, the MA-based methods (heuristic PSO, AO, and the centralized CPPO baseline) consistently achieve a noticeably higher sum rate than the FPA and AS benchmarks. This highlights that jointly optimizing MA positions and power allocation remains highly beneficial even under dense user deployments.


Fig.~\ref{fig:overhead} shows that adding FD to GLIIR-HAPPO reduces per-round communication overhead by more than 80\% relative to its non-FD counterpart and remains below MAPPO/HAPPO, because FD exchanges compact policy knowledge instead of raw observations or high-dimensional critic messages.
\begin{figure}[t]
\centering
\subfloat[Actuator speed]{%
\includegraphics[width=0.46\columnwidth]{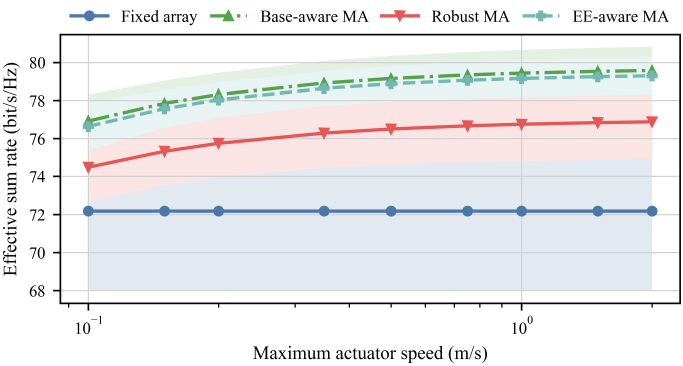}
\label{fig:practical_actuator_speed}}
\hfill
\subfloat[CSI overhead]{%
\includegraphics[width=0.46\columnwidth]{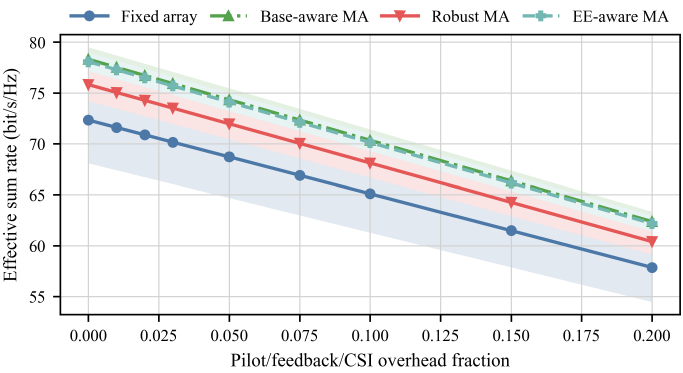}
\label{fig:practical_csi_overhead}}
\\[0.8ex]
\subfloat[Positioning error]{%
\includegraphics[width=0.46\columnwidth]{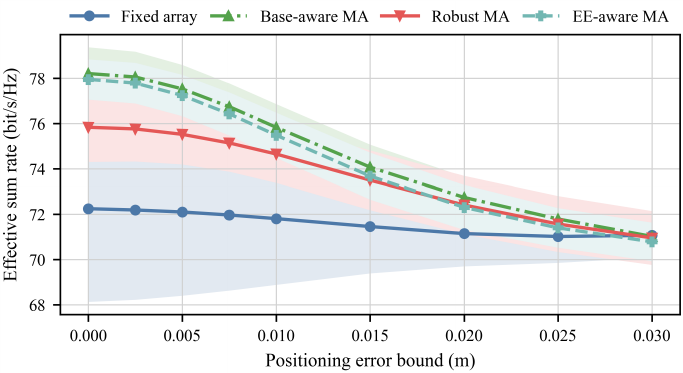}
\label{fig:practical_position_error}}
\hfill
\subfloat[EE vs CSI overhead]{%
\includegraphics[width=0.46\columnwidth]{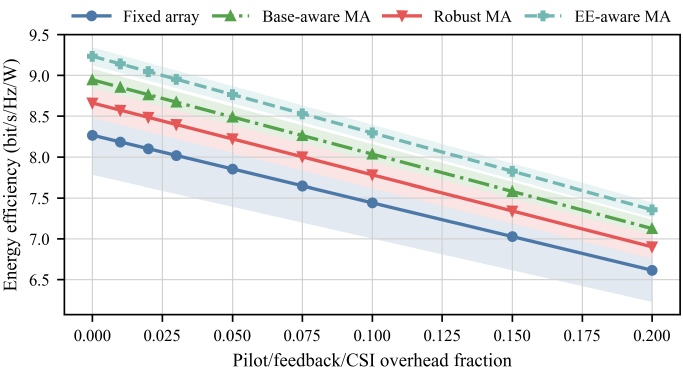}
\label{fig:practical_csi_overhead_ee}}
\caption{\textcolor{black}{Practical sensitivity versus actuator speed, CSI overhead, positioning error, and energy efficiency under CSI overhead.}}
\label{fig:practical_sensitivity_main}
\end{figure}
Fig.~\ref{fig:practical_sensitivity_main} compares four schemes: Fixed array without MA reconfiguration, Base-aware MA for nominal effective-rate optimization, Robust MA with conservative geometry under position/resolution uncertainty, and EE-aware MA for the rate-power tradeoff. Under the default setting, Base-aware MA and EE-aware MA achieve 78.20 and 77.92 bit/s/Hz, respectively, outperforming the fixed array at 72.16 bit/s/Hz. However, the effective rate decreases when actuator speed is reduced, CSI/control overhead increases, or positioning error is enlarged, as these factors reduce the effective data-transmission fraction and degrade optimized MA positioning accuracy. Fig.~\ref{fig:practical_csi_overhead_ee} further shows that energy efficiency also decreases with CSI/control overhead due to the reduced useful-transmission fraction and added control/feedback cost. Robust MA achieves a lower nominal rate but exhibits the smallest performance loss under pessimistic conditions (16.1\%), with zero distance violation, highlighting that practical MA deployment must be hardware- and robustness-aware rather than optimized solely for ideal sum-rate performance.

\scopeedit{Table~\ref{Computational_C} compares algorithmic costs. The AO power-update cost scales as $\mathcal{O}((LK)^3)$ per iteration.}
\scopeedit{For fixed actor size $P_{\text{actor}}$, aggregate GLIIR-HAPPO inference costs $\mathcal{O}(MP_{\text{actor}})$, while training retains the $\mathcal{O}(M^2d_k)$ attention cost. The reported inference complexity accounts only for policy execution and excludes the costs associated with channel acquisition, precoding, geometric feasibility processing, signaling, and physical antenna movement.}

\vspace{-1em}
\section{Conclusion} \label{conclu}
\vspace{-0.8 em}
In this paper, we proposed a novel heterogeneous MARL framework to address the joint optimization of antenna placement and transmission in MA-aided cell-free massive MIMO systems. Our approach decomposes the underlying non-convex design into role-specialized subproblems, which are handled by dedicated agents. Specifically, we handle the strict geometric constraints through a combination of variable splitting with quadratic penalties and feasibility-aware reward shaping, enabling tractable distributed learning under minimum-distance requirements. Scalable and efficient coordination is achieved via a dynamic-interaction graph critic and role-conditioned federated distillation, which together provide coordinated decision making with low communication overhead. Numerical results demonstrate that the proposed framework delivers substantial sum-rate gains over conventional FPA systems and achieves near-centralized performance while significantly reducing communication overhead. Moreover, the stability of the proposed approach is supported by trust-region analysis and two-timescale meta-gradient convergence arguments.
\scopeedit{Future work will integrate movement delay, actuation energy, positioning uncertainty, finite resolution, and CSI/control overhead into a hardware-aware objective and training reward, with model calibration and validation through over-the-air (OTA) or hardware experiments.\label{par:hardware_future_work}}

\vspace{-0.5cm}
\appendix
\vspace{-0.3cm}
\subsection{Proof of Lemma~\ref{lem:arc-removal}}
\label{proof_A}
Define $\boldsymbol{e} \triangleq (\boldsymbol{z}_{lk}-\boldsymbol{z}_{l\ell})/d$. Parametrize points on $\mathrm{Circle}_{l\ell}$ by
$\boldsymbol{z}=\boldsymbol{z}_{l\ell}+D \boldsymbol{u}$ with $\|\boldsymbol{u}\|_2=1$. Then
\begin{equation}
\|\boldsymbol{z} - \boldsymbol{z}_{lk}\|_2^2
= \| \boldsymbol{z}_{l\ell}+D\boldsymbol{u} - \boldsymbol{z}_{lk} \|_2^2
= D^2 + d^2 - 2 D d\, \boldsymbol{u}^{T} \boldsymbol{e}.
\end{equation}
Hence the points on $\mathrm{Circle}_{l\ell}$ that violate the inequality
$\|\boldsymbol{z} - \boldsymbol{z}_{lk}\|_2 \ge D$ are those for which
\begin{equation}
\vspace{-4pt}
\|\boldsymbol{z} - \boldsymbol{z}_{lk}\|_2^2 < D^2
\quad\Longleftrightarrow\quad
\boldsymbol{u}^T \boldsymbol{e} > \frac{d}{2D}.
\vspace{-2pt}
\end{equation}
For $d\ge 2D$, the inequality cannot hold and no arc is
removed. For $0<d<2D$, the set $\{\boldsymbol{u}\in\mathbb S^1 : \boldsymbol{u}^T \boldsymbol{e} > d/(2D)\}$ is a single open
arc on the unit circle, hence its image $\boldsymbol{z}=\boldsymbol{z}_{l\ell}+D\boldsymbol{u}$ is a single open arc on
$\mathrm{Circle}_{l\ell}$. For $d=0$ we have $\|\boldsymbol{z}-\boldsymbol{z}_{lk}\|_2\equiv D$ on
$\mathrm{Circle}_{l\ell}$, so nothing is removed.

The endpoints of the removed arc are exactly where equality holds, i.e.,
the solutions to the two circle $\mathrm{Circle}_{l\ell}$ and $\mathrm{Circle}_{lk}$.
Subtracting the squared equations yields
$(\boldsymbol{z} - \boldsymbol{m})^T (\boldsymbol{z}_{lk} - \boldsymbol{z}_{l\ell}) = 0$
with
$\boldsymbol{m} \triangleq \tfrac{1}{2}(\boldsymbol{z}_{l\ell}+\boldsymbol{z}_{lk})$,
so any intersection lies on the line through $m$ perpendicular to $e$. Writing
$\boldsymbol{z} = \boldsymbol{m} + \boldsymbol{s}$ with $\boldsymbol{s} \perp \boldsymbol{e}$ and imposing $\|\boldsymbol{z} - \boldsymbol{z}_{l\ell}\|_2 = D$ results
\begin{equation}
\|\boldsymbol{s}\|_2^2 + \Big(\tfrac{d}{2}\Big)^2 = D^2
\quad\Rightarrow\quad
\|\boldsymbol{s}\|_2 = \sqrt{\,D^2 - \tfrac{d^2}{4}\,}.
\end{equation}
Let $\hat {\boldsymbol{n}}_{\ell k}$ be any unit vector orthogonal to $\boldsymbol{e}$.
Then the intersection points are
\begin{equation}
\vspace{-4pt}
\boldsymbol{z} \!=\! \boldsymbol{m} \pm \sqrt{D^2 \!-\! \tfrac{d^2}{4}\,}\hat {\boldsymbol{n}}_{\ell k}
\!=\! \frac{z_{l\ell}\!+\!z_{lk}}{2} \pm \frac{1}{2}\sqrt{4D^2 - d^2} \hat{ \boldsymbol{n}}_{\ell k}.
\vspace{-4pt}
\end{equation}
When $d=2D$ the two points coalesce into the tangent point $m$; when $d>2D$ there is no
solution; and when $d=0$ we follow the manuscript's convention of treating the coincident-center
case as yielding no endpoints. Therefore $\mathcal W_{l,\ell k}$ is as stated.
\vspace{-0.4cm}
\subsection{Proof of Proposition~\ref{Permutation}} \label{proof_B}
\vspace{-0.1cm}
At time $t$, let $\boldsymbol{E}_t \in \mathbb{R}^{M \times d_e}$ be the row-stacked node embeddings $\boldsymbol{e}_i$, and define shared linear maps $\boldsymbol{W}_q \in \mathbb{R}^{d_k \times d_e}$, $\boldsymbol{W}_k \in \mathbb{R}^{d_k \times d_e}$, and $\boldsymbol{W}_v \in \mathbb{R}^{d_v \times d_e}$. Set the query, key, and value matrices as:
\begin{equation}
    \begin{cases}
    \boldsymbol{Q}_t = \boldsymbol{E}_t \boldsymbol{W}_q^T \in \mathbb{R}^{M \times d_k}, \\
    \boldsymbol{K}_t = \boldsymbol{E}_t \boldsymbol{W}_k^T \in \mathbb{R}^{M \times d_k}, \\
    \boldsymbol{V}_t = \boldsymbol{E}_t \boldsymbol{W}_v^T \in \mathbb{R}^{M \times d_v}.
    \end{cases}
    \label{eq:35_corrected}
\end{equation}
When agents are permuted by $\boldsymbol{P}$, the embeddings transform as $\boldsymbol{E}'_t = \boldsymbol{P}\boldsymbol{E}_t$. Since the weight matrices are shared and applied to the feature dimension, we have:
\begin{equation}
    (\boldsymbol{Q}'_t, \boldsymbol{K}'_t, \boldsymbol{V}'_t) = (\boldsymbol{P}\boldsymbol{Q}_t, \boldsymbol{P}\boldsymbol{K}_t, \boldsymbol{P}\boldsymbol{V}_t).
\end{equation}
The masked score matrix $\boldsymbol{C}_t = \frac{1}{\sqrt{d_k}} \boldsymbol{Q}_t \boldsymbol{K}_t^T$ transforms as:
\begin{equation}
    \begin{aligned}
    \boldsymbol{C}'_t &= \frac{1}{\sqrt{d_k}} (\boldsymbol{P}\boldsymbol{Q}_t)(\boldsymbol{P}\boldsymbol{K}_t)^T \\
    &= \boldsymbol{P} \left( \frac{1}{\sqrt{d_k}} \boldsymbol{Q}_t \boldsymbol{K}_t^T \right) \boldsymbol{P}^T = \boldsymbol{P} \boldsymbol{C}_t \boldsymbol{P}^T.
    \end{aligned}
    \label{eq:38_corrected}
\end{equation}
Both row-wise $\text{TopK}_{\kappa}$ selection and row-wise softmax are permutation equivariant operations. Specifically, for any row-permutation $\boldsymbol{P}$, $\text{softmax}_{\text{row}}(\boldsymbol{P} \boldsymbol{X} \boldsymbol{P}^T) = \boldsymbol{P} \text{softmax}_{\text{row}}(\boldsymbol{X}) \boldsymbol{P}^T$. Thus, the attention matrix transforms as $\boldsymbol{A}'_t = \boldsymbol{P} \boldsymbol{A}_t \boldsymbol{P}^T$.

The aggregated messages $\boldsymbol{M}_t = \boldsymbol{A}_t \boldsymbol{V}_t$ then transform as:
\begin{equation}
    \boldsymbol{M}'_t \!=\! \boldsymbol{A}'_t\! \boldsymbol{V}'_t \!=\! (\boldsymbol{P} \boldsymbol{A}_t \boldsymbol{P}^T)(\boldsymbol{P} \boldsymbol{V}_t) = \boldsymbol{P} \!\boldsymbol{M}_t\!,
    \label{eq:39_corrected}
\end{equation}
where we used the property $\boldsymbol{P}^T \boldsymbol{P} = \boldsymbol{I}$.
Now, consider the update layer. Let $\boldsymbol{W}_m \in \mathbb{R}^{d_u \times (d_e + d_v)}$ and $\boldsymbol{b}_m \in \mathbb{R}^{d_u}$. To ensure correct dimensionality for row-stacked inputs, the weight matrix must be applied to the right (transposed). The update equation is:
\begin{equation}
    \boldsymbol{U}_t = \sigma\left( [\boldsymbol{E}_t; \boldsymbol{M}_t] \boldsymbol{W}_m^T + \mathbf{1}\boldsymbol{b}_m^T \right) \in \mathbb{R}^{M \times d_u},
    \label{eq:37_corrected}
\end{equation}
where $\mathbf{1} \in \mathbb{R}^M$ is a column vector of ones, effectively broadcasting the bias $\boldsymbol{b}_m$ to all $M$ agents. Under permutation, the concatenation becomes $[\boldsymbol{E}'_t; \boldsymbol{M}'_t] = [\boldsymbol{P}\boldsymbol{E}_t; \boldsymbol{P}\boldsymbol{M}_t] = \boldsymbol{P}[\boldsymbol{E}_t; \boldsymbol{M}_t]$. The update transforms as follows:
\begin{align}
\vspace{-4pt}
    \boldsymbol{U}'_t &= \sigma\left( \boldsymbol{P}[\boldsymbol{E}_t; \boldsymbol{M}_t] \boldsymbol{W}_m^T + \mathbf{1}\boldsymbol{b}_m^T \right) \nonumber \\
    &= \sigma\left( \boldsymbol{P} \left( [\boldsymbol{E}_t; \boldsymbol{M}_t] \boldsymbol{W}_m^T \right) + \boldsymbol{P}\mathbf{1}\boldsymbol{b}_m^T \right)  \nonumber \\
    &= \boldsymbol{P} \sigma\left( [\boldsymbol{E}_t; \boldsymbol{M}_t] \boldsymbol{W}_m^T + \mathbf{1}\boldsymbol{b}_m^T \right) \nonumber \\
    &= \boldsymbol{P} \boldsymbol{U}_t.
    \vspace{-4pt}
\end{align}
Finally, applying the shared scalar value head $\psi$ row-wise to $\boldsymbol{U}_t$ gives the vector of per-agent state-values. Let us denote this output vector as $\boldsymbol{v}_{\text{out}} \in \mathbb{R}^M$ (where the $i$-th element corresponds to $V_i(\boldsymbol{s}_t)$). Under permutation, this vector transforms as $\boldsymbol{v}'_{\text{out}} = \boldsymbol{P} \boldsymbol{v}_{\text{out}}$. The global team value is the sum of individual values:
\vspace{-0.3cm}
\begin{equation}
V'_{\text{ex}} = \mathbf{1}^T \boldsymbol{v}'_{\text{out}} = \mathbf{1}^T (\boldsymbol{P} \boldsymbol{v}_{\text{out}}) = (\boldsymbol{P}^T \mathbf{1})^T \boldsymbol{v}_{\text{out}} = \mathbf{1}^T \boldsymbol{v}_{\text{out}} = V_{\text{ex}}.
\end{equation}
This completes the proof.
\vspace{-0.45cm}
\subsection{Proof of Lemma~\ref{lem:telescoping}}\label{proof_C}
\vspace{-0.2cm}
Define the partially marginalized functions
\begin{equation}
Q^{(i)}(\boldsymbol{s},\boldsymbol{a}_{1:i}):= \mathbb{E}_{\boldsymbol{a}_{i+1:M}\sim\pi}\!\big[\,Q_{\mathrm{ex}}(\boldsymbol{s},\boldsymbol{a}_{1:M})\,\big], 
i=0,1,\dots,M,
\end{equation}
with the conventions $Q^{(0)}(\boldsymbol{s})=\mathbb{E}_{\boldsymbol{u}\sim\pi}[Q_{\mathrm{ex}}(\boldsymbol{s},\boldsymbol{u})]=V_{\mathrm{ex}}(\boldsymbol{s})$ and $Q^{(M)}(\boldsymbol{s},\boldsymbol{a}_{1:M})=Q_{\mathrm{ex}}(\boldsymbol{s},\boldsymbol{u})$. 
Therefore
\begin{equation}
\begin{aligned}
\vspace{-4pt}
A_{\mathrm{ex}}(\boldsymbol{s},\boldsymbol{u})
&= Q^{(M)}(\boldsymbol{s},\boldsymbol{a}_{1:M}) - Q^{(0)}(\boldsymbol{s}) \\
&= \sum_{i=1}^{M} \bigl(Q^{(i)}(\boldsymbol{s},\boldsymbol{a}_{1:i})
      - Q^{(i-1)}(\boldsymbol{s},\boldsymbol{a}_{1:i-1})\bigr)
\vspace{-4pt}
\end{aligned}
\end{equation}
by a telescoping sum. It remains to identify each increment with the $i$-th prefix-marginalized advantage.

By construction, fixing the first $i$ actions at time $t=0$ and marginalizing the rest is exactly the definition of the prefix-marginalized $Q$ with subset $\EuScript{S}=\EuScript{P}_{i} \cup\{i\}$, while fixing only the first $i-1$ actions corresponds to subset $\EuScript{S}=\EuScript{P}_{i}$. Hence,
\begin{equation}
Q^{(i)}\left(\boldsymbol{s}, \boldsymbol{a}_{1: i}\right) = Q_{\mathrm{ex}, \EuScript{P}_{i} \cup\{i\}}^{i}\left(\boldsymbol{s}, \boldsymbol{a}_{\EuScript{P}_{i}}, \boldsymbol{a}_{i}\right),
\end{equation}
and
\begin{equation}
Q^{(i-1)}\left(\boldsymbol{s}, \boldsymbol{a}_{1: i-1}\right) = Q_{\mathrm{ex}, \EuScript{P}_{i}}^{i}\left(\boldsymbol{s}, \boldsymbol{a}_{\EuScript{P}_{i}}\right).
\end{equation}
Taking the difference gives
\begin{equation}
Q^{(i)}\left(\boldsymbol{s}, \boldsymbol{a}_{1: i}\right)-Q^{(i-1)}\left(\boldsymbol{s}, \boldsymbol{a}_{1: i-1}\right)=A_{\mathrm{ex}}^{i}\left(\boldsymbol{s}, \boldsymbol{a}_{\EuScript{P}_{i}}, \boldsymbol{a}_{i}\right),
\end{equation}
which proves the claim after summing over $i=1, \ldots, M$. The only properties used are linearity property of conditional expectation and the boundedness needed to interchange expectations, as codified in the regularity assumptions.
\vspace{-0.45cm}
\subsection{Proof of Theorem~\ref{KL_decomposition}}\label{proof_D}
Express \(p(\boldsymbol{u}\mid \boldsymbol{s})=\prod_{i=1}^M p_i(\boldsymbol{a}_i\mid \boldsymbol{o}_i)\) and \(q(\boldsymbol{u}\mid \boldsymbol{s})=\prod_{i=1}^M q_i(\boldsymbol{a}_i\mid \boldsymbol{o}_i)\),
where \(p_i=\pi_{\theta_i}\) and \(q_i=\pi_{\theta'_i}\).
By the definition of the KL divergence:
\begin{equation}
\begin{aligned}
D_{\mathrm{KL}}(p\|q)
&=\mathbb{E}_{\boldsymbol{u}\sim p}\!\left[\log\frac{p(\boldsymbol{u}\mid \boldsymbol{s})}{q(\boldsymbol{u}\mid \boldsymbol{s})}\right]\\
&=\mathbb{E}_{\boldsymbol{u}\sim p}\!\left[\log\frac{\prod_{i=1}^M p_i(\boldsymbol{a}_i\mid \boldsymbol{o}_i)}{\prod_{i=1}^M q_i(\boldsymbol{a}_i\mid \boldsymbol{o}_i)}\right].
\end{aligned}
\end{equation}
Using \(\log\prod_{i} x_i=\sum_{i}\log x_i\),
\begin{equation}
\vspace{-4pt}
D_{\mathrm{KL}}(p\|q)
=\sum_{i=1}^M \mathbb{E}_{\boldsymbol{u}\sim p}\!\left[\log\frac{p_i(\boldsymbol{a}_i\mid \boldsymbol{o}_i)}{q_i(\boldsymbol{a}_i\mid \boldsymbol{o}_i)}\right].
\vspace{-4pt}
\end{equation}
Under the product measure \(p(\boldsymbol{u}\mid \boldsymbol{s})=\prod_i p_i(\boldsymbol{a}_i\mid \boldsymbol{o}_i)\), the \(i\)-th summand depends only on \(\boldsymbol{a}_i\).
Hence the joint expectation reduces to the marginal expectation:
\begin{equation}
\begin{aligned}
\mathbb{E}_{\boldsymbol{u}\sim p}\!\left[\log\frac{p_i(\boldsymbol{a}_i\mid \boldsymbol{o}_i)}{q_i(\boldsymbol{a}_i\mid \boldsymbol{o}_i)}\right]
&= \mathbb{E}_{\boldsymbol{a}_i\sim p_i}\!\left[\log\frac{p_i(\boldsymbol{a}_i\mid \boldsymbol{o}_i)}{q_i(\boldsymbol{a}_i\mid \boldsymbol{o}_i)}\right] \\
&= D_{\mathrm{KL}}\!\big(p_i(\cdot\mid \boldsymbol{o}_i)\,\big\|\,q_i(\cdot\mid \boldsymbol{o}_i)\big).
\end{aligned}
\end{equation}
Summing over \(i\) establishes \eqref{eq:KL-sum}. Taking expectation in \(\boldsymbol{s}\sim d\) on both sides and using linearity of
expectation gives \eqref{eq:avg-KL}.
\vspace{-0.3cm}
\bibliographystyle{IEEEtran}
\bibliography{IEEEabrv,Ref}
\end{document}